\let\oldmarginpar\marginpar
\renewcommand\marginpar[1]{\-\oldmarginpar[\raggedleft\footnotesize #1]%
{\raggedright\footnotesize #1}}
\newcommand{\fullversion}[1]{#1}
\newcommand{\submissionversion}[1]{}
\def\compactify{\itemsep=0pt \topsep=0pt \partopsep=0pt \parsep=0pt}
\let\latexusecounter=\usecounter
\def\compactifytwo{\itemsep=-1pt \topsep=-1pt \partopsep=-2pt \parsep=-1pt \labelwidth=-2pt \leftmargin=-10pt}
\newtheorem{theorem}{Theorem}[section]
\newtheorem{proposition}{Proposition}[section]
\newtheorem{lemma}[theorem]{Lemma}      
\newcommand{\KL}{\ensuremath \mathsf{KL}}
\newcommand{\NP}{\textsf{NP}}
\newcommand{\set}[1]{\left\{ #1 \right\}}
\newcommand{\setof}[2]{\{{#1}\mid{#2}\}}        
\newcommand{\norm}[1]{\left\| #1 \right\|}
\newcommand{\size}[1]{\left| #1 \right|}
\newcommand{\eat}[1]{}
\begin{document}

\title{Probabilistic Management of OCR Data using an RDBMS}

\author{
Arun Kumar \\
University of Wisconsin-Madison\\
arun@cs.wisc.edu 
\and
Christopher R\'{e}\\
University of Wisconsin-Madison\\
chrisre@cs.wisc.edu
}

\newcounter{example}[section]
\newenvironment{example}{\refstepcounter{example}\noindent\textbf{Example \arabic{example}}.}{$\qed$}

\newcommand{\name}{\textsc{Staccato}\xspace}

\maketitle

\begin{abstract}
The digitization of scanned forms and documents is changing the data
sources that enterprises manage. To integrate these new data sources
with enterprise data, the current state-of-the-art approach is to
convert the images to ASCII text using optical character recognition
(OCR) software and then to store the resulting ASCII text in a
relational database. The OCR problem is challenging, and so the output
of OCR often contains errors. In turn, queries on the output of OCR
may fail to retrieve relevant answers. State-of-the-art OCR programs,
e.g., the OCR powering Google Books, use a probabilistic model that
captures many alternatives during the OCR process. Only when the
results of OCR are stored in the database, do these approaches discard
the uncertainty. In this work, we propose to retain the probabilistic
models produced by OCR process in a relational database management
system. A key technical challenge is that the probabilistic data
produced by OCR software is very large (a single book blows up to 2GB
from 400kB as ASCII). As a result, a baseline solution that integrates
these models with an RDBMS is over 1000x slower versus standard text
processing for single table select-project queries. 
However, many applications may have quality-performance
needs that are in between these two extremes of ASCII and the
complete model output by the OCR software. Thus, we propose a novel
approximation scheme called \name that allows a user to trade recall
for query performance. Additionally, we provide a formal analysis of
our scheme's properties, and describe how we integrate our scheme with
standard-RDBMS text indexing.
\end{abstract}

\section{Introduction}

The mass digitization of books, printed documents, and printed forms
is changing the types of data that companies and academics manage. For
example, Google Books and their academic partner, the Hathi Trust,
have the goal of digitizing all of the world's books to allow scholars
to search human knowledge from the pre-Web era. The hope of this
effort is that digital access to this data will enable scholars to
rapidly mine these vast stores of text for new
discoveries.\footnote{Many repositories of {\it Digging into Data
    Challenge} (a large joint effort to bring together social
  scientists with data analysis) are OCR-based
  \url{http://www.diggingintodata.org}.}  The potential users of this
new content are not limited to academics. The market for {\em
  enterprise document capture} (scanning of forms) is already in the
multibillion dollar range~\cite{cmswire}. In many of the applications,
the translated data is related to enterprise business data, and so
after converting to plain text, the data are stored in an
RDBMS~\cite{exper:site}.

Translating an image of text (e.g., a jpeg) to ASCII is difficult for
machines to do automatically. To cope with the huge number of
variations in scanned documents, e.g., in spacing of the glyphs and
font faces, state-of-the-art approaches for optical character
recognition (OCR) use probabilistic techniques. For example, the
OCRopus tool from Google Books represents the output of the OCR
process as a stochastic automaton called a {\em finite-state
  transducer} (FST) that defines a probability distribution over all
possible strings that could be represented in the
image.\footnote{\url{http://code.google.com/p/ocropus/}.} An example
image and its resulting (simplified) transducer are shown in
Figure~\ref{fig:examplefst}. Each labeled path through the transducer
corresponds to a potential string (one multiplies the weights along
the path to get the probability of the string). Only to produce the
final plain text do current OCR approaches remove the
uncertainty. Traditionally, they choose to retain only the single most
likely string produced by the FST (called a {\em maximum a priori
  estimate} or MAP~\cite{murphy}).

\begin{figure*}[hbtp]
\centering
\includegraphics[width=6.8in]{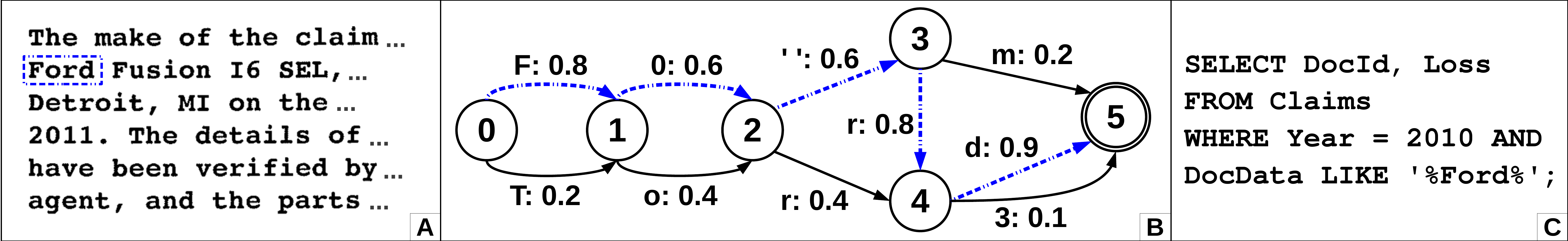}
\caption{(A) An image of text. (B) A portion of a simple FST resulting
  from the OCR of the highlighted part of (A). The numbers on the arcs
  are conditional probabilities of transitioning from one state to
  another.  An emitted string corresponds to a path from states 0 to
  5. The string {\it `F0 rd'} (highlighted path) has the highest
  probability, $0.8*0.6*0.6*0.8*0.9 \approx 0.21$. (C) An SQL query to
  retrieve loss information that contains {\it `Ford'}. Using the MAP
  approach, no claim is found. Using \name, a claim is found
  (with probability $0.12$). }
\label{fig:examplefst}
\vspace{-10pt}
\end{figure*}

As Google Books demonstrates, the MAP works well for browsing
applications. In such applications, one is sensitive to precision
(i.e., are the answers I see correct), but one is insensitive to
recall (i.e., what fraction of all of the answers in my corpus are
returned). But this is not true of all applications: an English
professor looking for the earliest dates that a word occurs in a
corpus is sensitive to recall~\cite{arun1}. As is an insurance company
that wants all insurance claims that were filled in 2010 that
mentioned a {\it `Ford'}. This latter query is expressed in SQL in
Figure~\ref{fig:examplefst}(C). 
In this work, we focus on such single table select-project queries, whose
outputs are standard probabilistic RDBMS tables.
Using the MAP approach may miss
valuable answers. In the example in Figure~\ref{fig:examplefst}, the
most likely string does not contain {\it `Ford'}, and so we
(erroneously) miss this claim. However, the string {\it `Ford'} does
appear (albeit with a lower probability). Empirically, we show that the
recall for simple queries on real-world OCR can be as low as $0.3$ --
and so we may throw away almost $70\%$ of our data if we follow the
MAP approach.

To remedy this recall problem, our baseline approach is to store and
handle the FSTs as binary large objects inside the RDBMS. As with a
probabilistic relational database, the user can then pose questions as
if the data are deterministic and it is the job of the system to
compute the confidence in its answer. By combining existing
open-source tools for transducer
composition \footnote{OpenFST. \url{http://www.openfst.org/}} with an
RDBMS, we can then answer queries like that in
Figure~\ref{fig:examplefst}(C). This approach achieves a high quality
(empirically, the recall we measured is very close to $1.0$, with up
to $0.9$ precision). Additionally, the enterprise users can ask their
existing queries directly on top of the RDBMS data (the query in
Figure~\ref{fig:examplefst}(C) remains unchanged). The downside is
that query processing is much slower (up to 1000x slower). While the
query processing time for transducers is linear in the data size, the
transducers themselves are huge, e.g., 
a single 200-page book blows up from 400 kB as text to over 2 GB
when represented by transducers after OCR.
This
motivates our central question: {\it ``Can we devise an approximation
  scheme that is somewhere in between these two extremes of recall and
  performance?''}

State-of-the-art OCR tools segment each of the images corresponding to
pages in a document into lines using special purpose line-breaking
tools. Breaking a single line further into individual words is more
difficult (spacing is very difficult to accurately detect). With this
in mind, a natural idea to improve the recall of the MAP approach is
to retain not only the highest probability string for each line, but
instead to retain the $k$ highest probability strings that appear in
each line (called $k$-MAP~\cite{approx-ai, map-relax}). Indeed, this
technique keeps more information around at a linear cost (in $k$) in
space and processing time. However, we show that even storing hundreds
of paths makes an insignificant jump in the recall of queries. 

To combat this problem, we propose a novel approximation scheme called
\name, which is our main technical contribution. The main idea is to
apply $k$-MAP not to the whole line, but to first break the line into
smaller chunks which are themselves transducers and apply $k$-MAP to
each transducer individually. This allows us to store exponentially
more alternatives than $k$-MAP (exponential in the number of chunks),
while using roughly a linear amount more space than the MAP
approach. If there is only a single chunk, then \name's output is
equivalent to $k$-MAP. If essentially every possible character is a
chunk, then we retain the full FST. Experimentally, we demonstrate
that the \name approach \textit{gracefully trades off between performance 
and recall}. For example, when looking for mentions of laws on a data set
that contains scanned acts of the US congress, the MAP approach
achieves a recall of $0.28$ executing in about $1$ second, the full
FST approach achieves perfect recall but takes over $2$ minutes. An
intermediate representation from \name takes around $10$ seconds and
achieves $0.76$ recall. Of course, there is a fundamental trade off
between precision and recall. On the same query as above, the MAP has
precision $1.0$, and the full FST has precision $0.25$, while \name
achieves $0.73$. In general, \name's precision falls in between the
MAP and the full FST.

To understand \name's approximation more deeply, we conduct a formal
analysis, which is our second technical contribution. When
constructing \name's approximation, we ensure two properties (1) each
chunk forms a transducer (as opposed to a more general structure), and
(2) that the model retains the {\em unique path property}, i.e., that
every string corresponds to a unique path. While both of these
properties are satisfied by the transducers produced by OCRopus,
neither property is necessary to have a well-defined approximation
scheme. Moreover, enforcing these two properties increases the
complexity of our algorithm and may preclude some compact
approximations. Thus, it is natural to wonder if we can relax these
two properties. While we cannot prove that these two conditions are
necessary, we show that without these two properties, basic operations
become intractable. Without the unique path property, prior work has
shown that determining (even approximating) the $k$-MAP is intractable
for a fixed $k$~\cite{re-transducers}. Even with the unique path
property and a fixed set of chunks, we show that essentially the
simplest violation of property $(1)$ makes it intractable to construct
an approximation even for $k = 2$ (Theorem~\ref{thm:hard}). On the
positive side, for any fixed partition, \name retains a set of strings
that achieves the highest total probability among approximations that
satisfy the above restrictions.

Finally, we describe how to use standard text-indexing techniques to
improve query performance. Directly applying an inverted index to
transducer data is essentially doomed to failure: the sheer number of
terms one would have to index grows exponentially with the length of
the document, e.g., an FST for a single line may represent over
$10^{100}$ terms. To combat this, we allow the user to specify a
dictionary of terms. We then construct an index of those terms
specified in the dictionary. This allows us to process keyword and
some regular expressions using standard
techniques~\cite{classical1,classical2}.

\paragraph*{Outline} 
In Section~\ref{sec:prelim}, we illustrate our current prototype system
to manage OCR data using an RDBMS with an example, and we present a
brief background on the use of transducers in OCR. 
In Section~\ref{sec:scans}, we briefly describe the baseline solutions, and then 
discuss the main novel technical contributions of this work, viz., the \name 
approximation scheme and our formal analysis of its properties.
In Section~\ref{sec:index}, we describe our approach for indexing OCR transducer 
data, which is another technical contribution of this work. In Section 
\ref{sec:experiments}, we empirically validate that our approach is able
to trade off recall for query-runtime performance on several
real-world OCR data sets. We validate that our approximation methods
can be efficiently implemented, and that our indexing technique
provides the expected speedups. In
Section~\ref{sec:related_work}, we discuss related work.
\eat{In Section~\ref{sec:scans}, we describe the \name approximation and our
formal analysis of its properties. In Section~\ref{sec:index}, we
describe our approach for indexing FSTs.} 

\section{Preliminaries}
\label{sec:prelim}

The key functionality that \name provides is to enable users to query
OCR data inside an RDBMS as if it were regular text. Specifically, we
want to enable the LIKE predicate of SQL on OCR data. We describe
\name through an example, followed by a more detailed explanation of
its semantics and the formal background. 

\subsection{Using Staccato with OCR}

Consider an insurance company that stores loss data with scanned
report forms in a table with the following schema:
\[\textbf{Claims} (DocID,Year, Loss, DocData) \]

\noindent
A document tuple contains an id, the year the form was filed (Year),
the amount of the loss (Loss) and the contents of the report
(DocData). A simple query that an insurance company may want to ask
over the table - {\it ``Get loss amounts of all claims in 2010 where
  the report mentions `Ford' ''}. Were DocData ASCII text, this could
be expressed as an SQL query as follows:

\begin{alltt}
SELECT DocID, Loss FROM Claims 
WHERE Year = 2010 AND DocData LIKE `%Ford\%'; 
\end{alltt}
        
If DocData is standard text, the semantics of this query is
straightforward: we examine each document filed in 2010, and
check if it contains the string {\it `Ford'}. The challenge is that
instead of a single document, in OCR applications DocData represents
many different documents (each document is weighted by
probability). In \name, we can express this as an SQL query that uses a
simple pattern in the \texttt{LIKE} predicate (also in
Figure~\ref{fig:examplefst}(C)). The twist is that the underlying
processing must take into account the probabilities from the OCR
model.

Formally, \name allows a larger class of queries in the LIKE predicate
that can be expressed as deterministic finite automata (DFAs). \name
translates the syntax above in to a DFA using standard
techniques~\cite{hopcroft}. As with probabilistic
databases~\cite{DBLP:conf/vldb/DalviS04,trio, amol-query,
  olteanu-maybms}
  , \name computes the probability that
the document matches the regular expression. \name does this using
algorithms from prior work~\cite{re-transducers, re-query}. The result
is a probabilistic relation; after this, we can apply probabilistic
relational database processing
techniques~\cite{DBLP:conf/vldb/DalviS04,DBLP:conf/icde/OlteanuHK09,
  DBLP:conf/icde/SarmaTW08}. In this work, we consider only single
table select-project queries (joins are handled using the above
mentioned techniques).

A critical challenge that \name must address is given a DFA find those
documents that are relevant to the query expressed by the DFA. For a
fixed query, the existing algorithms are roughly linear in the size of
data that they must process. To improve the runtime of these
algorithms, one strategy (that we take) is to reduce the size of the
data that must be processed using approximations. The primary
contribution of \name is the set of mechanisms that we describe in
Section~\ref{sec:scans} to achieve the trade off of quality and
performance by approximating the data. We formally study the
properties of our algorithms and describe simple mechanisms to allow
the user to set these parameters in Sec.~\ref{sec:analysis}.

One way to evaluate the query above in the deterministic setting is to
scan the string in each report and check for a match. A better
strategy may be to use an inverted index to fetch only those documents
that contain {\it `Ford'}. In general, this strategy is possible for
{\em anchored} regular expressions~\cite{regexp}, which are regular expressions 
that begin or end with words in the language, e.g. `no.(2$\vert$3)' is 
anchored while `(no$\vert$num).(2$\vert$8)' is not.
\name supports a
similar optimization using standard text-indexing techniques. There
is, however, one twist: At one extreme, any term may have some small
probability of occurring at every location of the document -- which
renders the index ineffective. Nevertheless, we show that 
\name is able to provide efficient indexing for anchored regular 
expressions using a dictionary-based approach.

\subsection{Background: Stochastic Finite Automata}

We formally describe \name's data model that is based on Stochastic
Finite Automata (SFA). This model is essentially
identical to the model output by Google's OCRopus~\cite{ocropus,
  mohri-transducers}.\footnote{Our prototype uses the same weighted
  finite state transducer (FST) model that is used by OpenFST and
  OCRopus. We simplify FST to SFAs here only slightly for
  presentation. See the full version for more
  details~\cite{full-paper} } An SFA is a finite state machine that
emits strings (e.g., the ASCII conversion of an OCR image). The model
is stochastic, which captures the uncertainty in translating the glyphs
and spaces to ASCII characters.

At a high level, an SFA over an alphabet $\Sigma$ represents a discrete
probability distribution $P$ over strings in $\Sigma^*$, i.e., 
\[ P : \Sigma^{*} \to [0,1] \text{ such that } \sum_{x \in \Sigma^{*}} P(x) = 1 \]
The SFA represents the (finitely many) strings with non-zero
probability using an automaton-like structure that we first describe
using an example:\\

\begin{example}
  Figure~\ref{fig:examplefst} shows an image of text and a
  simplified SFA created by OCRopus from that data. The SFA is a
  directed acyclic labeled graph. The graphical structure (i.e., the
  branching) in the SFA is used by the OCR tool to capture
  correlations between the emitted letters. Each source-to-sink path
  (i.e., a path from node 0 to node 5) corresponds to a string with
  non-zero probability. For example, the string {\it `Ford'} is one
  possible path that uses the following sequence of nodes $0 \to 1 \to
  2 \to 4 \to 5$. The probability of this string can be found by
  multiplying the edge weights corresponding to the path:
  $0.8*0.4*0.4*0.9 \approx 0.12$.
\end{example}
\newline

Formally, we fix an alphabet $\Sigma$ (in \name, this is the set of
ASCII characters). An SFA $S$ over $\Sigma$ is a tuple $S=(V,E,s,
f,\delta)$ where $V$ is a set of nodes, $E \subseteq V \times V$ is a
set of edges such that $(V,E)$ is a directed acyclic graph, and $s$
(resp. $f$) is a distinguished start (resp. final) node. The function
$\delta$ is a stochastic transition function, i.e.,
\[ \delta : E \times \Sigma \to [0,1] \text{ s.t.} \sum_{\stackrel{y: (x,y) \in E}{\sigma \in \Sigma}} \delta((x,y),\sigma) = 1 \quad  \forall x \in V \]
In essence, $\delta( e, \sigma)$, where $e = (x,y)$, is the conditional
probability of transitioning from $x \to y$ and emitting $\sigma$.

An SFA defines a probability distribution via its labeled paths.  A
labeled path from $s$ to $f$ is denoted by $p = (e_1,\sigma_1), \dots,
(e_{N},\sigma_{N})$, where $e_i \in E$ and $\sigma_i \in \Sigma$,
corresponding to the string $\sigma_1...\sigma_n$, with its
probability:
\footnote{Many (including OpenFST) tools use a formalization with
  log-odds instead of probabilities. It has some intuitive property for
  graph concepts, e.g., the shortest path corresponds to the most
  likely string.}
\[ \Pr_S[p] = \prod_{i=1}^{|p|} \delta( e_i ,\sigma_i)  \]

SFAs in OCR satisfy an important property that we call the
\textit{unique paths property} that says that any string produced by
the SFA with non-zero probability is generated by a unique labeled
path through the SFA. We denote by $\mathrm{UP}$ the function that
takes a string to its unique labeled path. This property guarantees
tractability of many important computations over SFAs including
finding the highest probability string produced by the
SFA~\cite{re-transducers}.

Unlike the example given here, the SFAs produced by Google's OCRopus are much
larger: they contain a weighted arc for every ASCII character. And so,
the SFA for a single line can require as much as 600 kB to store.

Queries in \name are formalized in the standard way for probabilistic
databases. In this paper, we consider LIKE predicates that contain
Boolean queries expressed as DFAs (\name handles non-Boolean queries
using algorithms in Kimmelfeld and
R\'e~\cite{re-transducers}). Fix an alphabet $\Sigma$ (the ASCII
characters). Let $q : \Sigma^{*} \to \set{0,1}$ be expressed as DFA
and $x$ be any string. 
We have $q(x)=1$ when $x$ satisfies the query, i.e., it's accepted by the DFA.
We compute the probability that $q$ is true;
this quantity is denoted $\Pr[q]$ and is defined by $\Pr[q] = \sum_{x
  \in \Sigma^{*}} q(x) \Pr(x)$ (i.e., simply sum over all possible
strings where $q$ is true). There is a straightforward algorithm based
on matrix multiplication to process these queries that is linear in the
size of the data and cubic in the number of states of the DFA~\cite{re-query}.

\section{Managing SFAs in an RDBMS}
\label{sec:scans}

We start by outlining two baseline approaches that represent the two
extremes of query performance and recall. Then, we describe the novel
approximation scheme of \name, which enables us to trade performance
for recall.

\paragraph*{Baseline Approaches} 
We study two baseline approaches: $k$-MAP and the FullSFA approach.
Fix some $k \geq 1$. In the $k$-MAP approach we store the $k$ highest
probability strings (simply, top $k$ strings) generated by each SFA in
our databases. We store one tuple per string along with the associated
probability. Query processing is straightforward: we process each
string using standard text-processing techniques, and then sum the
probability of each string (since each string is a disjoint
probabilistic event). In the FullSFA approach, we store the entire SFA
as a BLOB inside the RDBMS. To answer a query, we retrieve the BLOB,
deserialize it, and then use an open source C++ automata composition
library to answer the query \cite{mohri-openfst, mohri-composition}
and compute all probabilities. Table \ref{tab:complexities} summarizes
the time and space costs for a simple chain SFA (no branching). This
table gives an engineer's intuition about the time and space
complexity of the baseline approaches. The factor 16 accounts for the
metadata -- tuple ID, location in SFA, and probability value (the
schema is described in the full version \cite{full-paper}).  We also
include our proposed approach, \name that depends on a parameter $m$
(the number of chunks) that we describe below. From the table, we can
read that query processing time for 
\name
is essentially linear in $m$. Let
$l$ be the length of the document, since $m \in [1,l]$ query
processing time in \name interpolates linearly from the $k$-MAP approach 
to the FullSFA approach.

\begin{table}[hbtp]
\centering
\begin{tabular}{|l|l|l||l|}
\hline
& $k$-MAP & FullSFA & \name \\
\hline
\hline
\textsc{Query}&  $l q k$ & $l q |\Sigma| + q^3(l -1)$ & $l q k + q^3(m - 1)$\\
\hline
\textsc{Space}& $l k + 16 k$ & $l|\Sigma|+ 16l |\Sigma|$ & $l k + 16mk$\\
\hline
\end{tabular}\\
\begin{tabular}{ccp{7cm}}
&\\
$l$ & : & length of the SFA's strings\\
$q$ & : & \# states in the query DFA\\
$k$ & : & \# paths parameter in $k$-MAP, \name \\
$m$ & : & \# chunks in \name ($1 \leq m \leq l$)\\
\end{tabular}
\caption{Space costs and query processing times for a simple
 chain SFA. The space indicates the number of bytes of storage required.}
\label{tab:complexities}
\end{table}

\eat{
\subsection{Partitioning to Meet Memory Constraints}

The SFA of the transcription of a whole document or an entire
conversation may be very large (a 200 page book is over 2 GB) and so
the resulting SFA may not fit in available. In OCR, there are often
natural conceptual places to break the SFA, e.g., in a book, we can
break at chapters, pages, or lines. Breaking at these logical break
points allows the user to annotate individual objects meaningfully,
e.g., querying for line numbers where particular strings occur. In
other domains, like speech or sensor processing such logical breaks
may not be easy for users to specify. And so, we would like an
algorithm that given an SFA breaks an SFAs (called chunks) that are
suitable for storage.

This gives rise to a natural optimization problem. Given an SFA $S$
and a size constraint $B$ partition $S$ into as few SFAs
$S_1,\dots,S_m$ such that $S = S_1 \cup S_2 \dots \cup S_m$ and such
that $|S_i| \leq B$ (here size is the number of nodes in an SFA, which
is a surrogate for the actual byte storage of an SFA). Unfortunately,
as we show in the full version, the above problem is intractable
(\NP-hard in the size of $S$) even for $B = 3$. In spite of the
hardness of the general problem, we have found that the following
simple heuristic suffices on all of our current data sets: split the
SFA at nodes whose removal splits the SFA in two (or more) components
(called {\em cut vertexes}~\cite{tarjan}). Nevertheless, we show in
the appendix that without using these simple techniques query
processing may become untenably expensive as the processing may
exhaust memory and cause thrashing.
}

\subsection{Approximating an SFA with Chunks}
\label{sec:approximation}
\eat{The SFA for a single 200-page book may take over 2 GB to store.}
As mentioned before, the SFAs in OCR are much larger than our example, 
e.g. one OCR line from a scanned book yielded an SFA of size 600 kB.
In turn, the 200-page book blows up to over 2 GB when represented by SFAs.
Thus,
to answer a query that spans many books in the FullSFA approach, we
must read a huge amount of data. This can be a major bottleneck in
query processing. To combat this we propose to approximate 
an SFA
with a collection of smaller-sized SFAs (that we call {\em
  chunks}). Our goal is to create an approximation that allows us to
gracefully tradeoff from the fast-but-low-recall 
MAP approach to
the slow-but-high-recall FullSFA approach.
 
\eat{
OCR tools, like OCRopus, do partition the input into logical units
(typically lines). Thus, a natural first approximation to an SFA is to 
simply store the top-$k$ paths in each of the resulting per-line SFAs.
}
Recall that the $k$-MAP approach is a natural first approximation,
wherein we simply store the top-$k$ paths in each of the per-line SFAs.
This approach can increase the recall at a linear cost in $k$. However, 
as we demonstrate experimentally, simply increasing $k$ is insufficient
to tradeoff between the two extremes. That is, even for huge values
of $k$ we do not achieve full recall.

\begin{figure}
\centering
\includegraphics[width=3.3in]{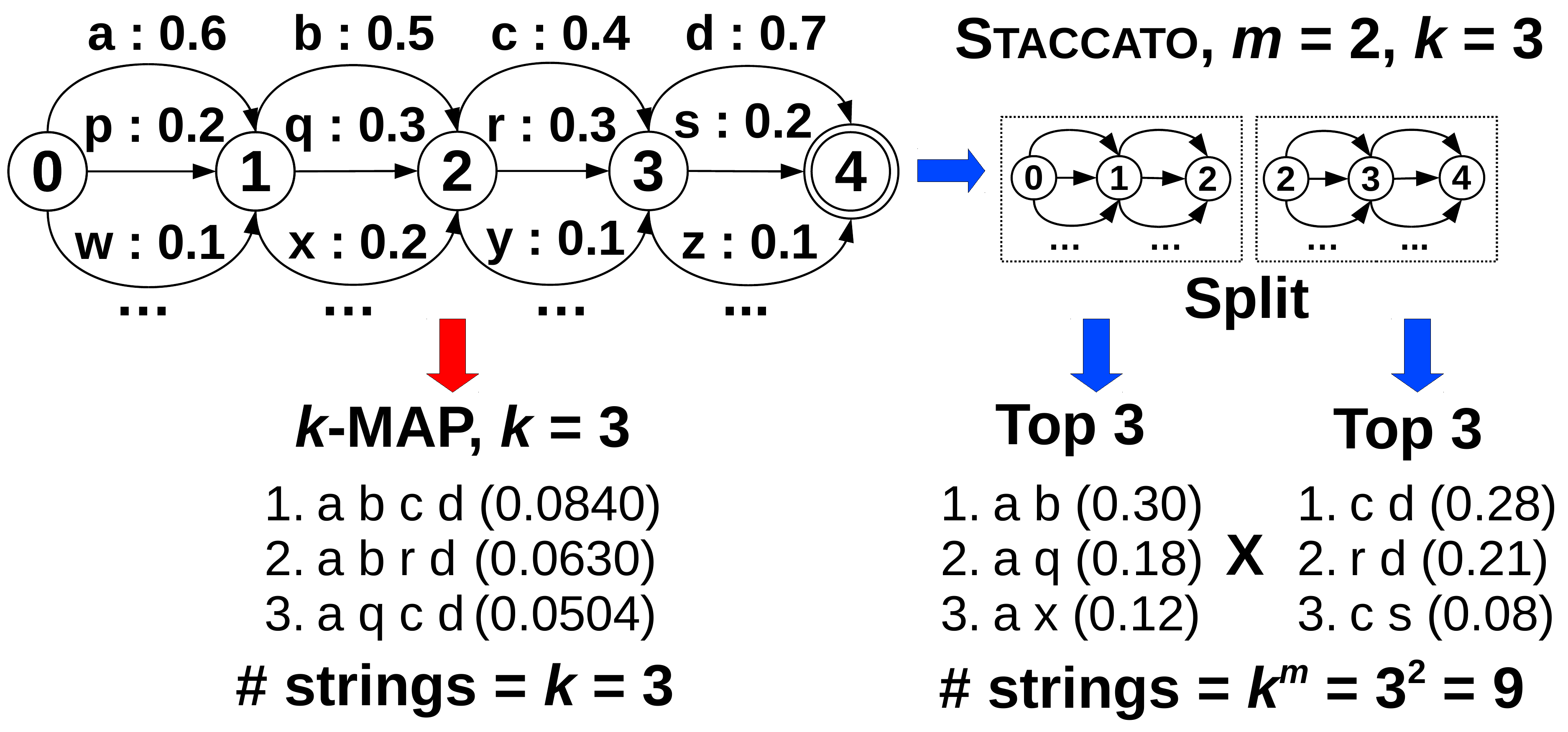}
\caption{A depiction of conventional Top-$k$ versus \name's approximation.}
\label{fig:kmapvsstaccato}
\end{figure}

\begin{figure*}[hbtp]
\centering
\submissionversion{
\begin{minipage}[t]{2.6in}
}
\fullversion{
\begin{minipage}[t]{4.2in}
}
\centering
\begin{algorithm}[H]
\textbf{Inputs}: SFA $S$ with partial order $\leq$ on its nodes, $X \subseteq V$\\
\While {$X$ does not form a valid SFA} { 
	\uIf{\textit{No unique start node in $X$}} 
  		{\textit{Compute the least common ancestor of $X$, say, $l$\\
  			$X \leftarrow X \cup \{y \in V ~|~ l \leq y ~and~ $ 
  			$\forall x \in X, y \leq x\}$}}
	\uIf{\textit{No unique end node in $X$}} 
		{\textit{Compute greatest common descendant of $X$, say, $g$\\
			$X \leftarrow X \cup \{y \in V ~|~ y \leq g ~and~ $
			$\forall x \in X, x \leq y\}$}}
	\textit{$\forall e \in E$ s.t. exactly one end-point is in $X-\{l,g\}$, add other end-point to $X$}
}
\caption{FindMinSFA}
\label{alg:findminsfa}
\end{algorithm} 
\end{minipage}
\begin{minipage}{4.29in}
\includegraphics[width=4.29in]{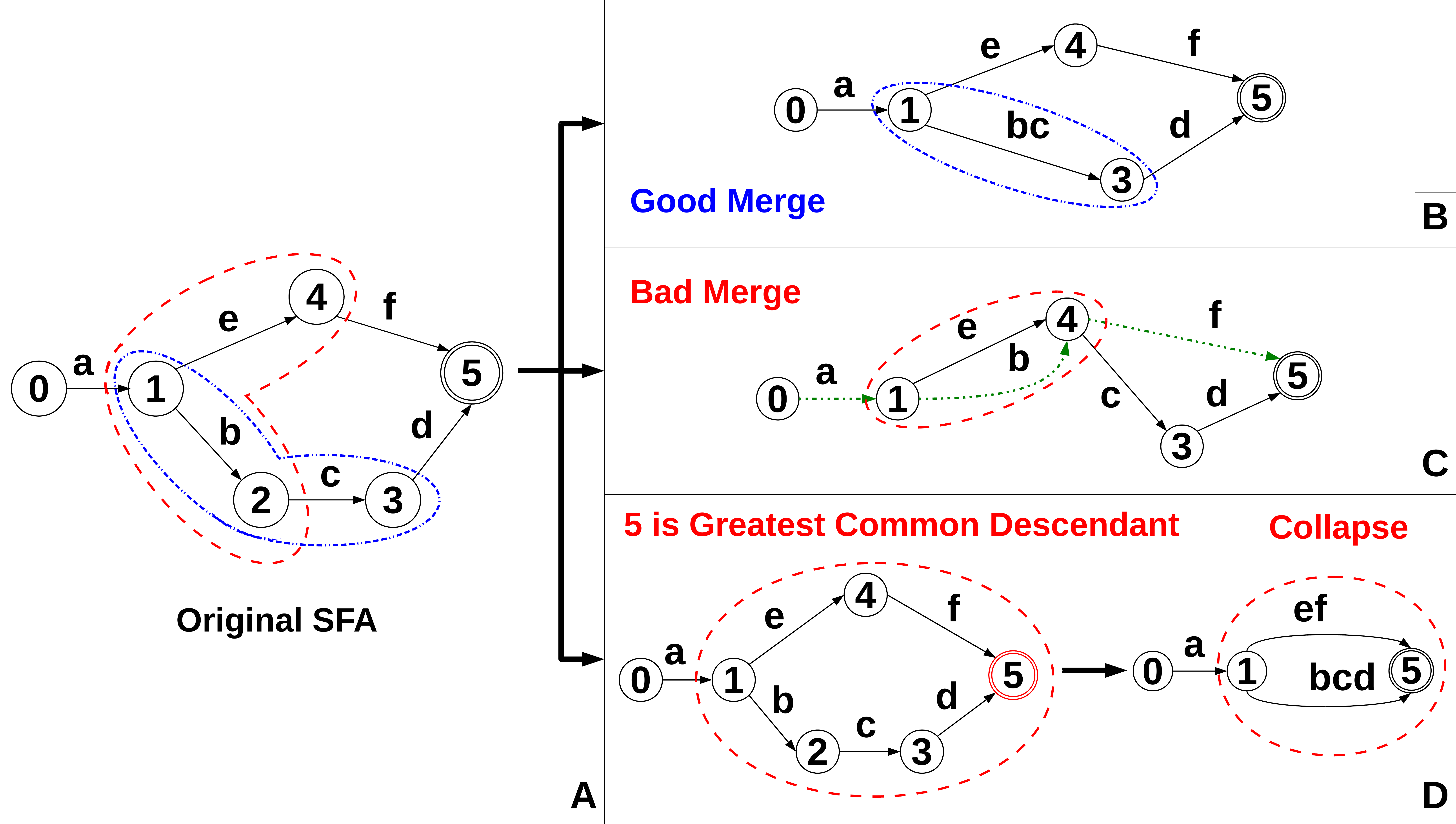}
\end{minipage}
\caption{
Algorithm \ref{alg:findminsfa}: FindMinSFA.
Illustrating merge and FindMinSFA: (A) Original: The SFA emits
two strings: $aef$ and $abcd$.  Two merges considered: \{(1,2),(2,3)\} 
(successive edges), and \{(1,2),(1,4)\} (sibling edges).
(B) Good merge: First set gives new edge (1,3), emitting $bc$. The SFA still emits 
only $aef$ and $abcd$. 
(C) Bad merge: Second set gives new edge (1,4), emitting $e$ and $b$. 
But, the SFA now wrongly emits
new strings, e.g., $abf$ (dashed lines).
(D) Using Algorithm \ref{alg:findminsfa} on the second set, the
greatest common descendant is obtained (node 5), and the resulting set is collapsed to edge (1,5).
The SFA now emits only $aef$ and $abcd$.
}
\label{fig:fixup}
\end{figure*}

Our idea to combat the slow increase of recall starts with the
following intuition: the more strings from the SFA we store, the
higher our recall will be. We observe that if we store the top $k$ in
each of $m$ smaller SFAs (that we refer to as `chunks'), 
we effectively store $k^m$ distinct strings. Thus,
increasing the value of $k$ increases the number of strings
polynomially. In contrast, increasing $m$, the number of smaller SFAs,
increases the number of paths {\em exponentially}, as illustrated in
Figure \ref{fig:kmapvsstaccato}. This observation motivates the idea
that to improve quality, we should divide the SFA further.
As we demonstrate experimentally, \name achieves the most conceptually
important feature of our approximation: it allows us to smoothly tradeoff recall for performance. In other words, increasing $m$ (and $k$)
increases the recall at the expense of performance.

\paragraph*{SFA Approximation}

Given an SFA $S$, our goal is to find a new SFA $S'$ that satisfies
two competing properties: (1) $S'$ should be smaller than $S$, and (2)
the set of strings represented by $S'$ should contain as many of the
high probability strings from $S$ as possible without containing any
strings not in $S$.\footnote{This is a type of {\em sufficient lineage
    approximation}~\cite{re-lineage}.} Our technique to approximate
the SFA $S$ is to merge a set of transitions in $S$ (a `chunk') to
produce a new SFA $S'$; then we retain only the top $k$ transitions on
each edge in $S'$.

To describe our algorithm, we need some notation. We generalize the
definition of SFAs (Section \ref{sec:prelim}) to allow transitions
that produce strings (as opposed to single characters).  Formally, the
transition function $\delta$ has the type $\delta : E \times
\Sigma^{+} \rightarrow [0,1]$.  Any SFA meets this generalized SFA
definition, and so we assume this generalized definition of SFAs for
the rest of the section.

Before describing the merging operation formally, we illustrate the
challenge in the merging process in Figure \ref{fig:fixup}. Figure
\ref{fig:fixup}(A) shows an SFA (without probabilities for
readability). We consider two merging operations.
First, we have chosen to merge the edges $(1,2)$ and 
$(2,3)$ and replaced it with a single edge $(1,3)$. To retain the 
same strings that are present in the SFA in (A), 
the transition function must emit the string `$bc$' on
the new edge $(1,3)$ as illustrated in Figure~\ref{fig:fixup}(B). In
contrast, if we choose to merge the edges $(1,2)$ and $(1,4)$, there
is an issue: {\it no matter what we put on the transition from $(1,4)$
  we will introduce strings that are not present in the original
  SFA} (Figure \ref{fig:fixup}(C)). The problem is that the set of nodes $\set{1,2,4}$ do not
form an SFA by themselves (there is no unique final node). One could
imagine generalizing the definition of SFA to allow richer structures
that could capture the correlations between strings, but as we explain
in Section~\ref{sec:analysis}, this approach creates serious technical
challenges. Instead, we propose to fix this issue by searching for a
minimal SFA $S'$ that contains this set of nodes (the operation called
\textsc{FindMinSFA}).  
Then, we replace the nodes in the set with a
single edge, retaining only the top $k$ highest probability strings
from $S'$. We refer to this operation of replacing $S$' with an edge
as \textsc{Collapse}.  In our example, the result of these operations
is illustrated in Figure \ref{fig:fixup}(D).

We describe our algorithm's subroutine \textsc{FindMinSFA} and then the entire
heuristic.

\paragraph*{FindMinSFA} 
Given an SFA $S$ and a set of nodes $X \subseteq V$, our goal is to
find a SFA $S'$ whose node set $Y$ is such that that $X \subseteq
Y$. We want the set $Y$ to be minimal in the sense that removing any
node $y \in Y$ causes $S'$ to violate the SFA property, that is
removing $y$ causes $S'$ to no longer have a unique start (resp. end)
state. Presented in Algorithm \ref{alg:findminsfa}, 
our algorithm is based on the observation that the unique start
node $s$ of $S'$ must come before all nodes in $X$ in the topological
order of the graph (a partial order). 
Similarly, the end node $f$ of the SFA $S'$ must
come after all nodes in $X$ in topological order. To satisfy these
properties, we repeatedly enlarge $Y$ by computing the start
(resp. final node) using the least common ancestor (resp. greatest
common descendant) in the DAG. Additionally, we require that any edge
in $S$ that is incident to a node in $Y$ can be incident to only
either $s$ or $f$. (Any node incident to both will be internal to
$S'$) If there are no such edges, we are done. Otherwise, for each
such edge $e$, we include its endpoints in $Y$ and repeat this
algorithm with $X$ enlarged to $Y$. Once we find a suitable set $Y$,
we replace the set of nodes in the SFA $S$ with a single edge from $s$
(the start node of $S'$) to $f$ (the final node of $S'$).
Figure \ref{fig:fixup}(D) illustrates
a case when there is no unique end node, and the greatest common
descendant has to be computed. More illustrations, covering
the other cases, are presented in the full version \cite{full-paper}.

\eat{\begin{algorithm}[hbtp]
\textbf{Inputs}: SFA $S$ with partial order $\leq$ on nodes, $X \subseteq V$\\
\While {$X$ does not form a valid SFA} { 
	\uIf{\textit{No unique start node in $X$}} 
  		{\textit{Compute least common ancestor ($l$) of $X$\\
  			$X \leftarrow X \cup \{y \in V ~|~ l \leq y ~and~ \forall x \in X, y \leq x\}$}}
	\uIf{\textit{No unique end node in $X$}} 
		{\textit{Compute greatest common descendant ($g$) of $X$\\
			$X \leftarrow X \cup \{y \in V ~|~ y \leq g ~and~ \forall x \in X, x \leq y\}$}}
	\textit{$\forall e \in E$ with exactly one end-point of $e$ in $X-\{l,g\}$, add the other end-point to $X$}
}
\caption{\textsc{FindMinSFA}}
\label{alg:findminsfa}
\end{algorithm}
}

\paragraph*{Algorithm Description}
The inputs to our algorithm are the parameters $k$ (the number of strings
retained per edge) and $m$ (the maximum number of edges that we are
allowed to retain in the resulting graph). We describe how a user
chooses these parameters in Section~\ref{sec:analysis}. For now, we
focus on the algorithm. At each step, our approximation creates a
restricted type of SFA where each edge emits at most $k$ strings, 
i.e., $\forall e \in E$, $|\setof{ \sigma \in \Sigma^{*} }{ \delta(e,
  \sigma) > 0 }| \leq k$. When given an SFA not satisfying this property,
our algorithm chooses to retain those strings $\sigma \in \Sigma^{*}$
with the highest values of $\delta$ (ties broken arbitrarily). This set
can be computed efficiently using the standard Viterbi algorithm 
\cite{viterbi}, which is a  dynamic programming 
algorithm for finding the most likely outputs in
probabilistic sequence models, like HMMs. By memoizing the best partial
results till a particular state, it can compute the globally optimal 
results in polynomial time. To compute the top-$k$ results more 
efficiently, we use an incremental variant by Yen et al \cite{yen}.

\begin{algorithm}[hbtp]
\textit{Choose $\set{x,y,z}$ s.t. $(x,y),(y,z)\in E$ and maximizing
  the probability mass of the retained strings.}\\ $S \leftarrow
\textsc{Collapse}(\textsc{FindMinSFA}(S,\set{x,y,z})
)$\\ \textit{Repeat above steps till $|E| \le m$}\\
\caption{Greedy heuristic over SFA $S = (V,E)$}
\label{alg:greedymerge}
\end{algorithm}

Algorithm \ref{alg:greedymerge} summarizes our heuristic: for each
triple of nodes $\set{x,y,z}$ such that $(x,y),(y,z) \in E $,
we find a minimal containing SFA $S_{ij}$ by calling
$\textsc{FindMinSFA}( \set{x,y,z} )$. We then replace the set of nodes
in $S_{ij}$ by a single edge $f$ (\textsc{Collapse} above). This edge
$f$ keeps only the top-$k$ strings produced by $S_{ij}$. Thus, the
triple of nodes $\set{x,y,z}$ generates a candidate SFA. We choose the
candidate such that the probability mass of all generated strings is
as high as possible (note that since we have thrown away some strings,
the total probability mass may be less than $1$). Given an SFA we can
compute this using the standard sum-product algorithm (a faster
incremental variant is actually used in \name). We then continue to
recurse until we have reached our goal of finding an SFA that contains
fewer than $m$ edges. A simple optimization (employed by \name) is to
cache those candidates we have considered in previous iterations.

While our algorithm is not necessarily optimal, it serves as a proof
of concept that our conceptual goal can be achieved. That is, \name
offers a knob to tradeoff recall for performance. We describe the
experimental setup in more detail in Section~\ref{sec:experiments},
but we illustrate our point with a simple experimental result. Figure
\ref{fig:tradeoff} plots the recall and runtimes of the two
baselines and \name. Here, we have set $k = 100$ and $m = 10$.  On these two
queries, \name falls in the middle on both recall and performance. \\

\begin{figure}[hbtp]
\centering
\includegraphics[width=2.7in]{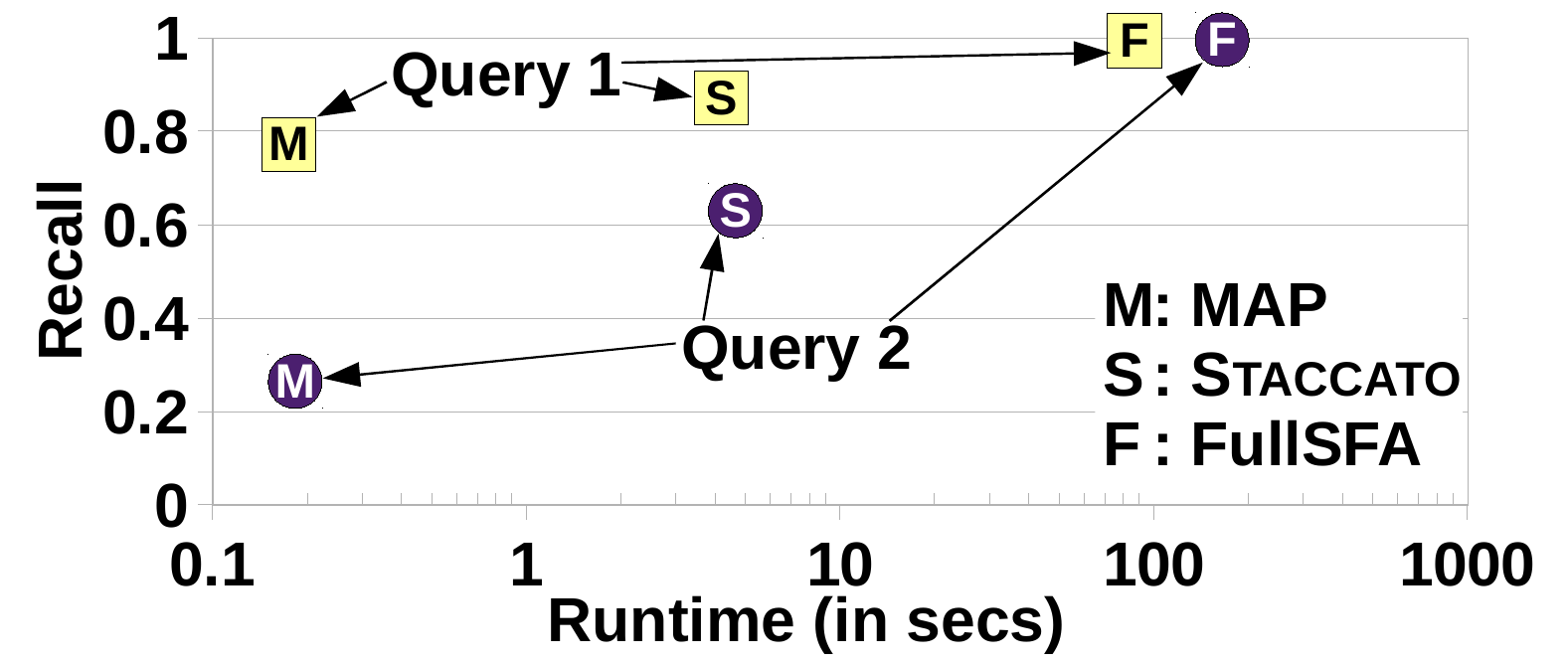}
\caption{Recall - Runtime tradeoff for a keyword query (Query 1) and a regular expression 
query (Query 2).
The parameters are: number of chunks ($m$) = 10, number of paths per chunk ($k$) = 100, 
and number of answers queried for ($NumAns$) = 100.
}
\label{fig:tradeoff}
\end{figure}

\eat{
\begin{figure}[hbtp]
\centering
\includegraphics[width=2.5in]{plots/AllResults/RecVsRuntime/recvsruntimes}
\caption{Recall - Runtime tradeoff for two queries across the three approaches - 
MAP (M), Staccato (S) and FullSFA (F). Query 1 (identified by squares) is 
a keyword query 
 and 
query 2 (circles) is a regular expression query 
.
The Staccato parameters used here 
are $m = 10,~k=100$, and $NumAns = 100$. 
}
\label{fig:tradeoff}
\end{figure}
}

\subsection{Extensions and Analysis}
\label{sec:analysis}

To understand the formal underpinning of our approach, we perform a
theoretical analysis.  Informally, the first question is: {\it ``in
  what sense is choosing the $k$-MAP the best approximation for each
  chunk in our algorithm?''}  The second question we ask is to justify
our restriction to SFAs as opposed to richer graphical structures in
our approximation. We show that $k$-MAP in each chunk is no longer the
best approximation and that there is likely no simple algorithm (as an
underlying problem is \NP-complete.) 

We formally define the goal of our algorithms. Recall that an SFA $S$
on $\Sigma$ represents a probability distribution $\Pr_S : \Sigma^{*}
\to [0,1]$. Given a set $X \subseteq \Sigma^{*}$, define $\Pr_{S}[X] =
\sum_{x \in X} \Pr_{s}[x]$. All the approximations that we consider
emit a subset of strings from the original model. Given an
approximation scheme $\alpha$, we denote by $\mathrm{Emit}(\alpha)$
the set of strings that are emitted (retained) by that scheme. All
other things being equal, we prefer a scheme $\alpha$ to $\alpha'$ whenever
\[ \Pr_{S}[\mathrm{Emit}(\alpha)] \geq \Pr_{S}[\mathrm{Emit}(\alpha')] \]
That is, $\alpha$ retains more probability mass than
$\alpha'$. 
The formal basis for this choice is a 
standard statistical measure called the 
{\em Kullback-Leibler Divergence}~\cite{bishop}, between the original and 
the approximate probability distributions. In the full 
version \cite{full-paper}, we show that this divergence is lower (which means 
the approximate distribution is more similar to the original distribution) 
if the approximation satisfies the above 
inequality. In other words, a better approximation retains more of the 
high-probability strings.

\eat{\footnote{In the full version, we further justify this
  choice using {\em Kullback-Leibler Divergence} ($\mathsf{KL}$
  divergence) also called {\em relative entropy}~\cite{bishop} from
  information theory.}
}

We now describe our two main theoretical results. First for SFAs,
\name's approach to choosing the $k$ highest probability strings in
each chunk is optimal. For richer structures than SFAs, finding the
optimal approximation is intractable (even if we are given the chunk
structure, described below). Showing the first statement is
straightforward, while the result about richer structures is more
challenging.

\paragraph*{Optimality of $k$-MAP for SFAs}
Given a generalized SFA $S=(V,\delta)$. Fix $k \geq 1$. Let
$\mathbf{S}_{[k]}$ be the set of all SFAs $(V,\delta')$ that arise
from picking $k$ strings on each edge of $S$ to store. That is, for
any pair of nodes $x,y \in V$ the set of strings with non-zero
probability has size smaller than $k$:
\[ \size{\setof{\sigma \in \Sigma^{*} }{\delta'((x,y),\sigma) > 0}}
\leq k \] 

\noindent
Let $S_k$ denote an SFA that for each pair $(x,y) \in V$ chooses the
highest probability strings in the model (breaking ties
arbitrarily). Then,

\begin{proposition}
For any $S' \in \mathbf{S}_{[k]}$, we have:
\[  
\Pr_{S}[ \mathrm{Emit}(S_k) ] \geq \Pr_{S}[ \mathrm{Emit}(S')  ]
 \]
\end{proposition}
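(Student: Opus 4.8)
The plan is to show that the probability mass retained by $S_k$ dominates that of any competing choice $S'$ edge-by-edge, and then to lift this to the whole SFA using the unique paths property. First I would fix an arbitrary $S' \in \mathbf{S}_{[k]}$ and write both $\Pr_S[\mathrm{Emit}(S_k)]$ and $\Pr_S[\mathrm{Emit}(S')]$ as sums over source-to-sink labeled paths $p = (e_1,\sigma_1),\dots,(e_N,\sigma_N)$ of $\prod_i \delta(e_i,\sigma_i)$. The key structural observation is that, because $S$, $S_k$, and $S'$ all share the same node set $V$ and edge set $E$ and differ only in which $(\sigma)$-labels survive on each edge, a path $p$ of $S$ is emitted by $S'$ if and only if every one of its edge-label pairs $(e_i,\sigma_i)$ is among the $k$ pairs kept on edge $e_i$ in $S'$; the same holds for $S_k$. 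Thus $\mathrm{Emit}(S')$ is exactly the set of $S$-paths all of whose transitions lie in the per-edge "kept sets" of $S'$.

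Next I would set up the telescoping/hybrid argument: define for each edge $e$ the kept set $K_e \subseteq \{\sigma : \delta(e,\sigma)>0\}$ under $S'$, and $K_e^\star$ the top-$k$ set under $S_k$. I want to transform $S'$ into $S_k$ one edge at a time, replacing $K_e$ by $K_e^\star$ on a single edge $e$ while leaving all other edges fixed, and show the retained mass never decreases. Because the edges of a DAG can be processed in any order and each path uses each edge at most once, fixing all edges but one lets me factor the total retained mass as $\sum_{\sigma}\big(\delta(e,\sigma)\cdot c_\sigma\big)$ over $\sigma$ currently kept on $e$, where $c_\sigma \ge 0$ is the total weight of path-prefixes-and-suffixes through $e$ labeled $\sigma$ that survive on all other edges — crucially $c_\sigma$ does not depend on which labels we keep on $e$ itself (it depends only on the other edges, which are held fixed). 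Actually, since in an SFA over the generalized definition the label on $e$ is emitted independently of the rest of the path, $c_\sigma$ is the same constant $c$ for every $\sigma$ on that edge: the weight contributed through $e$ is $\big(\sum_{\sigma \in K_e}\delta(e,\sigma)\big)\cdot c$. Hence replacing $K_e$ by the top-$k$ set $K_e^\star$ maximizes $\sum_{\sigma \in K_e}\delta(e,\sigma)$ and therefore can only increase the total retained mass. Iterating over all edges converts $S'$ into $S_k$ with monotonically non-decreasing mass, giving the claim.

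The main obstacle is making the "factor through edge $e$" step rigorous, i.e., justifying that the contribution of edge $e$ genuinely separates multiplicatively from the rest of the path weight. This is where the unique paths property and the DAG structure matter: I need that distinct labeled paths are distinct strings (so summing path weights really does compute $\Pr_S[\mathrm{Emit}(\cdot)]$ with no double counting), and that a DAG path visits $e$ at most once (so the weight of a surviving path is $\delta(e,\sigma)$ times a product of factors none of which involves edge $e$'s label choice). With both in hand, the sum over surviving $S$-paths grouped by their behavior on $e$ yields the clean form $\big(\sum_{\sigma\in K_e}\delta(e,\sigma)\big)\cdot(\text{rest})$, and the edge-local optimality of the top-$k$ choice finishes it. I would also note the edge case $k \ge$ (number of labels on $e$), where $K_e^\star$ keeps everything and the inequality is trivial on that edge, so the induction is well-founded.
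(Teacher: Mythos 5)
Your proposal is correct and rests on the same key observation as the paper's (terse) proof: because every surviving labeled path enters and leaves an edge (chunk) through its unique endpoints, the retained mass factors as (surviving prefix mass)$\times$(kept label mass on the edge)$\times$(surviving suffix mass), so the total mass is monotone in each edge's kept probability and the top-$k$ choice per edge is optimal. Your edge-by-edge exchange/hybrid argument, together with the explicit use of the unique paths property to rule out double counting, is simply a more carefully spelled-out version of the argument the paper gives in its appendix.
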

Since $S_k$ is selected by \name, we view this as formal justification
for \name's choice. 

\paragraph*{Richer Structural Approximation} 
We now ask a follow-up question: {\it ``If we allow more general
  partitions (rather than collapsing edges), is $k$-MAP still
  optimal?''}  To make this precise, we consider a partition of the
underlying edges of the SFA into connected components (call that
partition $\Phi$). Keeping with our early terminology, an element of
the partition is called a {\em chunk}. In each chunk, we select at most $k$ strings
(corresponding to labeled paths through the chunk). Let $\alpha :
\Phi \times \Sigma^{*} \to \set{0,1}$ be an indicator function such
that $\alpha(\phi, \sigma) = 1$ only if in chunk $\phi$ we choose
string $\sigma$. For any $k \geq 1$, let $A_k$ denote the set of all
such $\alpha$s that picks at most $k$ strings from each chunk, i.e.,
for any $\phi \in \Phi$ we have $\size{\setof{\sigma \in
    \Sigma^{*}}{\alpha(\phi,\sigma) > 0}} \leq k$. Let
$\mathrm{Emit}(\alpha)$ be the set of strings emitted by this
representation with non-zero probability (all strings that can be
created from concatenating paths in the model).

Following the intuition from the SFA case described above, the best
$\alpha$ would select the $k$-highest probability strings in each
chunk. However, this is not the case. Moreover, we exhibit chunk
structures, where finding the optimal choice of $\alpha$ is \NP-hard
in the size of the structure. This makes it unlikely that there is any
simple description of the optimal approximation.

\begin{theorem}
  Fix $k \geq 2$. The following problem is $\NP$-complete. Given as
  input $(S,\Phi,\lambda)$ where $S$ is an SFA, $\Phi$ partitions the
  underlying graph of $S$, and $\lambda \geq 0$, determine if there
  exists an $\alpha \in A_k$ satisfying $\Pr[ \mathrm{Emit}(\alpha) ]
  \geq \lambda$.
\label{thm:hard}
\label{THM:HARD}
\end{theorem}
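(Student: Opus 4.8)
The plan is to establish Theorem~\ref{thm:hard} in two parts: membership in $\NP$, which is routine, and $\NP$-hardness, which is the substance. For membership, given a candidate $\alpha$, it suffices to verify that $\alpha$ picks at most $k$ strings per chunk and that $\Pr[\mathrm{Emit}(\alpha)] \geq \lambda$; the latter probability can be computed in polynomial time because $\mathrm{Emit}(\alpha)$ is the set of strings formed by concatenating, along each source-to-sink path in the quotient graph $S/\Phi$, one of the chosen strings from each chunk it traverses, and (using the unique-path-type structure of $S$ plus a sum-product pass over the DAG) this mass is a polynomial-size sum of products of the retained $\delta$-weights. So the witness $\alpha$ has polynomial size and is checkable in polynomial time.

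For hardness I would reduce from a known $\NP$-complete problem. The natural candidate is a packing/covering problem such as \textsc{Maximum Coverage} (or \textsc{Exact Cover}/\textsc{Independent Set}), since choosing $k$ strings per chunk to maximize total retained probability mass is itself a coverage-flavored objective. The key idea to exploit is the one already flagged in the paper's Figure~\ref{fig:fixup} discussion: when two chunks are placed in \emph{parallel} (sibling edges sharing endpoints), the set of emitted strings is the \emph{union} of what the two chunks emit, not a product — and crucially, strings emitted by more than one chunk are counted only once in $\Pr[\mathrm{Emit}(\alpha)]$. This de-duplication is exactly what makes the greedy per-chunk $k$-MAP choice suboptimal and what lets one encode combinatorial interaction. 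So I would build an SFA whose quotient graph is a collection of parallel chunk-edges between a common source and sink, arrange that the ``universe'' of ground elements corresponds to a set of distinguished high-probability strings, let each chunk's available strings correspond to a set in the set-system instance (each chunk ``contains'' the strings for the elements of its set, each with a carefully chosen weight), set $k=2$ so that within each chunk one is forced to choose which pair of elements to ``activate,'' and choose $\lambda$ so that the threshold is met iff the chosen sets cover enough distinct elements. The overlap/de-duplication ensures that picking the same popular element in two chunks is wasteful, which forces the solver to spread choices to cover the universe — mirroring the coverage constraint. One must also handle $k > 2$: the standard trick is padding — add $k-2$ extra ``dummy'' strings to each chunk that are forced into any optimal solution (give them their own dedicated, non-overlapping mass), so that effectively only $2$ genuine choices remain per chunk, reducing the general-$k$ case to the $k=2$ gadget.

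The main obstacle I expect is the bookkeeping of probabilities: $\delta$ must be a genuine stochastic transition function (the outgoing weights at each node sum to $1$), and the emitted mass $\Pr[\mathrm{Emit}(\alpha)]$ is a sum over distinct strings with multiplicities collapsed, so I need to choose the per-string weights and the ``yes/no'' threshold $\lambda$ so that (i) the arithmetic comes out to clean polynomial-size rationals, (ii) the gap between a covering solution and a non-covering one is at least one ``unit'' so the threshold test is exact, and (iii) the forced dummy strings really are forced. Getting the weights to simultaneously satisfy the stochasticity constraint, make the reduction polynomial-time, and preserve a clean decision gap is the delicate part; a typical fix is to make every distinguished string have equal weight $w$, make the total ``universe'' mass exactly some value, and route any leftover probability through a separate throwaway branch that no reasonable $\alpha$ would ever select. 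Once the gadget is nailed down, correctness in both directions (a cover yields an $\alpha$ above threshold; an $\alpha$ above threshold yields a cover, after arguing optimal $\alpha$'s never duplicate elements across chunks and always take the dummies) should follow by a direct counting argument.
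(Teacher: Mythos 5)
There is a genuine gap in the hardness half, and it is structural rather than a matter of bookkeeping. Your gadget places the chunks in \emph{parallel} between a common source and sink, so that $\mathrm{Emit}(\alpha)$ is simply the union of the strings selected in the individual chunks and the objective is the total weight of that union with duplicates counted once. But that optimization problem is solvable in polynomial time: a set $U$ of distinguished strings is achievable as a union if and only if each $u\in U$ can be assigned to some chunk whose available strings contain $u$, with each chunk receiving at most $k$ assigned strings (given a selection, charge each covered string to one chunk that picks it; conversely, an assignment defines a selection). Hence the maximum retained mass is exactly a maximum-weight capacitated bipartite matching (transversal-matroid) value, computable by flow techniques. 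The hardness of \textsc{Maximum Coverage} comes from choosing $t$ of $m$ sets, not from choosing $k$ elements inside \emph{every} set, so no padding or weight tuning will make this parallel, union-with-deduplication construction $\NP$-hard. (Separately, note that the de-duplication you rely on forces the same string to be emitted by several parallel paths, which also makes the $\NP$-membership computation of $\Pr[\mathrm{Emit}(\alpha)]$ less routine than a plain sum-product pass.)

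The paper's proof obtains hardness from a fundamentally different source: multiplicative interaction between chunks laid out \emph{in series}. It reduces from a threshold problem on products of stochastic matrices, namely deciding whether one can pick $M_{i_j}\in\mathcal{S}_j$ with $x^{T}M_{i_N}\cdots M_{i_1}y\ge\lambda$; this is shown $\NP$-hard starting from the exact matrix mortality problem of Bournez and Branicky for $2\times2$ matrices and then made stochastic in fixed dimension $l=4$ via the Turakainen/Blondel transformation. Each set $\mathcal{S}_j$ becomes one chunk containing a ``multiply gadget'' (a small transducer whose node masses realize matrix–vector multiplication) and an ``exclusive gadget'' which, because only $k=2$ paths may be kept, forces the choice of one of the two matrices or else drives the mass below $2^{-N}\lambda$; the retained mass of a selection then equals the corresponding matrix product up to the factor $2^{-N}$. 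Crucially, the chunks have two nodes on each border, so vector-valued state is passed from chunk to chunk and the choices compose multiplicatively along the sequence --- exactly the cross-chunk coupling your construction lacks, and exactly why the paper emphasizes that even this weakest violation of the single-entry/single-exit SFA property already yields intractability.
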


The above problem remains $\NP$-complete if $S$ is restricted to
satisfy the unique path property and restricted to a binary
alphabet. A direct consequence of this theorem is that finding the
maximizer is at least $\NP$-hard. 
We provide the proof of this theorem in the full version~\cite{full-paper}.
The proof includes a detailed 
outline of a reduction from a matrix multiplication-related problem that 
is known be to hard. The reduction is by a gadget construction that 
encodes matrix multiplication as SFAs.
Each chunk has at most $2$ nodes in
either border (as opposed to an SFA which has a single start and final
node). This is about the weakest violation of the SFA property that we
can imagine, and suggests to us that the SFA property is critical for
tractable approximations.

\paragraph*{Automated Construction of \name}

Part of our goal is to allow knobs to trade recall for performance on
a per application basis, but setting the correct values for $m$ and
$k$ may be unintuitive for users. 
To reduce the burden on the user, we devise a simple parameter 
tuning heuristic that maximizes query performance, while achieving acceptable recall. 
To measure recall, the user provides a set of labeled examples and representative queries.
The user specifies a quality constraint (average recall for the set of queries) 
and a size constraint (storage space as percentage of the original dataset size). 
The goal is to find a pair of parameters ($m$,$k$) that satisfies both these constraints. 
We note that the size of the data is a function of ($m$,$k$) (see Table \ref{tab:complexities}),
which along with the size constraint helps us express $k$ in terms of $m$ (or vice versa).
We empirically observed that for a fixed size, a smaller $m$ usually yields faster query 
performance than a smaller $k$, which suggests that we need to minimize the value of $m$ to
maximize query performance.
Our method works as follows: we pick a given value of $m$, then calculate the corresponding
$k$ that lies on the size constraint boundary. 
Given the resulting ($m$,$k$) pair, 
we compute the \name approximation of the dataset and estimate the average recall. 
This problem is now a one-dimensional search problem: our goal is to find the smallest $m$ 
that satisfies the recall constraint.
We solve this using essentially a binary search.
If infeasible, the user relaxes one of the constraints and repeats the above method.
We experimentally validated this tuning method and compared it with an
exhaustive search on the parameter space. 
The results are discussed in Section \ref{expt:tuning}.

\eat{
To obtain desired parameters, the user specifies a quality constraint (average recall) 
and a size constraint (percentage of the original data size). 
The idea here is that the size constraint helps us to restrict the search space.
By computing the sizes for a few parameter values, we fit a curve for the size that is 
linear in $k$ and $mk$ (Table \ref{tab:complexities}). This size curve and the size 
constraint give us an equation in $m$ and $k$.
We observe (both from Table \ref{tab:complexities} and empirically) that $m$ is a critical
factor for query performance.
Thus, to maximize performance, we essentially binary search on $m$ to obtain the 
smallest $m$  that satisfies the quality constraint 
(the corresponding $k$ is computed from the size equation).
If infeasible, the user relaxes one of the constraints, and repeats the method.
We experimentally validated this tuning method and compared it with an
exhaustive search. We discuss the results in 
Section \ref{expt:tuning}.
}

\eat{
To attack this problem, we develop
a simple strategy whose goal is to minimize user intervention. 
The user specifies a size constraint (percentage 
of the unapproximated dataset size) and a quality constraint
(average recall). Additionally, they provide a set of labeled 
examples and representative queries to compute recall. The size of the
approximation is linear in $mk$ and $k$. We then use the dataset
sizes for a few parameter values to learn an equation (using line-fitting) for 
the expected size in terms of these parameters.
For each value of $m$, we select the corresponding value of $k$
based on this equation. Then, we essentially binary 
search on $m$ to find the smallest value of $m$ (since higher $m$ yields slower query performance)
that satisfies both the user constraints. If infeasible, the
user relaxes one of the constraints, and repeats the procedure.
We experimentally
validated this tuning method and compared it with an
exhaustive search on the parameter space. The results are discussed later in 
Section \ref{expt:tuning}.
}
\section{Inverted Indexing}
\label{sec:index}

To speedup keywords and anchored regex queries on standard ASCII
text, a popular technique is to use standard
inverted-indexing~\cite{classical2}. While indexing $k$-MAP data is
pretty straightforward, the FullSFA is difficult. The reason is that the
FullSFA encodes exponentially many strings in its length, and so
indexing all strings for even a moderate-sized SFA is hopeless. Figure
\ref{infeas-direct} shows the size of the index obtained (in number of
    {\em postings}~\cite{classical2}, in our case line number item pairs)
    when we try to directly index the \name text of a single SFA (one
    OCR line).

\begin{figure}[hbtp]
\centering
\includegraphics[width=3.3in]{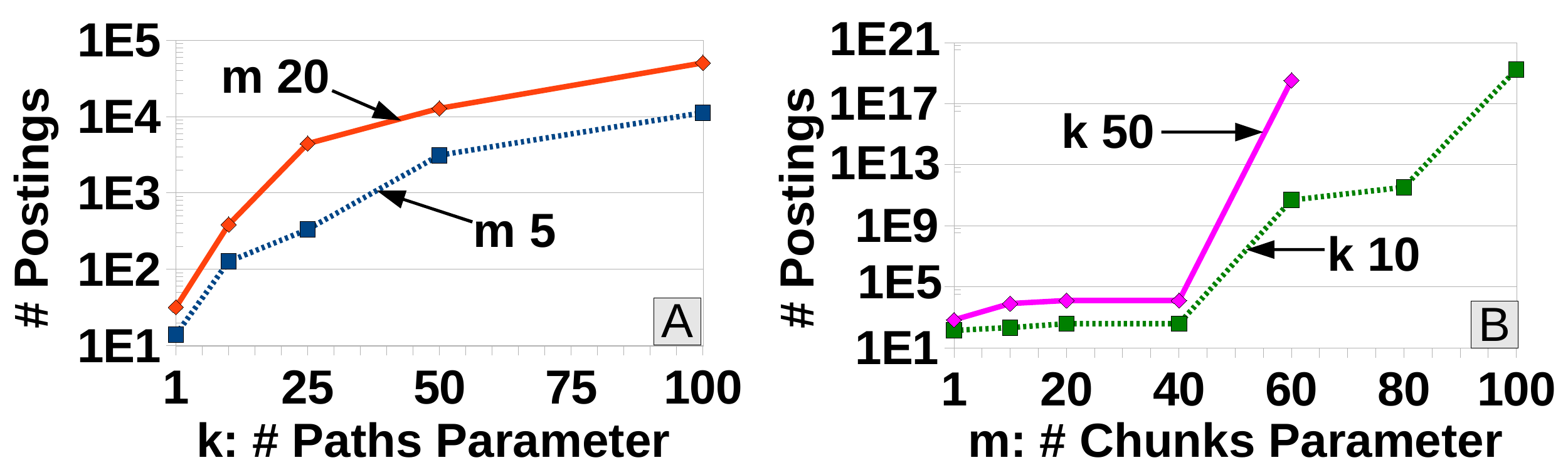}
\caption{Number of postings (in logscale) from directly indexing one SFA. 
(A) Fix $m$, vary $k$. (B) Fix $k$, vary $m$.
In (B), for $k = 50$, the number of postings overflows the 64-bit representation beyond 
$m = 60$.
}
\label{infeas-direct}
\end{figure}

Figure~\ref{infeas-direct} shows an exponential blowup with $m$ --
which is not surprising as we store exponentially more paths with
increasing $m$. Our observation is that many of these exponentially
many terms are useless to applications.  Thus, to extend the reach of
indexing, we apply a standard technique. We use a dictionary of terms
input by the user, and construct the index only for these
terms~\cite{dict-technique}. These terms may be extracted from a known
clean text corpus or from other sources like an English
dictionary. Our construction algorithm builds a DFA from the
dictionary of terms, and runs a slight modification of the SFA
composition algorithm \cite{hopcroft} with the data to find the start
locations of all terms (details of the modification are in the full
version~\cite{full-paper}).  The running time of the algorithm is
linear in the size of the dictionary.
 
\paragraph*{Projection}
In traditional text processing, given the length of 
the keyword and the offset of a match, 
we can read only that small portion of the document to process the query.  
We extend this idea to \name by finding a small portion of the SFA that
is needed to answer the query -- an operation that we call {\em projection}. 
Given a term $t$ of length $u$, we obtain start
locations of $t$ from the postings. For each start
location, we compute an (over)estimate of the nodes 
that we must process to obtain the term $t$. 
More precisely, we want the descendant nodes in the DAG that
can be reached by a directed path from the start location that
contains $u$ or fewer edges (we find such nodes using a breadth-first
search). This gives us a set of nodes that we must retrieve, which is
often much smaller than the entire SFA.

We empirically show that even a simple indexing scheme
as above can be used by \name to speedup keyword and anchored
regular expression queries by over an order of magnitude versus a filescan-based
approach. This validates our claim that indexing is possible for OCR transducers, 
and opens the possibility of adapting more advanced indexing 
techniques to improve the runtime speedups.

\section{Experimental Evaluation}
\label{sec:experiments}

We experimentally verify that the \name approach can gracefully tradeoff between
performance and quality. We also validate that our modifications to
standard inverted indexing allow us to speedup query answering.

\begin{table}[hbtp]
\centering
\begin{tabular}{|c|c|c|c|c|}
\hline
\multirow{2}{*}{\textbf{Dataset}} & \textbf{No. of} & \textbf{No. of} & \multicolumn{2}{c|}{\textbf{Size as:}} \\
 \cline{4-5}
 & \textbf{Pages} & \textbf{SFAs} & \textbf{SFAs} & \textbf{Text}\\
\hline
Cong. Acts (CA) & 38 & 1590 & 533MB & 90kB\\
\hline
English Lit. (LT) &  32 & 1211 & 524MB & 78kB\\
\hline
DB Papers (DB) & 16 & 627 & 359MB & 54kB\\
\hline
\end{tabular}
\caption{
Dataset Statistics. Each SFA represents one line of a scanned page. 
}
\label{datasets}
\end{table}

\paragraph*{Datasets Used}
We use three real-world datasets from domains where document
digitization is growing. Congress Acts (CA) is a set of scans of 
acts of the U.S. Congress, obtained from The Hathi Trust \cite{hathi}.
English Literature (LT) is a set of scans of an English literature
book, obtained from the JSTOR Archive \cite{jstor}. 
Database Papers (DB) is a set
of papers that we scanned ourselves to simulate a setting where an
organization would scan documents for in-house usage. All the scan
images were converted to SFAs using the OCRopus
tool~\cite{ocropus}. Each line of each document is represented by one
SFA.  We created a manual ground truth for these documents. The
relevant statistics of these datasets are shown in Table
\ref{datasets}. In order to study the scalability of the approaches on 
much larger datasets, we used a 100 GB dataset obtained from Google Books \cite{googlebooks}.


\paragraph*{Experimental Setup} The three approaches were
implemented in C++ using PostgreSQL 9.0.3. The current implementation
is single threaded so as to assess the impact of the approximation.
All experiments are run on Intel Core-2 E6600 machines with 2.4 GHz
CPU, 4 GB RAM, running Linux 2.6.18-194.  The runtimes are averaged
over 7 runs. The notation for the parameters is summarized in Table
\ref{notation}.

\begin{table}[hbtp]
\centering
\begin{tabular}{|c|c|}
\hline
\textbf{Symbol} & \textbf{Description}\\
\hline
$k$ & \# Paths Parameter ($k$-MAP, \name)\\
\hline
$m$ & \# Chunks Parameter (\name)\\
\hline
$NumAns$ & \# Answers queried for\\
\hline
\end{tabular}
\caption{Notations for Parameters}
\label{notation}
\end{table}

\noindent
We set $NumAns=100$, which is greater than the number of answers in
the ground truth for all reported queries. If \name finds fewer
matches than $NumAns$, it may return fewer answers.  $NumAns$ affects
precision, and we do sensitivity analysis for $NumAns$ in the full
version~\cite{full-paper}.

\subsection{Quality - Performance Tradeoff (Filescan)}
We now present the detailed quality and performance results for
queries run with a full filescan. The central technical claim of this
paper is that \name bridges the gap from the
low-recall-but-fast MAP to the
high-recall-but-slow FullSFA. 
\eat{To verify this claim, we measured the recall and performance of
several queries on the three datasets. Table
\ref{tab:recall-vs-runtime} presents a subset of these results 
(more queries are presented in the full version \cite{full-paper}).
We formulated these queries based on 
our discussions with researchers in the humanities and law
schools here, who use similar real-world OCR data in their research \cite{arun1}.}
To verify this claim, we measured the recall and performance of
21 queries on the three datasets. We formulated these queries based on 
our discussions with practitioners in companies and researchers in the social 
sciences who work with real-world OCR data.
Table \ref{tab:recall-vs-runtime} presents a subset of these results 
(the rest are presented in the full version of this paper \cite{full-paper}).

\begin{table}[hbtp]
\centering
\begin{tabular}{|c||c|c|c||c|}
\hline
 Query & MAP & $k$-MAP & FullSFA & \name \\
\hline
&\multicolumn{4}{c|}{\textbf{Precision/Recall}}\\
\hline
CA1 & 1.00/0.79    & 1.00/0.79   & 0.14/1.00 & 1.00/0.79  \\
CA2 & 1.00/0.28	   & 1.00/0.52   &  0.25/1.00  & 0.73/0.76\\
\hline
LT1 & 0.96/0.87   &  0.96/0.90   &  0.92/1.00 & 0.97/0.91 \\
LT2 & 0.78/0.66    & 0.76/0.66    &  0.31/0.97 & 0.44/0.81 \\ 
\hline
DB1 & 0.93/0.75    &   0.90/0.92  &  0.67/0.99 &  0.90/0.96 \\
DB2 & 0.96/0.76    &  0.96/0.76   &  0.33/1.00 & 0.91/0.97 \\
\hline
&\multicolumn{4}{c|}{\textbf{Runtime (in seconds)}}\\
\hline
 CA1 & 0.17 & 0.75 &  86.72&  2.87 \\
 CA2 & 0.18 & 0.84 & 150.35 & 3.36\\
\hline
 LT1 & 0.13 & 0.19  & 83.78 & 1.98 \\
 LT2 & 0.14 & 0.24   & 155.45 & 2.88 \\ 
\hline
 DB1 & 0.07 &  0.29 &  40.73 &  0.75 \\
 DB2 & 0.07 & 0.33  &  619.31 &  0.86 \\
\hline
\end{tabular}
\caption{Recall and runtime results across datasets. The keyword
  queries are -- CA1: `$President$', LT1: `$Brinkmann$' and DB1:
  `$Trio$'. The regex queries are -- CA2: 
  `$U.S.C.~2 \backslash d \backslash d \backslash d$', 
  LT2: `$19\backslash d\backslash d,~\backslash
  d\backslash d$' and DB2: `$Sec(\backslash x)*\backslash d$'. Here,
  $\backslash x$ is any character and $\backslash d$ is any digit.
  The number of ground truth matches are -- CA1: 28, LT1: 92, DB1: 68,
  CA2: 55, LT2: 32 and DB2: 33.  The parameter setting here is:
  $~k=25,~m=40,NumAns=100$.  }
\label{tab:recall-vs-runtime}
\end{table}

We classify the kinds of queries to \textit{keywords} and
\textit{regular expressions}. The intuition is that keyword queries
are likely to achieve higher recall on $k$-MAP compared to more
complex queries that contain spaces, special characters, and
wildcards.
Table \ref{tab:recall-vs-runtime} presents the recall and
runtime results for six queries -- one keyword and one regular
expression (regex) query per dataset. Table
\ref{tab:recall-vs-runtime} confirms that indeed there are
intermediate points in our approximation that have faster runtimes
than FullSFA (even 
up to
two orders of magnitude), while providing
higher quality than $k$-MAP.

We would like the tradeoff of quality for performance to be smooth as
we vary $m$ and $k$.  To validate that our approximation can support
this, we present two queries, a keyword and a regex, on the Congress
Acts dataset (described below). To demonstrate this point, we vary $k$
(the number of paths) for several values of $m$ (the number of chunks)
and plot the results in Figure~\ref{recall-runtime-vs-k}. Given an
SFA, $m$ takes values from 1 to the number of the edges in the SFA
(the latter being the nominal parameter setting `Max').  When $m=1$,
\name is equivalent to $k$-MAP.  Note that the state-of-the-art in our
comparison is essentially the MAP approach ($k$-MAP with $k=1$, or
\name with $m=1,k=1$), which is what is employed by Google Books.

\begin{figure}
\centering
\vspace{0.01in}
\includegraphics[width=3.3in]{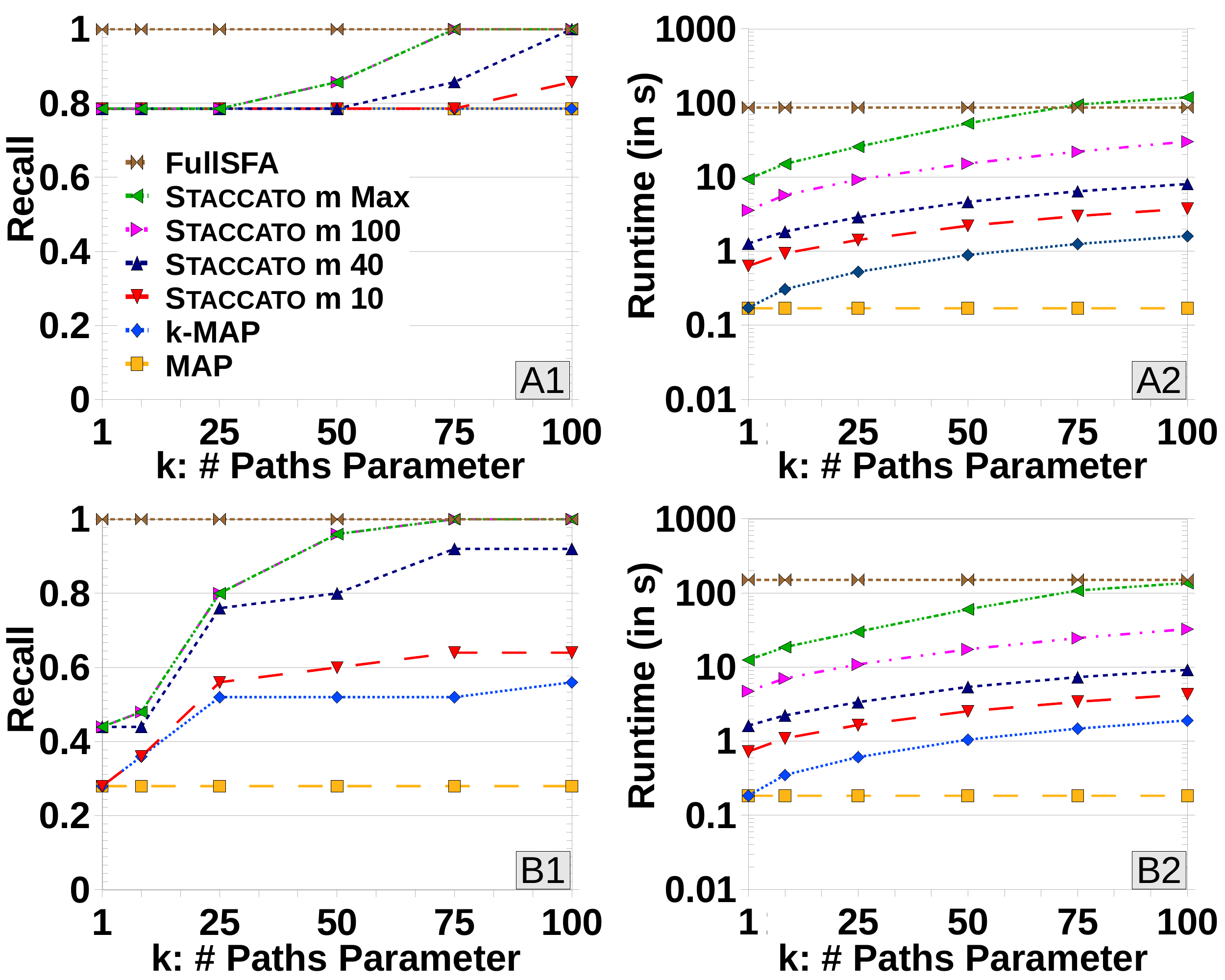}\\
\caption{Recall and Runtime variations with $k$, for different values of $m$, on two queries: (A) `$President$' (keyword), and 
(B) `$U.S.C.~2 \backslash d \backslash d \backslash d$' (regex). The $\backslash d$ is short for $(0|1|...|9)$. 
The runtimes are in logscale. $NumAns$ is set to 100. Recall that $m$ is the 
number of chunks parameter and $NumAns$ is the number of answers queried for.}
\label{recall-runtime-vs-k}
\end{figure}

\paragraph*{Keyword Queries}
In Figures \ref{recall-runtime-vs-k} (A1) and (A2), we see the recall
and performance behavior of running a keyword query (here {\it
  `President'}) in \name for various combinations of $k$ and $m$. We
observe that the recall of $k$-MAP is high ($0.8$) but not perfect and
in (A2) $k$-MAP is efficient ($0.1$s) to answer the query. Further, as
we increase $k$ there is essentially no change in recall (the running
time does increase by an order of magnitude). We verified that the
reason is that the top-$k$ paths change in only a small set of
locations -- and so no new occurrences of the string {\it
  `President'} are found. In contrast, the FullSFA approach achieves
perfect recall, but it takes over $3$ orders of magnitude longer to
process the query. As we can see from the plots, for the \name
approach, the recall improves as we increase $m$ -- with corresponding
slowdowns in query time. We believe that our approach is promising
because of the gradual tradeoff of running time for quality. The fact
that the $k$-MAP recall does not increase substantially with $k$, and
does not manage to achieve the recall of FullSFA even for large $k$
underscores the need for finer-grained partition, which is what \name
does.

\paragraph*{Regular Expressions} 
Figures \ref{recall-runtime-vs-k} (B1) and (B2) present the 
results for a more sophisticated regex query that looks
for a congressional code (`$U.S.C.~2 \backslash d \backslash d \backslash d$')
 referenced in the text. As the figure shows, this
more sophisticated query has much lower recall for the MAP
approach, and increases slowly with increasing $k$. Again, we see the same
tradeoff that the FullSFA approach is orders of magnitude slower than
$k$-MAP, but achieves perfect recall. Here, we see that the \name
approach does well: there are substantial (but smooth) jumps in quality
as we increase $k$ and $m$, going all the way from MAP to FullSFA. 
This suggests that more
sophisticated queries benefit from our scheme more, which is an encouraging 
first step to enable applications to do rich analytics over such data.

\begin{figure} [hbtp]
\centering
\includegraphics[width=3.3in]{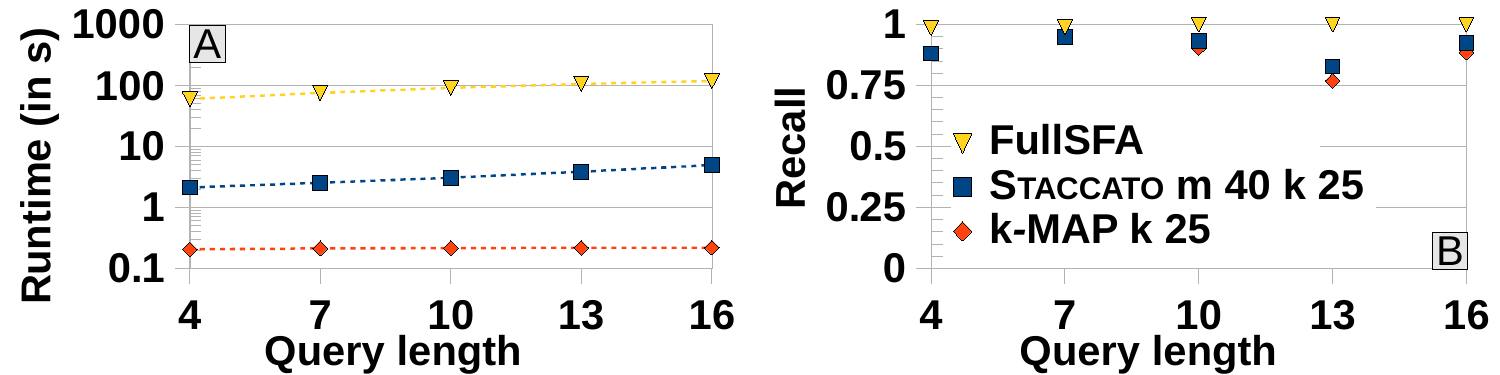}\\
\caption{Impact of Query Length on (A) Runtime and 
(B) Recall. $NumAns$, the number of answers queried for, is set to 100.
}
\label{exp:qlenshorter}
\end{figure}

\paragraph*{Query Efficiency}
To assess the impact of query length on recall and runtime, we plot the two 
for a set of keyword queries of increasing length in Figure \ref{exp:qlenshorter}.
We observe that the runtimes increase polynomially but slowly for all the approaches, 
while no clear trends exist for the recall. We saw similar results with regular
expression queries, and discuss the details in the full version~\cite{full-paper}.

We also studied the impact of $m$ and $k$ on precision (and
F-1 score), and observed that the precision of \name usually falls in between $k$-MAP
and FullSFA (but F-1 of \name can be better than both in some cases). 
Similar to the recall-runtime tradeoff, \name also manages to gracefully
tradeoff on precision and recall.
Due to space constraints, these results are discussed in the full version~\cite{full-paper}.

\subsection{Staccato Construction Time}

\begin{figure}[hbtp]
\centering
\includegraphics[width=3.3in]{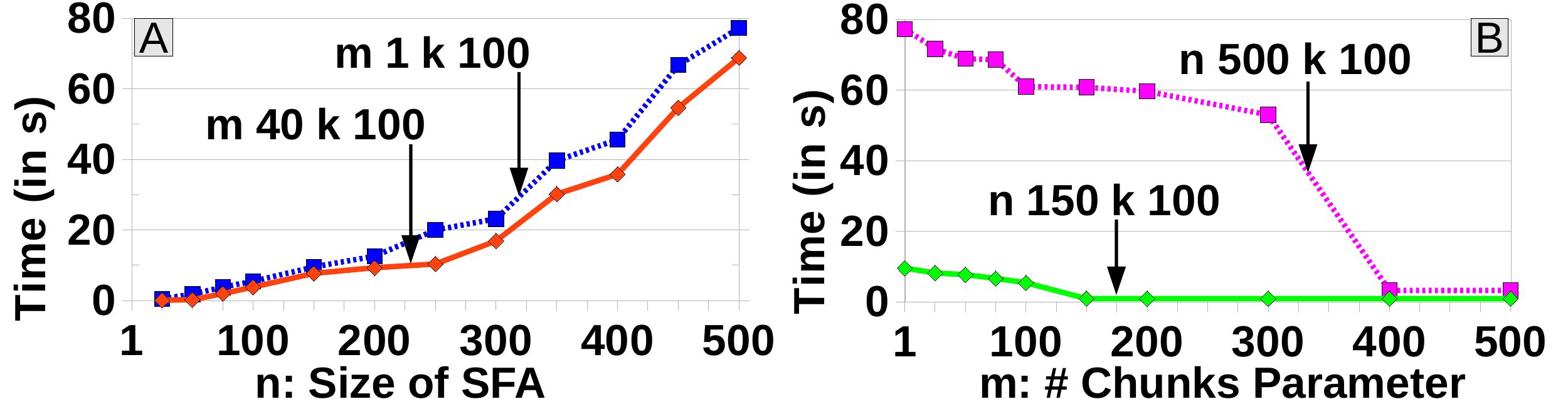}
\caption{(A) Variation of \name approximation runtimes with the size of the SFA ($n$ = number of nodes + edges) 
fixing $m$ and $k$.
(B) Sensitivity of the runtimes to $m$, fixing $n$ and $k$. 
Recall that $m$ is the number of chunks parameter and $k$ is the number of paths parameter.}
\label{stac-constr}
\end{figure}

We now investigate the runtime of the \name's approximation algorithm.
The runtime depends on the size of the input SFA data as well as $m$
and $k$.  We first fix $m$ and $k$, then we plot the construction time
for SFAs of varying size (number of nodes) from the CA dataset (Figure
\ref{stac-constr}(A)). Overall, we can see that the algorithm runs
efficiently -- even in our unoptimized implementation. As this is an
offline process, speed may not be critical for some applications. Also, this computation
is embarassingly parallel (across SFAs). We used Condor \cite{condor}
to run the \name construction on all the SFAs in the three datasets,
for all of the above parameters.  This process completed in
approximately 11 hours.

To study the sensitivity of the construction time to $m$, we select a
fixed SFA from the CA dataset (Figure \ref{stac-constr}(B)).  When $m \ge |E|$, 
the algorithm picks each transition as a block, and terminates.
But when $m = 300 < |E|$, the algorithm computes several candidate
merges, leading to a sudden spike in the runtime. There onwards, the
runtime varies almost linearly with decreasing $m$. However, there are
some spikes in the middle. We verified that the spikes arise since the
`FindMinSFA' operation has to fix merged chunks not satisfying the SFA
property, thus causing the variation to be less smooth. We also
verified that the runtime was linear in $k$, fixing the SFA and $m$
(see full version \cite{full-paper}). In general, a linear runtime in
$k$ is not guaranteed since the chunk structure obtained during
merging may not be similar across $k$, for a given SFA and $m$.

\subsection{Inverted Indexing}

We now verify that standard inverted indexing can be made to work on
SFAs. We implement the index as a relational table with a B+-tree on
top of it. More efficient inverted indexing implementations are
possible, and so our results are an upperbound on indexing
performance. However, this prototype serves to illustrate our main
technical point that indexing is possible for such data.

A dictionary of about 60,000 terms from a freely available
dictionary~\cite{dict-website} was converted to a prefix-trie
automaton, and used for index construction.  While parsing the query, we
ascertain if the given regex contains a left-anchor term. If
so, we look up the anchor in the index to obtain the postings, and retrieve
the data to employ query processing on them.

\begin{figure}[hbtp]
\centering
\includegraphics[width=3.3in]{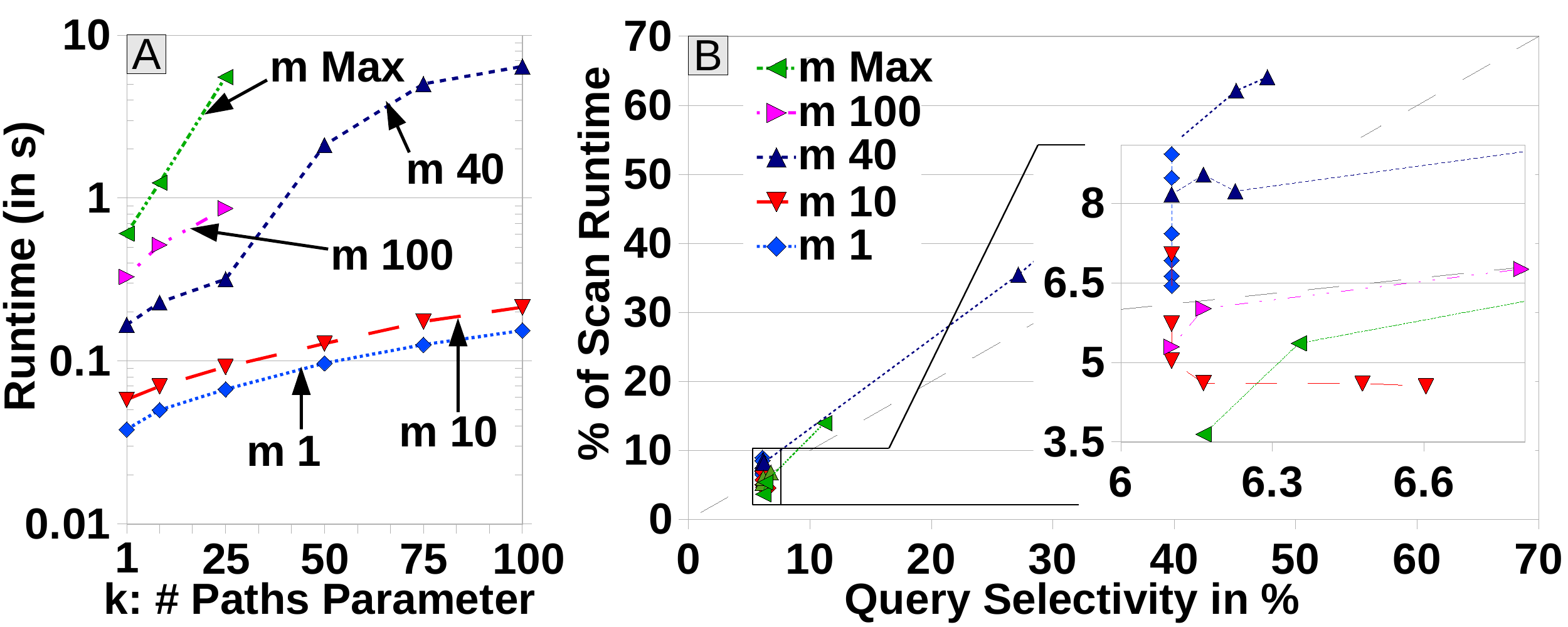}
\caption{(A) Total Runtimes, and (B) Fractional Runtimes Vs Selectivity for the query `$Public~Law~(8|9)\backslash d$', using the 
inverted index with the left anchor term `$public$'. Runtimes are in logscale. Recall that $m$ is the number 
of chunks parameter.}
\label{indexed-runtimes}
\end{figure}

Figure \ref{indexed-runtimes} shows the results for a fixed length
left anchored regex on the CA data set that is anchored by a word in
the dictionary (here, `Public'). We omit some combinations
($m=100,\mathrm{Max~and~}k=50,75,100$) since their indexes had nearly
100\% selectivity for all queries that we consider, rendering them
useless.  The first plot shows the sensitivity of the total runtimes
to $m$ and $k$. Mostly, there is a linear trend with $k$, except for a
spike at $m=40, k=50$. To understand this behavior, we plot the
runtime, as a percentage of the filescan runtime, against the
selectivity of the term in the index.  Ideally, the points should lie
on the $Y=X$ line, or slightly above it. For the lowest values of $m$
and $k$, the relative speedup is slightly lowered by the index lookup
overhead. But as $k$ increases, the query processing dominates, and
hence the speedup improves, though selectivity changes only slightly.
For higher $m$, the projection overhead lowers the speedup, and as $k$
goes up, the selectivity shoots up, increasing the runtime. Overall,
we see that dictionary-based indexing provides substantial speedups in
many cases.

\subsection{Scalability}

To understand the feasibility of our approaches on larger amounts of
data, we now study how the runtimes scale with increasing dataset
sizes.  We use a set of 8 scanned books from Google Books
\cite{googlebooks} and use OCRopus to obtain the SFAs. The total size
of the SFA dataset is 100 GB.

\begin{figure}[hbtp]
\centering
\includegraphics[width=3.2in]{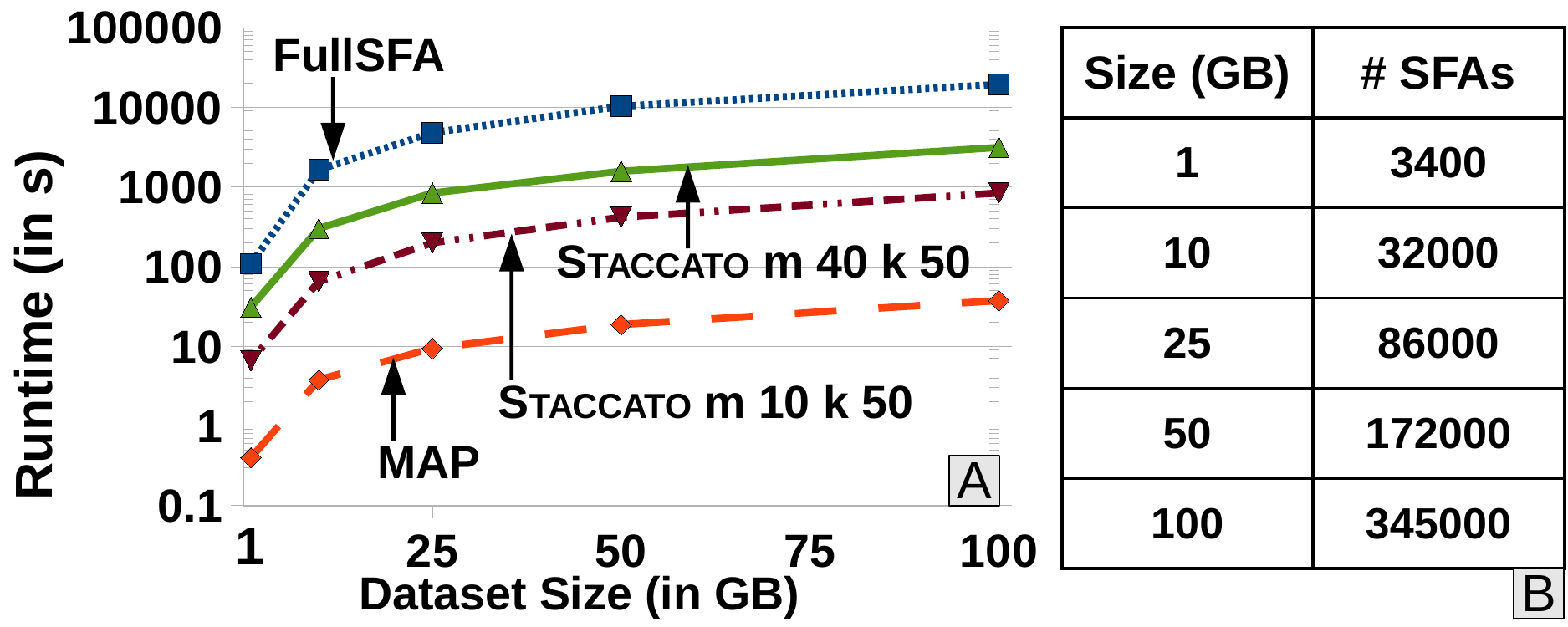}
\caption{(A) Filescan runtimes (logscale) against the dataset size for MAP, 
FullSFA and \name with two parameter settings. (B) Number of SFAs
in the respective datasets. 
} 
\label{scalability}
\end{figure}

Figure \ref{scalability} shows the scalability results for a regex
query.  The filescans for FullSFA, MAP and \name all scale linearly in
the dataset size. Overall, the filescan runtimes are in the order a
few hours for FullSFA. The runtimes are one to two orders of magnitude
lower for \name, depending on the parameters, and about three orders
of magnitude lower for MAP. We also verified that indexing over this
data provides further speedup (subject to query selectivity) as shown
before. One can speedup query answering in all of the approaches by
partitioning the dataset across multiple machines (or even using
multiple disks). Thus, to scale to much larger corpora (say, millions
of books), we plan to investigate the use of parallel data processing
frameworks to attack this problem.

\eat{

To understand the feasibility of our approaches on larger amounts of data, we
now study how the runtimes scale with increasing dataset sizes. 
For this study, we used a collection of about 200 pages from the UNLV dataset
\cite{unlv}. Figure \ref{scalability} shows the results for an
anchored regex query.  Clearly, the filescans for FullSFA, MAP and
\name are all are linear in the dataset size. Overall, the filescan
runtimes are in the order of a few minutes for the entire dataset.
Also, the indexed access for \name with $m=10 ,k=100$ gives an order of
magnitude speedup, subject to a selectivity of about 5-7\% in the
index here across the dataset sizes.  However, the indexed access for
$m=40,k=50$ does not help much, since the selectivity of the
anchor term in the index here was 58-69\%.  For much larger datasets
(millions of documents), even the simple MAP approach cannot scale on
one machine. As our goal is to eventually index huge corpora, we plan
to investigate the use of parallel data processing frameworks to
attack this problem.
}

\subsection{Automated Parameter Tuning}
\label{expt:tuning}
We now empirically demonstrate the parameter tuning method on a labeled 
set of 1590 SFAs (from the CA dataset), and a set of 5 queries (both keywords and regular 
expressions). The size constraint is chosen as 10\% and the recall constraint 
is chosen as 0.9. We use increments of 5 
for both $m$ and $k$. Based on the tuning method described in Section \ref{sec:analysis}, we obtain the following size 
equation: $20mk + 58k = 45540$, and the resultant parameter estimates of 
$m = 45,~ k = 45$, with a recall of 0.91. We then performed an exhaustive search 
on the parameter space to obtain the optimal values subject to the same constraints.
Figure \ref{fig:exhaustive} shows the surface plots 
of the size and the recall obtained by varying $m$ and $k$. The optimal 
values obtained are: $m = 35,~ k = 80$, again with a recall of 0.91. 
The difference in the parameter values arises primarily because the tuning method
overestimated the size at this location. Nevertheless, we see that the tuning method 
provides parameter estimates satisfying the user requirements.

\begin{figure}[hbtp]
\centering
\vspace{0.01in}
\includegraphics[width=3.3in]{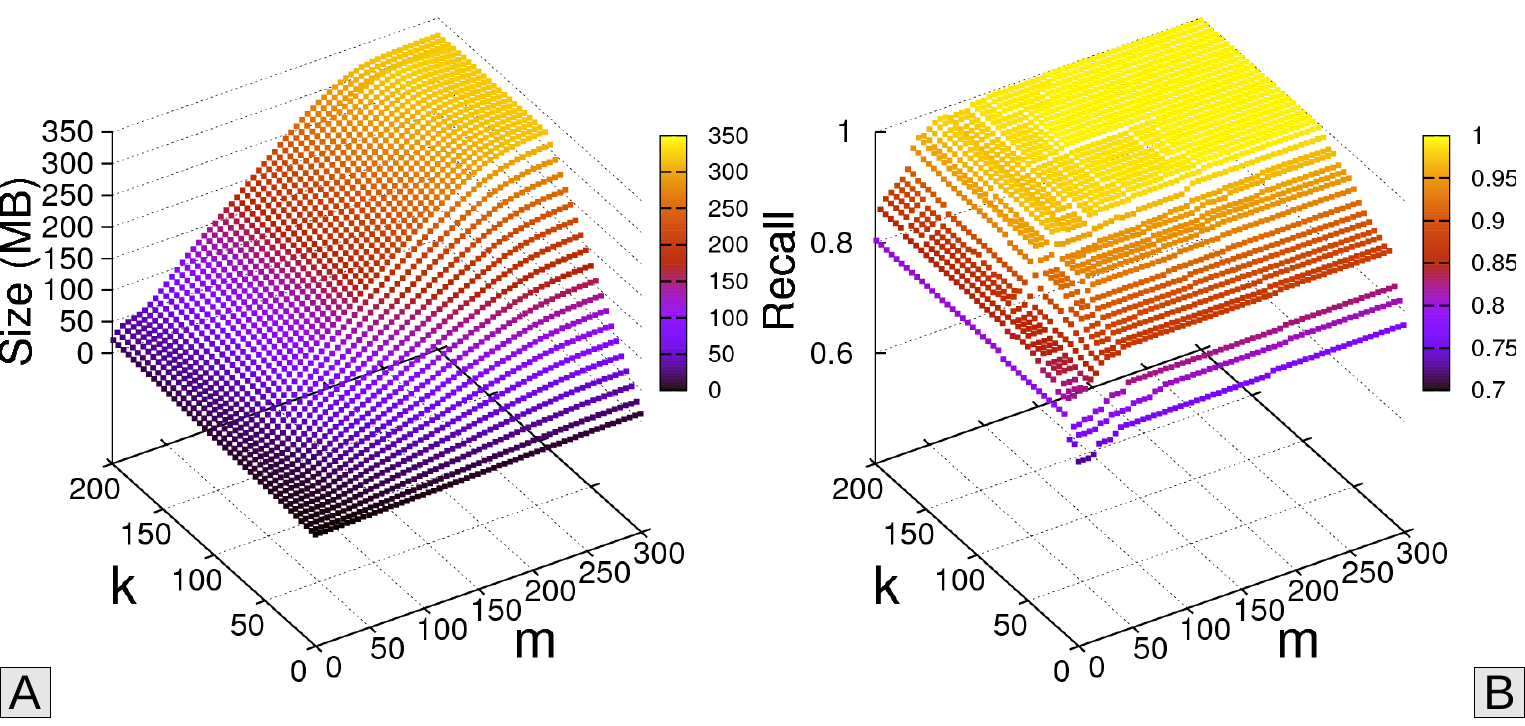}
\caption{
3-D plots showing the variation of (A) the 
size of the approximated dataset (in MB), and 
(B) the average recall obtained. Recall that $m$ is the number of chunks 
parameter and $k$ is the number of paths parameter.
}
\label{fig:exhaustive}
\end{figure}

\section{Related Work}
\label{sec:related_work}

Transducers are widely used in the OCR and speech
communities~\cite{mohri-transducers, mohri-composition} and mature
open-source tools exist to process in-memory transducers
\cite{mohri-openfst}. For example we use a popular open-source tool,
OCRopus \cite{ocropus}, from Google Books that provides well-trained
language models and outputs transducers. See
Mohri~et~al.~\cite{mohri-transducers} for a discussion of why
transducers are well-suited to represent the uncertainty for OCR. In
the same work, Mohri~et~al. also describe speech data. We experimented
with speech data, but we were hampered by the lack of high quality
open-source speech recognizer toolkits. Using the available toolkits,
we found that the language quality from open source speech recognizers
is substantially below commercial quality.

The Lahar system \cite{re-query, julie-approx} manages Hidden Markov
Models (HMMs) as \textit{Markovian streams} inside an RDBMS and allows
querying them with SQL-like semantics.  In contrast to an HMM
\cite{hmm} that requires that all strings be of the same length,
transducers are able to encode strings of different lengths. This is
useful in OCR, since identifying spaces between words is difficult,
and this uncertainty is captured by the branching in the
SFA~\cite{mohri-transducers}. Our work drew inspiration from the
empirical study of work of approximation trade-offs from Letchner et
al.~\cite{julie-approx}. Directly relevant to this work is the recent
theoretical results of Kimelfeld and R{\'e} \cite{re-transducers}, who
studied the problem of evaluating transducers as queries over
uncertain sequence data modeled using Hidden Markov Models
\cite{re-query, hmm}. \name represents both the data and query by
transducers which simplifies the engineering of our system.

Transducers are a graphical representation of probability models which
makes them related to graphical models. Graphical models have been a
hot topic in the database research community. Kanagal et
al.~\cite{amol-query} handle general graphical models. Wang et
al.~\cite{daisy-bayes} also process Conditional Random Fields (CRFs)
\cite{crf}. Though transducers can be viewed as a specialized directed
graphical model, the primary focus of our work here is on the
application of transducers to OCR in the domain of content management
and the approximations that are critical to achieve good performance.
However, our work is similar in spirit to these in that we too want to
enable SQL-like querying of probabilistic OCR data inside an RDBMS.

Probabilistic graphical models have been successfully applied to
various kinds of sequential data including OCR \cite{hmm-ocr}, 
RFID \cite{re-query}, speech \cite{hmm-speech}, 
etc.  Various models have been studied in both
the machine learning and data management communities \cite{book-cowell, book-learning,
  amol-query, re-query, daisy-bayes}.

Many approximation schemes for probabilistic models have been studied
~\cite{approx-ai, julie-approx}. We built on the technique
$k$-MAP~\cite{murphy}, which is particularly relevant to us.
Essentially, the idea is to infer the top $k$ most likely results from
the model and keep only those around. Another popular type of
approximation is based on {\em mean-field theory}, where the intuition
is that we replace complex dependencies (say in a graphical model)
with their average (in some sense)~\cite{wainwright}. Both mean-field
theory and our approach share a common formal framework: minimizing
KL-divergence. For a good overview of various probabilistic graphical
models, approximation and inference techniques, we refer the reader to
the excellent book by Wainwright and Jordan~\cite{wainwright}.

Gupta and Sarawagi~\cite{guptasarawagi} devise efficient 
approximation schemes to represent the outputs of a CRF, viz., labeled 
segmentations of text, in a probabilistic database. They partition
the space of segmentations (i.e., the outputs) using boolean constraints on the output 
segment labels, and then structurally merge the partitions to a pre-defined count using 
Expectation Maximization, without any enumeration. Thus, their final partitions are
disjoint sets of full-row outputs (`horizontally' partitioned). 
Both their approach and \name use 
KL-divergence to measure the goodness of approximation. 
However, \name is
different in that we partition the underlying structure of the
model (`vertically' partitioned). 
They also consider soft-partitioning approaches to overcome
the limitations of disjoint partitioning. It is interesting future work to adapt such 
ideas for our problem, and compare with \name's approach.

\eat{
Gupta et al.~\cite{DBLP:conf/vldb/GuptaS06} devise efficient 
approximation schemes to succintly represent the outputs of a CRF, viz., labeled 
segmentations of text, in a probabilistic database. They discuss an
algorithm that captures both row-level and column-level uncertainty based on the 
structural properties of CRFs. The algorithm partitions the space of
segmentations using boolean constraints on the segmentations, learns mixture models, and 
structurally merges the partitions to a pre-defined count using Expectation 
Maximization, without any enumeration. 
While their strategy is applicable to CRFs for information extraction, which rely on 
a native schema, viz., the segmentation labels, and output a sequence of labels, it 
is not immediately applicable to FSTs for OCR. Their structural approximation uses
the schema and the discrete boundaries of the input, thereby achieving partitions 
of full row outputs. 
In our case, the input is a continuous image, with the outputs being full strings. 
Without a discrete space on the inputs, it is not clear how such partitions on the 
set of output strings can be constructed, let alone be optimized or approximated. Hence,
the only option we have is to use the output strings as the sole basis for partitioning.
But in that case, the sets of full output strings would either have a lot of inter-string
redundancy in order to retain maximum probability mass (similar to $k$-MAP) or skip 
many of the high-probability strings, thereby lowering the quality of the approximation.
Like Gupta-Sarawagi, we use a standard statistical measure to judge the goodness of 
an approximation, viz., KL-divergence.
}

Probabilistic databases have been studied in several recent projects
(e.g., ORION~\cite{orion},Trio~\cite{anish-icde},
MystiQ~\cite{DBLP:conf/vldb/DalviS04},
Sprout~\cite{DBLP:conf/icde/OlteanuHK09}, and
MayBMS~\cite{olteanu-maybms}). Our work is complementary to these efforts: the queries
we consider can produce probabilistic data that can be ingested by
many of the above systems, while the above systems focus on querying
restricted models (e.g., U-Relations or BIDs). We also use model-based
views \cite{amol-mauvedb} to expose the results of query-time
inference over the OCR transducers to applications.

The OCR, speech and IR communities have explored error correction
techniques as well as approximate retrieval schemes \cite{inquery,
  ocr-ngrams, ocr-book}. However, prior work primarily
focus on keyword search over plain-text transcriptions.  \name can
benefit from these approaches and is orthogonal to our goal of
integrating OCR data into an RDBMS. In contrast, we advocate retaining
the uncertainty in the transcription.

Many authors have explored indexing techniques for probabilistic data
\cite{julie-index, amol-index, upi-index, singh-index}.  Letchner et al.~\cite{julie-index} design new indexes
for RFID data stored in an RDBMS as Markovian streams. Kanagal et
al.~\cite{amol-index} consider indexing correlated probabilistic
streams using tree partitioning algorithms and describe a new
technique called \textit{shortcut potentials} to speedup query
answering.  Kimura et al.~\cite{upi-index} propose a new
\textit{uncertain primary index} that clusters heap files according to
uncertain attributes. 
Singh et al.~\cite{singh-index} consider indexing categorical data and
propose an R-tree based index as well as a probabilistic inverted
index. Our work focuses on the challenges that content models like OCR
raise for integrating indexing with an RDBMS.

\section{Conclusion and Future Work}

We present our prototype system, \name, that integrates
a probabilistic model for OCR into an RDBMS. 
We demonstrated that it is possible to
devise an approximation scheme that trades query runtime performance
for result quality (in particular, increased recall). The technical
contributions are a novel approximation scheme and a formal analysis
of this scheme. Additionally, we showed how to adapt standard
text-indexing schemes to OCR data, while retaining more answers.

Our future work is in two main directions. Firstly, we aim to
extend \name to handle larger data sets and more sophisticated
querying (e.g., using aggregation with a probabilistic RDBMS, 
sophisticated indexing, parallel processing etc.). 
Secondly, we aim to extend our techniques to more
types of content-management data such as speech
transcription data. Interestingly, transducers provide a unifying
formal framework for both transcription processes. Our initial
experiments with speech data suggest that similar approximations
techniques may be useful. This direction is particularly exciting to
us: it is a first step towards unifying RDBMS and content-management
systems, two multibillion dollar industries.

\submissionversion{\balance}

\vspace{0.25in}

\noindent
{\Large \bf Acknowledgments} \\

\noindent 
This work is supported by the National Science Foundation 
under IIS-1054009, the Office of Naval Research under N000141210041
and the Microsoft Jim Gray Systems Lab. 
Any opinions, findings, conclusions or recommendations 
expressed in this work are those of the authors and 
do not necessarily reflect the views of 
the US government or Microsoft. 
The authors also thank the anonymous PVLDB reviewers as well
as Jignesh Patel and Benny Kimelfeld
for their valuable feedback on an earlier version of this paper.

\bibliographystyle{plain} 
\bibliography{HazyOCR_VLDB2012_full.bib}

\begin{thebibliography}{10}

\bibitem{murphy}
{A Brief Introduction to Graphical Models and Bayesian Networks}.
\newblock http://www.cs.ubc.ca/~murphyk/Bayes/bayes.html.

\bibitem{condor}
Condor high-throughput computing system.
\newblock http://www.cs.wisc.edu/condor/.

\bibitem{cmswire}
{Content Management Systems}.
\newblock http://www.cmswire.com/.

\bibitem{dict-website}
{Corncob List}.
\newblock http://www.mieliestronk.com/wordlist.html.

\bibitem{arun1}
{Digital humanities by UW's Prof. Witmore}.
\newblock http://winedarksea.org.

\bibitem{exper:site}
{ExperVision Inc.}
\newblock http://www.expervision.com/.

\bibitem{googlebooks}
{Google Books}.
\newblock http://books.google.com/.

\bibitem{ocropus}
{OCRopus open source OCR system}.
\newblock http://code.google.com/p/ocropus.

\bibitem{hathi}
{The Hathi Trust}.
\newblock http://www.hathitrust.org/.

\bibitem{jstor}
{The JSTOR Archive}.
\newblock http://www.jstor.org/.

\bibitem{mohri-composition}
Cyril Allauzen, Mehryar Mohri, and Murat Saraclar.
\newblock General indexation of weighted automata - application to spoken
  utterance retrieval.
\newblock In {\em Workshop on Interdisciplinary Approaches to Speech Indexing
  and Retrieval (HLT/NAACL)}, pages 33--40, 2004.

\bibitem{mohri-openfst}
Cyril Allauzen, Michael Riley, Johan Schalkwyk, Wojciech Skut, and Mehryar
  Mohri.
\newblock Openfst: A general and efficient weighted finite-state transducer
  library.
\newblock In {\em CIAA}, pages 11--23, 2007.

\bibitem{olteanu-maybms}
Lyublena Antova, Christoph Koch, and Dan Olteanu.
\newblock Maybms: Managing incomplete information with probabilistic world-set
  decompositions.
\newblock In {\em ICDE}, pages 1479--1480, 2007.

\bibitem{classical2}
Ricardo~A. Baeza-Yates and Berthier Ribeiro-Neto.
\newblock {\em Modern Information Retrieval}.
\newblock Addison-Wesley Longman Publishing Co., Inc., 1999.

\bibitem{bishop}
Christopher~M. Bishop.
\newblock {\em Pattern Recognition and Machine Learning}.
\newblock Springer-Verlag New York, Inc., 2006.

\bibitem{DBLP:journals/ipl/BlondelT97}
Vincent~D. Blondel and John~N. Tsitsiklis.
\newblock When is a pair of matrices mortal?
\newblock {\em Inf. Process. Lett.}, 63(5), 1997.

\bibitem{DBLP:journals/mst/BournezB02}
Olivier Bournez and Michael~S. Branicky.
\newblock The mortality problem for matrices of low dimensions.
\newblock {\em Theory of Computing Systems}, 2002.

\bibitem{inquery}
James Callan, W.~Bruce Croft, and Stephen~M. Harding.
\newblock The inquery retrieval system.
\newblock In {\em DEXA}, pages 78--83, 1992.

\bibitem{hmm-ocr}
M.~Y. Chen, A.~Kundu, and J.~Zhou.
\newblock Off-line handwritten word recognition using a hidden markov model
  type stochastic network.
\newblock {\em Pattern Anal. Mach. Intell.}, 16:481--496, May 1994.

\bibitem{orion}
Reynold Cheng, Dmitri~V. Kalashnikov, and Sunil Prabhakar.
\newblock Evaluating probabilistic queries over imprecise data.
\newblock In {\em SIGMOD}, pages 551--562, 2003.

\bibitem{regexp}
Junghoo Cho and Sridhar Rajagopalan.
\newblock A fast regular expression indexing engine.
\newblock In {\em ICDE}, pages 419--430, 2001.

\bibitem{dict-technique}
Richard Cole, Lee-Ad Gottlieb, and Moshe Lewenstein.
\newblock Dictionary matching and indexing with errors and don't cares.
\newblock In {\em STOC}, pages 91--100, 2004.

\bibitem{book-cowell}
Robert~G. Cowell, A.~Philip Dawid, Steffen~L. Lauritzen, and David~J.
  Spiegelhalter.
\newblock {\em Probabilistic Networks and Expert Systems: Exact Computational
  Methods for Bayesian Networks}.
\newblock Springer, 2007.

\bibitem{DBLP:conf/vldb/DalviS04}
Nilesh~N. Dalvi and Dan Suciu.
\newblock Efficient query evaluation on probabilistic databases.
\newblock In {\em VLDB}, pages 864--875, 2004.

\bibitem{amol-mauvedb}
Amol Deshpande and Samuel Madden.
\newblock Mauvedb: supporting model-based user views in database systems.
\newblock In {\em SIGMOD}, pages 73--84, 2006.

\bibitem{viterbi}
Jr. Forney, G.D.
\newblock The viterbi algorithm.
\newblock {\em Proc. IEEE}, 61:268--278, 1973.

\bibitem{guptasarawagi}
Rahul Gupta and Sunita Sarawagi.
\newblock Creating probabilistic databases from information extraction models.
\newblock In {\em VLDB}, pages 965--976, 2006.

\bibitem{ocr-ngrams}
S.M. Harding, W.~B. Croft, and C.~Weir.
\newblock Probabilistic retrieval of ocr degraded text using n-grams.
\newblock In {\em ECDL}, pages 345--359, 1997.

\bibitem{hopcroft}
John~E. Hopcroft, Rajeev Motwani, and Jeffrey~D. Ullman.
\newblock {\em Introduction to Automata Theory, Languages, and Computation (3rd
  Edition)}.
\newblock Addison-Wesley Longman Publishing Co., Inc., 2006.

\bibitem{approx-ai}
F.V. Jensen and S.K. Andersen.
\newblock Approx. in bayesian belief universes for knowledge-based systems.
\newblock In {\em UAI}, pages 162--169, 1990.

\bibitem{book-learning}
M.~I. Jordan.
\newblock {\em Learning in graphical models.}
\newblock MIT Press, 1999.

\bibitem{amol-query}
Bhargav Kanagal and Amol Deshpande.
\newblock Efficient query evaluation over temporally correlated probabilistic
  streams.
\newblock In {\em ICDE}, pages 1315--1318, 2009.

\bibitem{amol-index}
Bhargav Kanagal and Amol Deshpande.
\newblock Indexing correlated probabilistic databases.
\newblock In {\em SIGMOD}, pages 455--468, 2009.

\bibitem{re-transducers}
Benny Kimelfeld and Christopher R{\'e}.
\newblock Transducing markov sequences.
\newblock In {\em PODS}, pages 15--26, 2010.

\bibitem{upi-index}
Hideaki Kimura, Samuel Madden, and Stanley~B. Zdonik.
\newblock Upi: A primary index for uncertain databases.
\newblock {\em PVLDB}, 3(1):630--637, 2010.

\bibitem{full-paper}
Arun Kumar and Christopher R\'{e}.
\newblock Probabilistic management of ocr data using an rdbms.
\newblock {\em UW-CS-Technical Report}, 2011.
\newblock available from
  \url{http://www.cs.wisc.edu/hazy/staccato/papers/HazyOCR_TR.pdf}.

\bibitem{crf}
John Lafferty.
\newblock Conditional random fields: Probabilistic models for segmenting and
  labeling sequence data.
\newblock Morgan Kaufmann, 2001.

\bibitem{julie-index}
Julie Letchner, Christopher R{\'e}, Magdalena Balazinska, and Matthai
  Philipose.
\newblock Access methods for markovian streams.
\newblock In {\em ICDE}, pages 246--257, 2009.

\bibitem{julie-approx}
Julie Letchner, Christopher R{\'e}, Magdalena Balazinska, and Matthai
  Philipose.
\newblock Approximation trade-offs in markovian stream processing: An empirical
  study.
\newblock In {\em ICDE}, pages 936--939, 2010.

\bibitem{hmm-speech}
S.~E. Levinson, L.~R. Rabiner, and M.~M. Sondhi.
\newblock An introduction to the application of the theory of probabilistic
  functions of a markov process to automatic speech recognition.
\newblock {\em Bell Systems Technical Journal}, 62:1035--1074, 1983.

\bibitem{mohri-transducers}
Mehryar Mohri.
\newblock Finite-state transducers in language and speech processing.
\newblock {\em Computational Linguistics}, 23(2):269--311, 1997.

\bibitem{ocr-book}
Shunji Mori, Hirobumi Nishida, and Hiromitsu Yamada.
\newblock {\em Optical character recognition}.
\newblock John Wiley \& Sons, Inc., 1999.

\bibitem{DBLP:conf/icde/OlteanuHK09}
Dan Olteanu, Jiewen Huang, and Christoph Koch.
\newblock Sprout: Lazy vs. eager query plans for tuple-independent
  probabilistic databases.
\newblock In {\em ICDE}, pages 640--651, 2009.

\bibitem{hmm}
Lawrence~R. Rabiner.
\newblock A tutorial on hidden markov models and selected applications in
  speech recognition.
\newblock In {\em Proc. of IEEE}, pages 257--286, 1989.

\bibitem{re-query}
Christopher R{\'e}, Julie Letchner, Magdalena Balazinska, and Dan Suciu.
\newblock Event queries on correlated probabilistic streams.
\newblock In {\em SIGMOD}, pages 715--728, 2008.

\bibitem{re-lineage}
Christopher R{\'e} and Dan Suciu.
\newblock Approximate lineage for probabilistic databases.
\newblock {\em PVLDB}, 1(1):797--808, 2008.

\bibitem{anish-icde}
Anish~Das Sarma, Omar Benjelloun, Alon Halevy, and Jennifer Widom.
\newblock Working models for uncertain data.
\newblock ICDE, pages 7--18, 2006.

\bibitem{DBLP:conf/icde/SarmaTW08}
Anish~Das Sarma, Martin Theobald, and Jennifer Widom.
\newblock Exploiting lineage for confidence computation in uncertain and
  probabilistic databases.
\newblock In {\em ICDE}, pages 1023--1032, 2008.

\bibitem{singh-index}
Sarvjeet Singh, Chris Mayfield, Sunil Prabhakar, Rahul Shah, and Susanne
  Hambrusch.
\newblock Indexing uncertain categorical data.
\newblock In {\em ICDE}, pages 616--625, 2007.

\bibitem{turakainen:1969}
Paavo Turakainen.
\newblock Generalized automata and stochastic languages.
\newblock {\em Proc. of American Mathematical Society}, 21(2), 1969.

\bibitem{wainwright}
Martin~J. Wainwright and Michael~I. Jordan.
\newblock Graphical models, exponential families, and variational inference.
\newblock {\em Foundations and Trends of Machine Learning}, 1, 2008.

\bibitem{daisy-bayes}
Daisy~Zhe Wang, Eirinaios Michelakis, Minos~N. Garofalakis, and Joseph~M.
  Hellerstein.
\newblock Bayesstore: managing large, uncertain data repositories with
  probabilistic graphical models.
\newblock {\em PVLDB}, 1(1):340--351, 2008.

\bibitem{trio}
Jennifer Widom.
\newblock Trio: A system for integrated management of data, accuracy, and
  lineage.
\newblock In {\em CIDR}, pages 262--276, 2005.

\bibitem{yen}
Jin~Y. Yen.
\newblock Finding the k shortest loopless paths in a network.
\newblock In {\em Management Science}, 1971.

\bibitem{classical1}
Justin Zobel, Alistair Moffat, and Ron Sacks-davis.
\newblock An efficient indexing technique for full-text database systems.
\newblock In {\em VLDB}, 1992.

\bibitem{map-relax}
Argyrios Zymnis, Stephen Boyd, and Dimitry Gorinevsky.
\newblock Relaxed maximum a posteriori fault identification.
\newblock {\em Signal Process.}, 89, June 2009.

\end{thebibliography}

\appendix
\section{Finite State Transducers}
As mentioned in Section \ref{sec:prelim}, the formal model used by \name 
to encode the uncertainty information in OCR data is the Finite
State Transducer (FST). A transducer is an automaton that converts 
(\textit{tranduces}) strings from an input alphabet to an output alphabet. 
We can view a transducer as an SFA that both reads and emits characters 
on its transitions. 
Formally, we fix an input alphabet $\Gamma$ and an output alphabet $\Sigma$.
An FST $S$ over $\Gamma$ and $\Sigma$ is a tuple $S=(V,E,s,
f,\delta)$ where $V$ is a set of nodes, $E \subseteq V \times V$ is a
set of edges such that $(V,E)$ is a directed acyclic graph, and $s$
(resp. $f$) is a distinguished start (resp. final) node (state). Each
edge has finitely many arcs. The function $\delta$ is a stochastic
transition function, i.e.,
\[ \delta : E \times \Gamma \times \Sigma \to [0,1] \text{ s.t.} \sum_{\stackrel{y: (x,y) \in E}{\gamma \in \Gamma,\sigma \in \Sigma}} 
\delta((x,y),\gamma,\sigma) = 1 \quad  \forall x \in V \]
In essence, $\delta( e, \gamma, \sigma)$, where $e = (x,y)$, is the conditional
probability of transitioning from $x \to y$, reading $\gamma$ and emitting $\sigma$. 
In OCR, the input alphabet is an encoding of the
location of the character glyphs in the image, while the output alphabet is the 
set of ASCII characters.
An FST also defines a discrete probability distribution over 
strings through its outputs. 

\section{Illustrations for FindMinSFA}
We now present more illustrations for the FindMinSFA operation (Section 
\ref{sec:approximation}) in Figure \ref{fig:findminsfamore}. As shown in Algorithm \ref{alg:findminsfa}, 
three cases arise when the given subset of nodes of the SFA $S$ do not
form an SFA by themselves. Firstly, they might not have a unique start node, in which
case their least common ancestor has to be computed (Figure \ref{fig:findminsfamore} (A)).
Secondly, they might not have a unique end node, in which case their greatest common
descendant has to be computed (Figure \ref{fig:findminsfamore} (B). Finally, there could
be an external edge incident upon an internal node of the subset (Figure \ref{fig:findminsfamore} (C)).
In all cases, FindMinSFA outputs a subset of nodes that form a valid SFA, which is then collapsed and
replaced with a single edge in $S$.

\begin{figure*}[ht]
\centering
\includegraphics[width=6.5in]{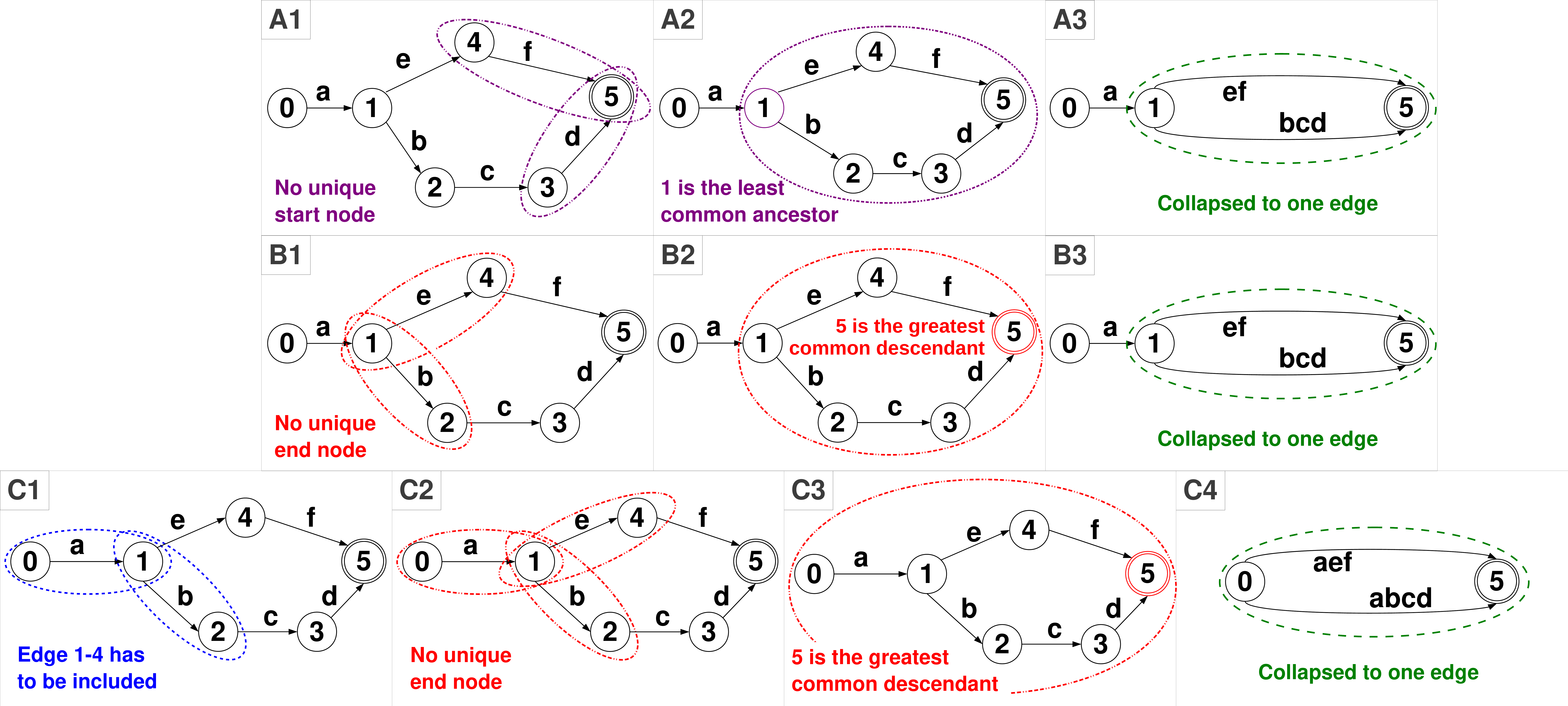}
\caption{
Illustrating FindMinSFA: (A) No unique start node for the set $X=\{3,4,5\}$, 
(B) No unique end node for the set $X=\{1,2,4\}$, and 
(C) Set $X=\{0,1,2\}$ has external edge $1-4$ incident on internal node $1$, and has to be included.
}
\label{fig:findminsfamore}
\end{figure*}

\eat{\subsection{Partitioning SFAs is NP-Hard}
\label{app:partition}

Our claim follows by a simple observation in the proof of Shum and
Trotter's proof that a related problem is
hard~\cite{Shum:1996:CCA:230094.230124}. The {\em chain partition
  problem} is given as input a lattice, determine whether it is
possible to decompose the lattice into chains of length $3$. This
problem is \NP-Hard. The first observation is that an SFA of length
$3$ is either a chain or a triangle (since it is a DAG). We observe
that Shum and Trotter's a gadget reduction is triangle free. Thus, any
partition into length $3$ SFAs is also into length $3$ chains. In our
problem, there is no constraint that every chain must be exactly
length $3$. However, it is clear that if we could minimize the number
of SFAs (and so chains), that the original graph of $3N$ nodes admits
an exact decomposition into $3$ chains if and only if the
decomposition that uses the fewest number of chains contains exactly
$N$ chains. Thus, our minimization problem is at least as hard.
}

\section{Conditional is a $\mathsf{KL}$ Minimizer}
\label{app:kl}

$\mathsf{KL}$ divergence is similar
  to a distance metric in that it allows us to say whether or not two
  probability distributions are close. Given two probability
  distribution $\mu,\nu : \Sigma^{*} \to [0,1]$ the $\KL$-divergence
  is denoted $\KL(\mu || \nu)$ and is defined as:
\[ \KL(\mu || \nu) = \sum_{\sigma \in \Sigma^{*}} \mu(x) \log \frac{\mu(x)}{\nu(x)} \]
The above quantity is only defined for $\mu,\nu$ such that $\mu(x) >
0$ implies that $\nu(x) > 0$. If $\mu = \nu$ then $\KL(\mu || \nu) =
0$.

We justify our choice to retain the probability of each string we
select by showing that it is in fact a minimizer for a common
information theoretic measure, $\mathsf{KL}$-divergence. Given a
probability distribution $\mu$ on $\Sigma^{*}$ and a set $X \subseteq
\Sigma^{*}$, let $\mu_{|X}$ denote the result of conditioning $\mu$ on
$X$. Let $A$ be the set of all distributions on $X$. Then,
\begin{equation}
 \KL( \mu_{|X} || \mu) \leq \min_{\alpha \in A} \KL( \alpha ||
\mu) 
\label{eq:propto}
\end{equation}
That is, selecting the probabilities according to the conditional
probability distribution is optimal with respect to $\KL$
divergence. Eq.~\ref{eq:propto} follows from the observation that
$\KL(\mu_{|X} || \mu) = - \log Z$ where $Z = \sum_{x \in X}
\mu(x)$. Using the log-sum inequality one has
\[ \sum_{x \in X} \alpha(x) \log \frac{\alpha(x)}{\mu(x)} \geq 
\left(\sum_{x \in X} \alpha(x)\right) \log \frac{ \left(\sum_{x \in X}
    \alpha(x) \right)}{ \left(\sum_{x \in X} \mu(x) \right) }
= - \log Z \]

\section{$\delta_k$ is a minimizer (Proposition 3.1)}

There are two observations. The first is that by normalization, since
the probability of every string is simply proportional to its
probability in $\Pr_{[\delta]}$ then the $\mathsf{KL}$ divergence is
  inversely proportional to the probability mass retained. Thus, the
  minimizer must retain as much probability mass as possible. The
  second observation is the following: consider any chunk $(S_i,s,f)$
  where $s$ is the single start state and $f$ is the final state. By
  construction, every path that uses a character from $S_i$ must
  enter through $s$ and leave through $f$. And the higher probability
  that we place in that state, the higher the retained mass. Since
  $\delta_k$ retains the highest probability in each segment, it is
  indeed the minimizer.

\section{Proof of Theorem~\protect\ref{THM:HARD}}
 
The starting point is that the following problem is \NP-hard: Given 
vectors $x,y \in \mathbb{Q}^{l}$ and a fixed constant $\lambda \geq 0$
for $l=4$ and sets of stochastic matrices
$\mathcal{S}_1,\dots,\mathcal{S}_N$ where each $\mathcal{S}_i$ is a
set of $2$ $l \times l$ matrices, determine if there is a sequence
$\bar i \in \set{1,2}^{N}$ such that $M_{i_j} \in \mathcal{S}_j$ and:
\[ x^{T} M_{i_N} \cdots M_{i_1} y \geq \lambda \]

We find a small $l$ such that the claim holds. For this, we start with
the results of Bournez and Branicky who show that a related problem called
the {\em Matrix Mortality problem} is \NP-hard for matrices of size $2
\times 2$~\cite{DBLP:journals/mst/BournezB02}, where we ask for to
find a selection as above where $x^T M_{i_N} \cdots M_{i_1} y =
0$. Unfortunately, the matrices $(M_{ij})$ are not stochastic (not
even positive). However, using the techniques of Turakainen
\cite{turakainen:1969} (and
Blondel~\cite{DBLP:journals/ipl/BlondelT97}), we can transform the
matrices into slightly larger, but still constant dimensions,
stochastic matrices ($l =4$).

Now, we construct a transducer and chunk structure $\Phi$ such that if it is possible to choose at most $k=2$ in each
chunk with the total probability mass being greater than $\lambda 2^{-N}$, 
then we can get a choice for $\bar i$. Equally, if there exists such a choice 
for $\bar i$, then we can find such a transducer representation.
So the problem of
finding the highest mass representation is \NP-hard as well.

\eat{\begin{figure}
\centering
\includegraphics[width=3in]{images/gadgets3}
\caption{Overall Reduction. We create one chunk for each of the $N$
  sets $\mathcal{S}_i$.}
\label{fig:overall:gadget}
\end{figure}
}
Throughout this reduction, we assume that every edge is assigned a
unique character to ensure the unique path property. It is
straightforward to optimize for a binary alphabet: simply add replace
each character a sequence of edges with a binary encoding (then make
this one chunk). So, we omit
the emitted string in the transition function.

Let $P(x)$ denote the probability mass that a string is emitted that
passes through the node $x$. We will group nodes together as
components of a vector. The start node $s$ has $P(s) = 1$. Then, we
construct the vector $y$ by creating nodes $v_1,\dots,v_l$ with a
transition $\delta((s,v_i), 0) = y_i$. Thus, $P(v_i) = y_i$. We need two
main gadgets: (1) A gadget to encode matrix multiplication and (2) a
gadget that intuitively encodes that given two inputs, we can select
one or the other, but not both. We provide a slightly weaker property:
For a fixed parameter $\alpha \geq 0$ (we pick $\alpha$ below). We
construct a gadget that takes as input two nodes $x, x'$ and has
output two nodes $u, u'$ such that the probability at $u$ ($P(u)$) and
at $u'$ ($P(u')$) satisfies the following weak-exclusivity:
\begin{equation}
(P(u),P(u')) = \set{(P(x),0), (0,P(x')), (v,v')} 
\label{eq:exclusive}
\end{equation}
\noindent where $v \leq \alpha P(x)$ and $v' \leq \alpha
P(x')$. Intuitively, this gadget forces us to choose $x$ or $x'$ or
not both. Notice that if we select both, then for sure the output of
each component is smaller than $\alpha$. 

\begin{figure*}[hbtp]
\centering
\includegraphics[width=6.5in]{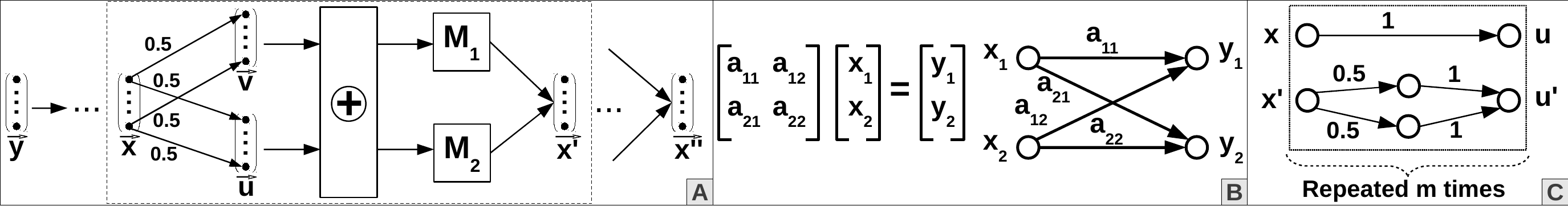}
\caption{Gadgets used in the proof of Theorem 3.1: (A) Overall Reduction. 
We create one block for each of the $N$ sets $\mathcal{S}_i$
(B) Multiply Gadget (C) Binary Exclusive Gadget
}
\label{fig:gadgets}
\end{figure*}

Assuming these gadgets, the overall construction is shown in
Figure~\ref{fig:gadgets}(A) that illustrates a chunk for a single
set $\mathcal{S}_i$. Each chunk contains two matrix multiply gadgets
(representing the two matrices in $\mathcal{S}_i$) and a large gadget
that ensures we either choose one matrix or the other -- but not
elements of both.  The input to the chunk is a vector $x$: in the
first chunk, $x = y$ above. In chunk $j$, $x$ will represent the
result of some of choice $M_{i_{j-1}} \cdots M_{i_{1}} y$. As shown,
for each $i=1,\dots,l$, we send $x_i$ to $v_i$ with probability $0.5$
and $x_i$ to $u_i$ with probability $0.5$. In turn, $u$ is fed to the
multiply gadget for $M_{1j}$ and $v$ is fed to the multiply gadget for
$M_{2j}$ (with $\delta$ = 1). We ensure that we cannot select both
$v_i$ and $u_j$ for any $i,j$ using the exclusive gadget described
below. The output of this chunk is either $0.5M_{i1}x$ or $0.5
M_{i2}x$ or a vector with $\ell_1$ norm smaller than $\alpha$. We set
$\alpha < 2^{-N}\lambda$. Given this, property is clear that given any
solution to the original problem, we can create a solution to this
problem. On the other hand, if the solution with highest probability
mass has mass greater than $2^{-N} \lambda$ then it must be a valid
solution (since we set $\alpha < 2^{-N}\lambda$). Now the gadgets:

\eat{\begin{figure}
\includegraphics[width=3in]{images/gadgets1}
\caption{Multiply Gadget}
\label{fig:multiply}
\end{figure}
}
\paragraph*{The Multiply Gadget}
Matrix multiplication can be encoded via a transducer (see
Fig.~\ref{fig:gadgets}(B)). Notice that the ``outputs'' in the above
gadget have the probability of the matrix multiply for $2\times 2$
matrices. That is, given a matrix $A$ and input nodes $x_1,\dots,x_m$,
the output nodes $y_i$ above are such that $P(y_i) = \sum_{j=0}^{m}
A_{ij} P(y_j)$. Each edge is a single chunk.

\eat{\begin{figure}
\includegraphics[width=3in]{images/gadgets2}
\caption{Binary Exclusive Gadget}
\label{fig:multiply}
\end{figure}
}

\paragraph*{The Exclusive Gadget}
We illustrate the gadget for $k=2$. We have two inputs $x,x'$ and
two outputs $u,u'$. Our goal is to ensure the property described by
Eq.~\ref{eq:exclusive}. The gadget is shown in
Figure~\ref{fig:gadgets}(C). The chunk here contains the entire
gadget, since we can only select $k$ paths, it is clear that each
iteration of the gadget we get the property that:
\[(P(u),P(u')) = \set{(P(x),0), (0,P(x')), (v,v')} \]
where $v \leq 0.5P(x)$ and $v' \leq 0.5 P(x')$. Repeating the gadget
$m$ times (taking $m$ s.t. $2^{-m} \leq 2^{-N}\lambda$ suffices). The property we need
is that we only select one vector or the other -- to ensure this we
simply place ${l \choose 2}$ gadgets: each one says that if $u_i > 0
\implies v_j = 0$ (and vice versa). Observe that the resulting gadget
is polynomial sized. After concatenating the gadgets together, the end
result is either a $2^{-N} M_{i_{N}} \cdots M_{i_i}$ (a valid result)
or its $\ell_1$ norm is smaller than $\lambda 2^{-N}$ (since it messes up
on at least one of the gadgets).

\paragraph*{Proof of Hardness} 
We now complete the proof by showing that a problem, called
\textsc{StocAut} is \NP-hard -- for a fixed size alphabet. Then, we
show how to encode this in the \textsc{Layout} problem, thereby
proving that \textsc{Layout} is \NP-hard for a fixed size $\Sigma$.

The \textsc{GenAut} Generic automaton problem is the following: Given
vectors $x,y \in \mathbb{Q}^{k}$ for some fixed $k$ and sets of
matrices $\mathcal{G}_1,\dots,\mathcal{G}_N$ where each
$\mathcal{G}_i$ for $i=1,\dots,N$ is a set of $k\times k$ matrices,
the goal is to determine if there is a tuple of natural numbers $\bar
i$ such that:
\[ x^{T} M_{i_N} \cdots M_{i_1} y \geq 0 \text{ where } M_{i_j} \in \mathcal{G}_j \]

We are concerned with the related problem \textsc{StocAut} where all
matrices and vectors in the problem are stochastic. A matrix is
stochastic if all entries are positive, and its row sums and column sums are
$1$. A vector is stochastic if each entry is positive and its $1$-norm
(sum of entries) is $1$. In \textsc{StocAut} the condition we want to
check is slightly generalized:

\[ x^{T} M_{i_N} \cdots M_{i_1} y \geq k^{-1} \text{ where } M_{i_j} \in \mathcal{G}_j \]
and $k$ is the dimension of the problem.

\begin{lemma}
For fixed dimension, $k=2$, the \textsc{GenAut} problem is
\NP-Complete in $N$ even if $|\mathcal{G}_i| \leq 2$ for $i=1,\dots,N$.
\end{lemma}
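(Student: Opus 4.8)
The plan is to prove the two halves separately. For membership in $\NP$, I would guess a selection $\bar i \in \{1,2\}^{N}$, form the product $P = M_{i_N}\cdots M_{i_1}$, and test $x^{T}Py \ge 0$. The only thing to verify is that this check is polynomial-time: each entry of a product of $t$ rational matrices of dimension $2$ has numerator and denominator of bit-length $O(t\,(b+1))$, where $b$ bounds the bit-length of the entries of $x$, $y$ and the $M$'s. This follows by the usual induction on $t$, since one matrix multiplication adds only an additive $O(b+1)$ to the bit-length (products multiply denominators and add bit-lengths, summing two such terms costs $O(1)$ more). Hence the verifier runs in time polynomial in $N$ and the input size, and $\textsc{GenAut}$ at $k=2$ is in $\NP$.

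For hardness I would reduce from a layered matrix problem already known to be $\NP$-hard at dimension $2$: given $x,y\in\mathbb{Q}^{2}$ and sets $\mathcal{S}_{1},\dots,\mathcal{S}_{N}$ of at most two $2\times 2$ rational matrices, decide whether some selection $M_{i_j}\in\mathcal{S}_{j}$ satisfies $x^{T}M_{i_N}\cdots M_{i_1}y = 0$. This is exactly the $2\times 2$ chained \emph{matrix mortality} problem of Bournez and Branicky~\cite{DBLP:journals/mst/BournezB02} that we already invoke in the proof of Theorem~\ref{THM:HARD}; it has precisely the shape $\textsc{GenAut}$ requires (one matrix chosen from each of $N$ sets, $|\mathcal{S}_j|\le 2$, fixed dimension), so the only gap is that its acceptance condition is the equality ``$=0$'' rather than the inequality ``$\ge 0$''.

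To bridge that gap I would pass from the scalar $v = x^{T}Py$ to $-v^{2}$: the condition ``$v=0$'' is equivalent to ``$-v^{2}\ge 0$'', while ``$-v^{2}\le 0$'' holds for every selection, so there is a selection with $-v^{2}\ge 0$ iff there is one with $v=0$. Moreover $-v^{2} = -(x^{T}Py)^{2} = -(x\otimes x)^{T}\bigl(M_{i_N}\otimes M_{i_N}\bigr)\cdots\bigl(M_{i_1}\otimes M_{i_1}\bigr)(y\otimes y)$, again a chained bilinear form, now with sets $\mathcal{S}'_{j} = \{\,M\otimes M : M\in\mathcal{S}_{j}\,\}$ and covectors $-(x\otimes x)$ and $(y\otimes y)$; the number of sets is unchanged and $|\mathcal{S}'_j|\le 2$, so this is a legitimate $\textsc{GenAut}$ instance and the reduction is polynomial.

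The one point that needs care — and the main obstacle — is the dimension: the tensoring step produces an instance of dimension $4$, not $2$. To recover the claimed $k=2$ I would either (i) argue that the Bournez--Branicky instances can be normalized so that $x^{T}Py\le 0$ for every selection, in which case ``$\ge 0$'' already coincides with ``$=0$'' and no squaring (hence no dimension increase) is needed; or (ii) give a direct reduction from \textsc{Subset Sum}/\textsc{Partition} that threads the running partial sum through an off-diagonal entry of a $2\times 2$ triangular matrix and picks the final covectors so that $x^{T}Py$ is the negated squared deviation of that sum from the target — the real difficulty being that a literal version of this wants a third coordinate to carry the square, so squeezing the construction into two coordinates is the delicate combinatorial step. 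Failing either economy, one still obtains $\NP$-completeness of $\textsc{GenAut}$ at some fixed constant dimension, which is all the downstream $\textsc{StocAut}$ and $\textsc{Layout}$ reductions actually consume, since the Turakainen/Blondel-style padding used there alters the dimension anyway.
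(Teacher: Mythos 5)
Your \NP-membership argument is fine (the paper does not even spell it out), and your starting point -- the Bournez--Branicky $2\times 2$ chained mortality problem together with the equivalence $v=0 \iff -v^{2}\ge 0$ -- is the same observation the paper opens with. But the proposal does not prove the lemma as stated: the statement fixes the dimension at $k=2$, and the only reduction you actually complete, the Kronecker squaring $-v^{2} = -(x\otimes x)^{T}(M_{i_N}\otimes M_{i_N})\cdots(M_{i_1}\otimes M_{i_1})(y\otimes y)$, produces an instance of dimension $4$. Both of your proposed repairs are left open: (i) the normalization ``$x^{T}Py\le 0$ for every selection'' of the mortality instances is stated as a hope, with no argument; and (ii) in your direct Subset-Sum construction you concede that carrying the square wants a third coordinate, which is exactly the step that would have to be supplied. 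Falling back to ``\NP-complete at some fixed constant dimension'' proves a weaker statement than the one asked, even if, as you correctly note, dimension $4$ is what the downstream \textsc{ZeroAut} and \textsc{StocAut} lemmas ultimately consume.

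The paper stays in dimension $2$ by a different device, and that device is the idea your proposal is missing. It reduces from a subset-product problem (given $s_1,\dots,s_N$ and $b$, is there $S\subseteq[N]$ with $\prod_{i\in S}s_i=b$), encoding the include/exclude choice by $\mathcal{G}_i=\set{S_i,I_2}$ with $S_i=\left(\begin{smallmatrix}1&0\\0&s_i\end{smallmatrix}\right)$, carrying the target through $B=\left(\begin{smallmatrix}1&0\\ b&1\end{smallmatrix}\right)$, and using $H=\left(\begin{smallmatrix}0&1\\0&-1\end{smallmatrix}\right)$, for which $HAH=0$ exactly when $a_{21}=a_{22}$, to convert the equality $\prod_{i\in S}s_i=b$ into an annihilation condition on the chained product. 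Crucially, the squaring trick is then realized \emph{inside the chain and inside dimension $2$}: since $yx^{T}$ is itself a $2\times 2$ matrix, $-(x^{T}\bar A y)^{2}=x^{T}\bar A\,(-yx^{T})\,\bar A\, y$ can be written as a longer chain over $2\times 2$ matrices by inserting the fixed rank-one matrix as a singleton choice set, so no tensor product -- and hence no dimension increase -- is needed; the explicit $H,B,S_i$ gadgets are what make the doubled chain behave consistently. If you want to salvage your route at $k=2$, you would need either that in-chain realization of the square or a completed version of your repair (ii); as written, the proposal establishes hardness only at dimension $4$.
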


\begin{proof}
This is shown by using the observation that $x^{T} \bar A y = 0$ is
\NP-hard (exact sequence mortality
problem~\cite{DBLP:journals/mst/BournezB02}) and that $x^{T} \bar A y
= 0$ if and only if $- x^{T} \bar A y x^{T} \bar A y \geq 0$. We then
observe that $yx^{T}$ is such a matrix. More precisely, the following
problem is \NP-hard: Given a set of values $s_1, \dots, s_N$ is there
a set $S \subseteq [N]$ such that $\prod_{i=1,\dots,N} s_i = b$ for some
fixed $b$. 
\[ H = \left(\begin{array}{cc} 0 & 1\\0&-1\\ 
\end{array}\right)
\]
It is not hard to check that for any $2 \times 2$ $A$ we have:
\[ HAH = 
\left(\begin{array}{cc}
0 & (a_{21} - a_{22})\\
0 & -(a_{21} - a_{22})\end{array}\right)
\]
so that $HAH = 0$ if and only if $a_{21} = a_{22}$. Then, we create
the following matrices:
\[ S_i = \left(\begin{array}{cc} 1& 0\\0&s(A)\end{array}\right) \text{ and } B = \left(\begin{array}{cc} 1& 0\\b&1\end{array}\right) \]
Then, denote by $I_2$ the identity matrix. Then, we set: $\mathcal{G}_0 = \set{HB}$
and $\mathcal{G}_i = \set{S_i,I_2}$ for $i=1,\dots, N$. The vectors are $x =
(1,1)$ and $y = x^{T}$. Then, applying the construction above proves
the claim.
\end{proof}

We now apply Turakainen's technique~\cite{turakainen:1969} (we learned
of the technique from Blondel~\cite{DBLP:journals/ipl/BlondelT97}) to
transform the above matrices into slightly larger, but still constant
dimensions, that are positive and then finally stochastic. First, we
define a further restriction \textsc{ZeroAut} which requires that each
matrix row/column sum is zero. We prove that this is still
\NP-complete over slightly larger matrices.

\begin{lemma}
For dimension $k=4$, the \textsc{ZeroAut} problem is \NP-complete.
\end{lemma}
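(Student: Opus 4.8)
The plan is to reduce the dimension-$2$ \textsc{GenAut} problem, just shown \NP-complete, to \textsc{ZeroAut} in dimension $4$ by Turakainen's bordering construction. Given a \textsc{GenAut} instance with rational $2\times 2$ matrices, vectors $x,y\in\mathbb{Q}^{2}$, and threshold $0$, I would replace a generic matrix $A=\bigl(\begin{smallmatrix}a&b\\c&d\end{smallmatrix}\bigr)$ by the $4\times4$ matrix
\[
\hat A = \begin{pmatrix}
0 & -(a+c) & -(b+d) & a+b+c+d \\
0 & a & b & -(a+b) \\
0 & c & d & -(c+d) \\
0 & 0 & 0 & 0
\end{pmatrix},
\]
i.e.\ keep $A$ in the central $2\times2$ block, install a zero first column and a zero last row, and fill the remaining border entries with (minus) the column sums, (minus) the row sums, and the total sum of $A$, chosen precisely so that every row sum and every column sum of $\hat A$ vanishes (a one-line check). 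The vectors become $\hat x=(0,x_1,x_2,0)^{T}$ and $\hat y=(0,y_1,y_2,0)^{T}$, the threshold stays $0$, and $|\mathcal{G}_j|$ is unchanged; the transformation is plainly polynomial time.

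The crux is the identity
\[
\hat x^{T}\,\hat M_{i_N}\cdots\hat M_{i_1}\,\hat y \;=\; x^{T}\,M_{i_N}\cdots M_{i_1}\,y
\]
for every sequence $\bar i$, which makes yes-instances go to yes-instances and no-instances to no-instances. I would prove it via a structural invariant maintained by the bordered matrices: each $\hat M$ has an all-zero first column and an all-zero last row, and a short computation shows this property is closed under matrix multiplication. Given the invariant, in any product of the $\hat M$'s the only terms that could move mass into or out of the central $2\times2$ block involve the first column of a left factor or the last row of a right factor, both of which vanish; hence the central block of the product equals the product of the original $M$'s, and $\hat x,\hat y$ read off precisely that block. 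For membership in \NP, a certificate is the choice sequence $\bar i\in\prod_j\{1,\dots,|\mathcal{G}_j|\}$: the dimension is fixed, so after clearing denominators the product of the $N$ matrices has entries of bit-length polynomial in the input size, and the threshold test runs in polynomial time. Combining these, \textsc{ZeroAut} is \NP-complete for $k=4$.

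The main obstacle is getting the border to do double duty simultaneously: (i) force all row and column sums of each $\hat M$ to be zero — which is what the \textsc{ZeroAut} definition demands and, moreover, what Turakainen's subsequent positivity/stochasticization step will rely on — and (ii) not leak into the central block under arbitrarily long products. Part (ii) is what makes the reduction correct, and establishing it reduces to nailing down the zero-first-column/zero-last-row invariant and pushing it through an induction on the product length $N$. Everything else — verifying the row/column-sum conditions for $\hat A$, polynomial-time computability of the map, and \NP\ membership — is routine.
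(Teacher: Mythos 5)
Your proposal is correct and follows essentially the same route as the paper: a Turakainen-style bordering of each $2\times 2$ matrix into a $4\times 4$ matrix with all row and column sums zero, padding $x,y$ with zeros so that the central block (preserved under products by the zero-border invariant) carries the original value unchanged. The only difference is cosmetic — you place the zero border in the first column and last row with compensating entries in the first row and last column, whereas the paper uses the transposed convention — and you additionally spell out the border entries, the multiplicative invariant, and \NP{} membership, which the paper leaves implicit.
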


\begin{proof}
For each matrix $M$ above we create a new $4 \times 4$ matrix $N$ as follows:

\[ N = \left(\begin{array}{ccc} 
   0 & \bar 0  & 0 \\
   \alpha(x) & M & 0 \\
   \beta_0(x) & \beta(x)& 0 \\
\end{array}\right) \]
we choose $\alpha,\beta$ to be vectors such that the row,column sums
are zero. Then, we take the original $x$ and create $x_1 = (0,x,0)$
and $y_1 = (0,y^{T},0)^{T}$. It follows then that for any set of
matrices:
\[ x_1 \bar N y_1^{T} = x M y^{T} \]
Thus, we have shown the stronger claim that all products are equal and
so it directly follows that the corresponding decision problem is
\NP-complete.
\end{proof}

We define \textsc{StocAut} to be the restriction that all matrices are stochastic (as
above) and we check a slightly generalized condition:
\[ x^{T} M_{i_N} \cdots M_{i_1} y \geq k^{-1} \]
where $M_{i_j} \in \mathbb{R}^{k}$.\footnote{Although unnecessary for
  our purpose we observe that with $r$ repetitions of the problem, one
  can drive the constant down to $k^{-r}$. Thus, the constant here is
  arbitrary.}
\begin{lemma}
The problem \textsc{StocAut} is \NP-complete for matrices of $4 \times 4$.
\end{lemma}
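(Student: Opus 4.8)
The plan is to reduce from \textsc{ZeroAut}, which the previous lemma established is \NP-complete already in dimension $k=4$, to \textsc{StocAut} in the same dimension, via the classical rescaling trick (Turakainen's technique, already invoked above). The one algebraic fact doing all the work is that a matrix whose row sums and column sums are all zero is annihilated on both sides by the all-ones matrix $J$, and that this zero-sum property is preserved under matrix products. This will cause the ``noise'' added to stochasticize the matrices to collapse cleanly, leaving the original product visible up to a positive rescaling and an additive constant.

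Concretely, given a \textsc{ZeroAut} instance with vectors $x,y\in\mathbb{Q}^{k}$ ($k=4$) and sets $\mathcal G_1,\dots,\mathcal G_N$ of $k\times k$ zero-row/column-sum matrices, I would first pick a constant $c$ strictly larger than the maximum absolute value of any entry appearing in any $\mathcal G_i$ (this $c$ is polynomial in the input). Replace each $A\in\mathcal G_i$ by $B_A=\tfrac1k J+\tfrac1{ck}A$ and collect these into $\mathcal G_i'$. Every $B_A$ has strictly positive entries (by the choice of $c$) and row and column sums equal to $1$, hence is stochastic, and the dimension is unchanged. For the vectors I would take $v=(x+\mathbf 1)/\|x+\mathbf 1\|_1$ and $w=(y+\mathbf 1)/\|y+\mathbf 1\|_1$, which are strictly positive stochastic vectors.

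The heart of the argument is the identity $\prod_{j=N}^{1}B_{M_{i_j}}=(ck)^{-N}\prod_{j=N}^{1}M_{i_j}+\tfrac1k J$, valid for any choice $M_{i_j}\in\mathcal G_j$. Expanding the product of the $B$'s, every term that mixes at least one $J$-factor with at least one $A$-factor must contain an adjacent pair of the form $AJ$ or $JA$ (the factors sit in a line, and whenever both kinds occur there is a position where the kind switches), and such a pair vanishes because partial products of zero-sum matrices are again zero-sum; the all-$J$ term contributes $\tfrac1{k^N}J^{N}=\tfrac1{k^N}k^{N-1}J=\tfrac1kJ$, and the all-$A$ term contributes $(ck)^{-N}\prod_j M_{i_j}$. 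Since $\prod_j M_{i_j}$ is itself zero-sum, the $+\mathbf 1$ perturbations in $v$ and $w$ wash out when we form $v^{T}(\prod_j M_{i_j})w$, so that $v^{T}(\prod_j B_{M_{i_j}})w=\gamma\, x^{T}(\prod_j M_{i_j})y+\tfrac1k$ for a fixed positive rational $\gamma$ depending only on $\|x+\mathbf 1\|_1$ and $\|y+\mathbf 1\|_1$. Hence $v^{T}(\prod_j B_{M_{i_j}})w\ge k^{-1}$ holds exactly when $x^{T}(\prod_j M_{i_j})y\ge 0$, so the constructed instance is a yes-instance of \textsc{StocAut} iff the original is a yes-instance of \textsc{ZeroAut}. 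Membership in \NP\ is immediate: guess the sequence $\bar i$ and verify the product in polynomial time, observing that since the dimension is fixed the bit-lengths of the rational entries grow only linearly in $N$.

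I expect the main obstacle to be bookkeeping rather than conceptual: ensuring the rescaling constant $c$ stays polynomially bounded and that all matrix entries are genuinely \emph{positive} (not merely nonnegative), and checking that replacing the \textsc{ZeroAut} vectors (which may have zero coordinates after the dimension padding) by the perturbed strictly positive vectors $v,w$ does not disturb the equivalence — which succeeds precisely because the residual matrix $\prod_j M_{i_j}$ inherits the zero-sum property and so kills the cross terms produced by the perturbation. A secondary point to verify carefully is that the threshold $k^{-1}$ in the definition of \textsc{StocAut} is matched exactly by the additive $\tfrac1kJ$ term above, making the reduction threshold-faithful.
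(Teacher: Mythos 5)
Your proposal is correct and follows essentially the same route as the paper's proof: replace each zero-row/column-sum matrix $M$ by the doubly stochastic matrix $(ck)^{-1}(M + cJ)$ (the paper writes $(\lambda k)^{-1}(M+\lambda Q)$), use $MJ=JM=0$ to kill the cross terms so the product collapses to $(ck)^{-N}\prod_j M_{i_j} + \tfrac1k J$, and shift-and-normalize the vectors so the threshold becomes exactly $k^{-1}$. The only minor caveat is your fixed shift by $\mathbf 1$: to guarantee strictly positive stochastic vectors for an arbitrary \textsc{ZeroAut} instance you should shift by a sufficiently large multiple of $\mathbf 1$ (as the paper does with $\alpha = |\min_i\min_{z=x,y} z_i|$), which is harmless since the zero-sum property of $\prod_j M_{i_j}$ makes the shift invisible in the bilinear form.
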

\begin{proof}
We first show that we may assume that the vectors from the
\textsc{ZeroAut} problem are stochastic. Let $\mathbf{1}$ be the all
ones vector in $\mathbb{R}^k$. Observe that for any $Z$ such that the
row and column sums are $0$ (as is each matrix in the input of \textsc{ZeroAut})
then:
\[ (x + \alpha \mathbf{1})^T{Z}(y + \alpha \mathbf{1}) = x^T Z y \]
for any value of $\alpha$ since $\mathbf{1}^TZ = Z\mathbf{1} =
\mathbf{0}$ the zero vector. Take $\alpha = |\min_{i}
\min_{z=x,y}z_i|$. Thus, we can take $x + \alpha \mathbf{1}$ and
$y+\alpha \mathbf{1}$ and preserve the product (which are both
entry-wise positive). Then, we can scale both $x + \alpha \mathbf{1}$
and $y + \mathbf{1}$ by any positive constant:
\[ (\alpha x)^{T} Q (\beta y) = \alpha \beta x^{T} Q y \]
So the sign is preserved if $\alpha, \beta > 0$. And we can assume
without loss of generality that both $x,y$ are stochastic by scaling
by their respective $\ell_1$ norms. (If either $\norm{x}_1 = 0$ or
$\norm{y}_1 = 0$ the problem is trivially satisfied: the product is
$0$ and since they must be the zero vector.)

We now show how to achieve the condition of stochastic
matrices. First, we show how to achieve positive matrices. To do this,
let $Q$ be the all ones matrix of $4 \times 4$ ($Q_{ij} = 1$). Then
let $\lambda \geq 0$ be such that for all matrices $M + \lambda Q \geq 0$
entrywise. Then, for each matrix $M$ replace it with a new matrix $N$ defined as:
\[ N = (\lambda k)^{-1}\left( M + \lambda Q \right) \]
Since $M Q = 0$, the following holds:
\[ x^{T} \prod_{i=1}^{N} (M_i + \lambda Q) y = 
(\lambda k)^{-m} x^{T} \prod_{i=1}^{N} M_iy + x^{t} Q y \] Thus, the
original product is positive if and only if the modified product is
bigger than $k^{-1}$ proving the claim.
\end{proof}

\section{Algorithms}
\begin{figure}[hbtp]
\textbf{Notations:}\\ 
$G = (V, E, s, f, \delta)$, the \name data SFA\\ 
$Q = (V_Q, E_Q, s_Q, F_Q, \delta_Q)$, the dictionary DFA\\ 
$D = \{(f_Q \in F_Q,term)\}$, the dictionary terms (hash table)\\
$AugSts = \{(v_Q \in V_Q, PostingSet)\}$, augmented states (hash table)\\
$I = \{(term, PostingSet)\}$, the index (hash table)\\
\caption{Notations for Algorithms \ref{alg:indexing-staccato-dyn}
and \ref{alg:indexing-staccato}}
\label{alg:indexing-notations}
\end{figure}

Here we present the 
algorithm for constructing the inverted index for a given SFA,
as referred to in Section \ref{sec:index}.
The notations used are listed in Figure \ref{alg:indexing-notations}.

\begin{algorithm}[hbtp]
\label{alg:indexing-staccato-dyn}
$\forall e \in E~ with~ parent ~edges ~e' \in E$\\
\hspace{0.6cm} $\forall f_Q \in F_Q, ~AugSts_{par}(f) = \displaystyle\cup_{e'}^{}AugSts_{e'}(f_Q)$\\
\hspace{0.6cm} $~AugSts_e = RunDFA(AugSts_{par},e)$\\
$otherwise,$\\
\hspace{0.6cm} $\forall f_Q \in F_Q, ~AugSts_e(f_Q) = \phi$\\
\caption{The dynamic program for \name index construction}
\end{algorithm}

\begin{algorithm}
\For {$each ~string ~p_i ~on ~e, ~i=0 ~to~ k-1$} {
	$SO \leftarrow \{(0,0)\}$ //\{(State, Offset)\}\\
	\For {$each ~character ~c_j ~in ~p, ~j=0 ~to~ |p|-1$} {
		$NSO \leftarrow \phi$\\
		\For {$each ~ t \in SO$} {
			$Nxt \leftarrow \delta_Q(t.State,c_j)$\\
			\If {$Nxt \neq 0$} {
				$NSO \leftarrow NSO~\cup~{(Nxt, t.Offset)}$\\
				\If {$Nxt \in F_Q$} {
					$I(D(Nxt)) \leftarrow I(D(Nxt)) ~ \cup ~ (e,i,t.Offset)$\\
				}
			}
		}
		$SO \leftarrow NSO ~ \cup ~ \{(0,j+1)\}$\\
	}
	\For {$each ~ r ~ \in SO$} {
		\If {$r.State \neq 0$} {
			$NAugSts(r.State) \leftarrow NAugSts(r.State) \cup \{(e,i,r.Offset)\}$\\
		}
	}
	\For {$each ~ d ~ \in AugSts$} {
		$Cur \leftarrow d.State$\\
		\For {$each ~character ~c_j ~in ~p, ~j=0 ~to~ |p|-1$} {
			$Nxt \leftarrow \delta(t.State,c_j)$\\
			\uIf {$Nxt \neq 0$} {
				$Cur = Nxt$\\
				\If {$Nxt \in F_Q$} {
					\For {$each ~ l ~ \in d.PostingSet$} {
						$I(D(Nxt)) \leftarrow I(D(Nxt)) ~ \cup ~ \{l\}$\\
					}
				}
			}
			\Else {
				$break$ //DFA `dies' reading string\\
			}
			\If {$j = |p|-1$} {
				\For {$each ~ l ~ \in d.PostingSet$} {
					$NAugSts(Cur) \leftarrow NAugSts(Cur) \cup \{l\}$\\
				}
			}
		}
	}
}
\caption{\textit{RunDFA(AugSts,e)}}
\label{alg:indexing-staccato}
\end{algorithm}

The construction, presented in Algorithms \ref{alg:indexing-staccato-dyn}
and \ref{alg:indexing-staccato}, is similar to automata composition.
The dictionary of terms is first compressed into a trie-automaton \cite{hopcroft}
with multiple final states, each corresponding to a term. Then, we walk through the 
data SFA (using a dynamic program on the SFA's graph) and obtain the locations (postings) where any
dictionary term starts. A key thing to note here is that terms can straddle
mutiple SFA edges, which needs to be tracked. We pass information 
about such multi-edge terms through sets of `augmented states', which store pairs of the query 
DFA's state and possible postings. When the DFA reaches a final state (i.e., a term has 
been seen), the corresponding postings are added to the index.

\section{Implementation Details}
Each line of a document corresponds to one transducer, which is stored
as such in the FullSFA approach. 
$k$-MAP stores a ranked list of strings for each line
after inference on the transducer. In \name, each line corresponds to a
graph of chunks, where each chunk is a ranked list of strings. 
These data are stored inside the RDBMS with a relational
schema, shown in Table 5. There is one master table per
dataset, which contains the auxiliary information like document name,
line number, etc., and there are separate data tables for each
approach.

\begin{table}[hbtp]
\centering
\begin{tabular}{|c|c|p{2.5cm}|p{2.8cm}|l|}
\hline
\multirow{2}{*}{Approach} & \multirow{2}{*}{Table Name} & \multicolumn{2}{c|}{Attributes} & \multirow{2}{*}{Primary Key}\\
\cline{3-4}
& & Name & Type & \\
\hline
\hline
\multirow{3}{*}{-} & \multirow{3}{*}{MasterData} & DataKey & INTEGER & \multirow{3}{*}{\parbox{4cm}{DataKey}}\\
\cline{3-4}
 & & DocName & VARCHAR(50) & \\
\cline{3-4}
 & & SFANum & INTEGER & \\
\hline
\hline
\multirow{3}{*}{$k$-MAP} & \multirow{3}{*}{kMAPData} &  DataKey & INTEGER & \multirow{3}{*}{\parbox{4cm}{DataKey, LineNum}}\\
\cline{3-4}
 & & LineNum & INTEGER & \\
\cline{3-4}
& & Data & TEXT & \\
\cline{3-4}
 & & LogProb & FLOAT8 & \\
\hline
\hline
\multirow{2}{*}{FullSFA} & \multirow{2}{*}{FullSFAData} & DataKey & INTEGER & \multirow{2}{*}{\parbox{4cm}{DataKey}}\\
\cline{3-4}
& & SFABlob & OID &\\
\hline
\hline
\multirow{7}{*}{\name} & \multirow{5}{*}{StaccatoData} & DataKey & INTEGER & \multirow{5}{*}{\parbox{4cm}{DataKey, ChunkNum, \\LineNum}}\\
\cline{3-4}
 & & ChunkNum & INTEGER & \\
\cline{3-4}
 & & LineNum & INTEGER & \\
\cline{3-4}
 & & Data & TEXT & \\
\cline{3-4}
 & & LogProb & FLOAT8 & \\
\cline{2-5}
 & \multirow{2}{*}{StaccatoGraph} & DataKey & INTEGER & \multirow{2}{*}{\parbox{3cm}{DataKey}}\\ 
\cline{3-4}
& & GraphBlob & OID &\\
\hline
\hline
\multirow{2}{*}{-} & \multirow{2}{*}{GroundTruth} & DataKey & INTEGER & \multirow{2}{*}{\parbox{4cm}{DataKey}}\\
\cline{3-4}
 & & Data & TEXT & \\
\hline
\end{tabular}
\label{tab:schema}
\caption{Relational schema for storing SFA data}
\end{table}

\pagebreak
\section{Extended Experiments}

We now present more experimental results relating to runtimes and
answer quality for the filescans, as well as some aspects of the inverted indexing.
The queries we use are listed in Table \ref{exp:queries}, along 
with the number of ground truth answers for each on their respective datasets.

\begin{table}[hbtp]
\centering
\begin{tabular}{|c|c|c|c|}
\hline
Dataset & S.No. & Query & \# in Truth\\
\hline
\hline
\multirow{7}{*}{CA} & 1 & Attorney & 28	 \\
\cline{2-4}
 & 2 & Commission & 128	 \\
\cline{2-4}
 & 3 & employment &  73	 \\
\cline{2-4}
 & 4 & President  & 14	 \\
\cline{2-4}
 & 5 & United States  & 52	 \\
\cline{2-4}
 & 6 & Public Law $(8|9)\backslash d$ & 55	 \\
\cline{2-4}
 & 7 & U.S.C. $2\backslash d\backslash d\backslash d$  & 25	 \\
\hline
\hline
\multirow{7}{*}{DB} & 1 & accuracy  & 65	 \\
\cline{2-4}
 & 2 & confidence & 36	 \\
\cline{2-4}
& 3 & database & 43	 \\
\cline{2-4}
& 4 & lineage  & 83	 \\
\cline{2-4}
& 5 & Trio & 68	 \\
\cline{2-4}
& 6 & Sec$(\backslash x)*d$  & 33	 \\
\cline{2-4}
& 7 & $\backslash x\backslash x\backslash x\backslash d\backslash d$  & 47 \\
\hline
\hline
\multirow{7}{*}{LT} & 1 & Brinkmann  & 	92 \\
\cline{2-4}
& 2 & Hitler  & 12 \\
\cline{2-4}
& 3 & Jonathan & 18 \\
\cline{2-4}
& 4 & Kerouac  & 21	 \\
\cline{2-4}
& 5 & Third Reich  & 7	 \\
\cline{2-4}
& 6 & $19\backslash d\backslash d,~\backslash d\backslash d$ & 	32 \\
\cline{2-4}
& 7 & $spontan(\backslash x)*$ & 99 \\
\hline
\end{tabular}
\caption{Queries and ground truth numbers
}
\label{exp:queries}
\end{table}

\subsection{Recall and Runtime}
Table \ref{exp:precision-recall} presents the precision and recall results of the queries, 
while Table \ref{exp:runtimes} presents the respective runtime results.
The values of the parameters are $m=40,~k=50~ and ~NumAns=100$.
As in Section \ref{sec:experiments}, here too we see that \name lies between $k$-MAP and 
FullSFA on both recall and runtime. The precision too exhibits a similar trend. Again, the FullSFA 
approach is upto three orders of magnitude slower than MAP but achieves perfect recall on most queries, 
though precision is lower. Interestingly, on some queries in the DB dataset (e.g., DB3 and DB6), \name recall 
is close to 1.0 while the runtime is about two orders of magnitude lower than FullSFA.
Another thing to note is that the recall increase for $k$-MAP and \name (over MAP) is more pronounced in 
DB and LT than CA. We can also see that keyword queries can have lower recall
than some regex queries (e.g., LT3 and DB2).

\begin{table}[hbtp]
\centering
\begin{tabular}{|c||c|c|c||c|}
\hline
\multirow{2}{*}{Query} & \multicolumn{4}{c|}{Approach}\\
\cline{2-5}
& MAP & $k$-MAP & FullSFA & \name\\
\hline
\hline
CA1 & 1.00/0.93	& 1.00/0.93	& 0.28/1.00	& 0.87/0.96	\\
\hline
CA2 &  1.00/0.78	&  1.00/0.78	& 1.00/0.78	& 1.00/0.78	\\
\hline
CA3& 1.00/0.90	& 1.00/0.90	& 0.73/1.00	& 0.97/0.93 	\\
\hline
CA4&  1.00/0.79	& 1.00/0.79	& 0.14/1.00	& 0.85/0.79	 \\
\hline
CA5& 1.00/0.77	& 1.00/0.79	& 0.52/1.00 	& 1.00/0.88	\\
\hline
CA6&  1.00/0.87	&  1.00/0.96 & 0.55/1.00 & 1.00/0.98  \\
\hline
CA7& 1.00/0.28	&  1.00/0.52  & 0.25/1.00	& 0.50/0.80	\\
\hline
\hline
DB1 & 1.00/0.58	&  0.98/0.93  & 0.65/1.00	& 0.95/0.97	\\
\hline
DB2 & 0.00/0.00	& 0.87/0.19 & 0.36/1.00		& 0.90/0.53	\\
\hline
DB3 & 0.85/0.67	&  0.87/0.79  & 0.43/1.00	& 0.90/1.00	\\
\hline
DB4 & 0.97/0.91	& 0.96/0.93  & 0.82/0.99	& 0.85/0.95	\\
\hline
DB5 &   0.93/0.75	& 0.90/0.95 	& 0.67/0.99	& 0.79/0.96	\\
\hline
DB6 & 0.96/0.76	& 0.96/0.81  & 0.33/1.00	& 0.40/0.96	\\
\hline
DB7 &    0.91/0.85	& 0.73/0.89  & 0.44/0.94	& 0.42/0.89	\\
\hline
\hline
LT1 & 0.96/0.87 	& 0.96/0.90	& 0.92/1.00	& 0.94/0.91	\\
\hline
LT2 &    1.00/0.92 	& 1.00/1.00	& 0.12/1.00	& 0.12/1.00	\\
\hline
LT3 & 1.00/0.11	& 1.00/0.17 	& 0.18/1.00	& 0.94/0.83	\\
\hline
LT4 &   0.81/0.62 	& 0.86/0.90	& 0.21/1.00	& 0.74/0.95	\\
\hline
LT5 &  1.00/0.29 	& 1.00/1.00	& 0.07/1.00	& 1.00/1.00	\\
\hline
LT6 & 0.77/0.65	& 0.76/0.67	& 0.31/0.97	& 0.26/0.81	\\ 
\hline
LT7 &    0.84/0.88	& 0.83/0.88  & 0.83/0.88	& 0.83/0.88	\\
\hline
\end{tabular}
\caption{Precision and recall results
}
\label{exp:precision-recall}
\end{table}

\begin{table}[hbtp]
\centering
\begin{tabular}{|c||c|c|c||c|}
\hline
\multirow{2}{*}{Query} & \multicolumn{4}{c|}{Approach}\\
\cline{2-5}
& MAP & $k$-MAP & FullSFA & \name\\
\hline
\hline
CA1 & 0.17	& 0.82	& 81.54	& 4.38	\\
\hline
CA2 &  0.17	& 0.96	& 91.84	& 4.93	\\ 
\hline
CA3 & 0.17	& 0.96	& 91.85	& 4.94	\\
\hline
CA4 & 0.17	& 0.89	& 86.72	& 4.63	\\
\hline
CA5 & 0.18	& 1.16	& 106.17	& 5.97	\\
\hline
CA6 & 0.18	& 1.17	& 125.63 & 5.98	\\ 
\hline
CA7 & 0.18	& 1.05	& 150.35	& 5.40	\\
\hline
\hline
DB1 & 0.07	& 0.44	& 56.42	& 1.61	\\
\hline
DB2 & 0.07	& 0.51	& 62.89	& 1.81	\\
\hline
DB3 & 0.07	& 0.43	& 54.92	& 1.59	\\
\hline
DB4 & 0.07	& 0.40	& 51.45	& 1.48	\\
\hline
DB5 & 0.07	& 0.42	& 40.72	& 1.21	\\
\hline
DB6 & 0.07	& 0.35	& 619.31	& 1.39	\\
\hline
DB7 & 0.07	& 0.31	& 1738.78	& 1.37	\\
\hline
\hline
LT1 & 0.14	& 0.73	& 83.78	& 3.27	\\
\hline
LT2 & 0.13	& 0.59	& 69.68	& 2.72	\\
\hline
LT3 & 0.14	& 0.71	& 79.76	& 3.10	\\
\hline
LT4 & 0.14	& 0.65	& 74.58	& 2.90	\\
\hline
LT5 & 0.14	& 0.85	& 93.35	& 3.72	\\
\hline
LT6 & 0.14	& 1.02	& 155.45	& 4.52	\\ 
\hline
LT7 & 0.15	& 1.00	& 887.19	& 4.23	\\
\hline
\end{tabular}
\caption{Runtime results. Runtimes are in seconds.
}
\label{exp:runtimes}
\end{table}

\pagebreak

\subsection{Precision and F-1 Score}
Though our focus is on recall-sensitive applications, we also study how the 
precision is affected when we vary $m$ and $k$. For the same queries 
and parameter setting as in Figure \ref{recall-runtime-vs-k}, we plot the precision and 
F-1 score of the answers obtained. Figure \ref{exp:precision-f1} shows the results.

\begin{figure}[hbtp]
\centering
\includegraphics[width=4in]{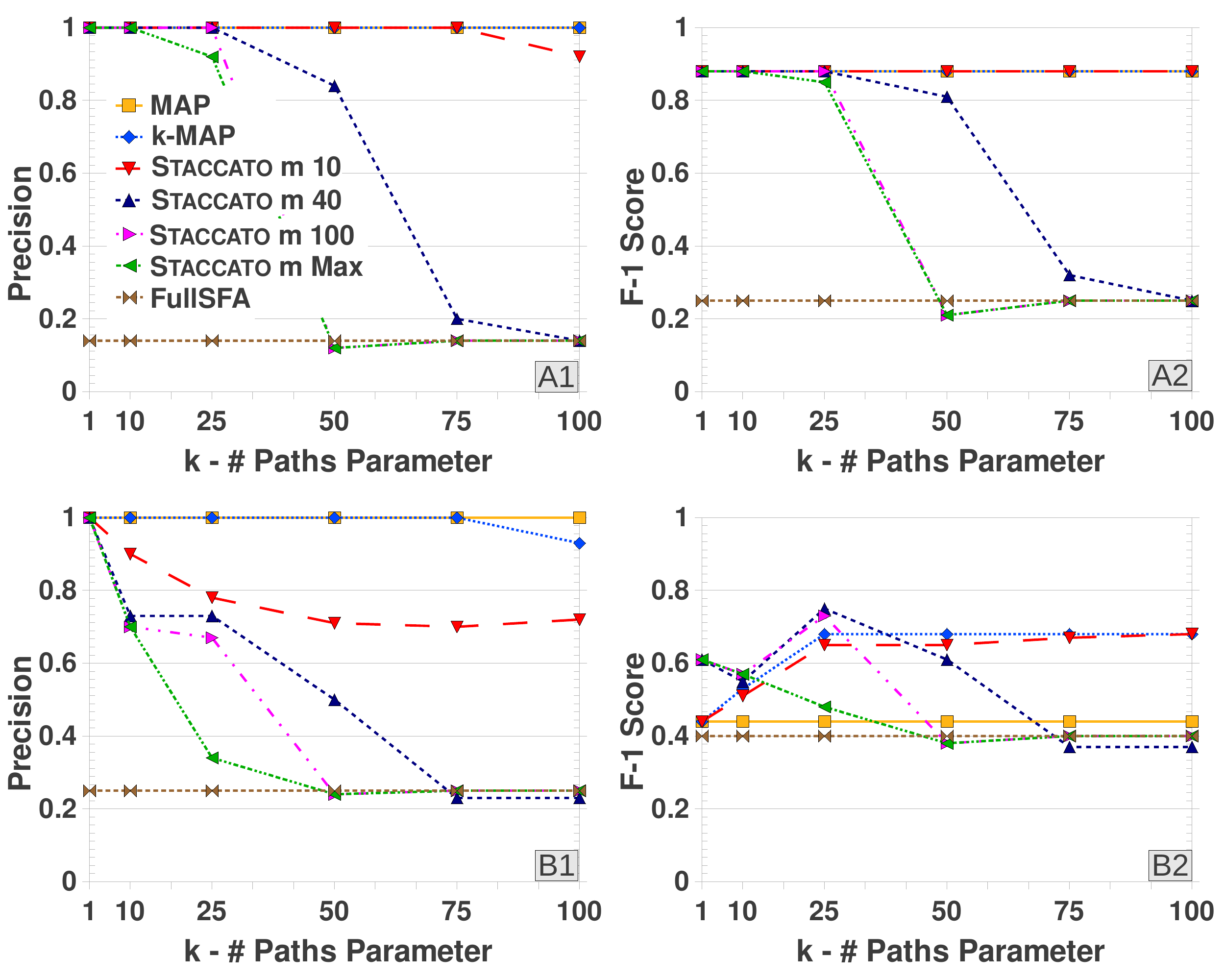}\\
\caption{Precision and F-1 Score variations with $k$ on two queries: (A) `$President$', and 
(B) `$U.S.C.~2 \backslash d \backslash d \backslash d$'. $NumAns$ is set to 100.}
\label{exp:precision-f1}
\end{figure}

As mentioned before, $k$-MAP precision is high, since it returns only a few answers, almost all
of which are correct. On the other hand, FullSFA precision is lowest, since it returns many
incorrect answers (along with most of the correct ones). Again, \name falls in between, with 
the precision being high (close to $k$-MAP) for lower values of $m$ and $k$, and gradually 
drops as we increase $m$ and $k$. Also, the precision drops faster for higher values of $m$. 
It should be noted that the precision needn't drop monotonically, since additional correct
answers might be obtained at higher $m$ and $k$, boosting both the recall and precision. 
For completeness sake, the F-1 score variation is also presented in Figure \ref{exp:precision-f1}.
Interestingly, for the regex query, the F-1 score of both $k$-MAP and FullSFA are lower than 
that of \name, the former due to its lower recall, and the latter due to its lower precision. 

\subsection{Sensitivity to NumAns}

As we mentioned in Section \ref{sec:experiments}, the quality of the answers obtained is sensitive
to the $NumAns$ parameter. If it is set too low (lower than the number of ground truth answers),
the recall is likely to be low. On the other hand, if it is set too high, the recall will 
increase, but the precision might suffer. Thus, we perform a sensitivity analysis on $NumAns$
for the recall, precision and F-1 score obtained. The same queries as in 
Figure \ref{recall-runtime-vs-k} are used and the parameter setting is $m=40,~k=75$. Figure
\ref{exp:numans} shows the results.

\begin{figure}[hbtp]
\centering
\includegraphics[width=4.5in]{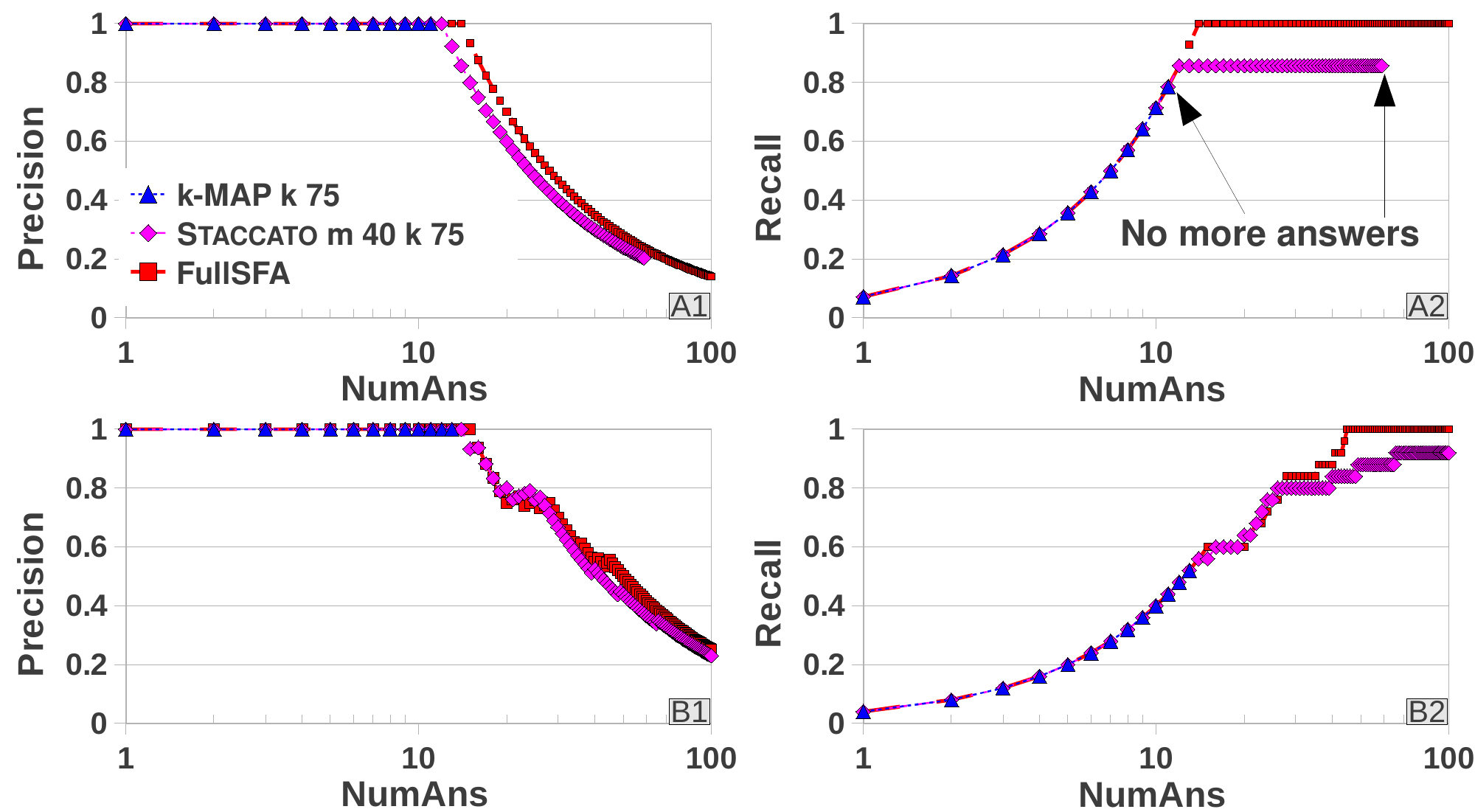}\\
\caption{Sensitivity of Precision and Recall to $NumAns$ on two queries: (A) $President$, and 
(B) $U.S.C.~2 \backslash d \backslash d \backslash d$. The x-axes are in logscale.}
\label{exp:numans}
\end{figure}

\begin{figure}[hbtp]
\centering
\includegraphics[width=5.5in]{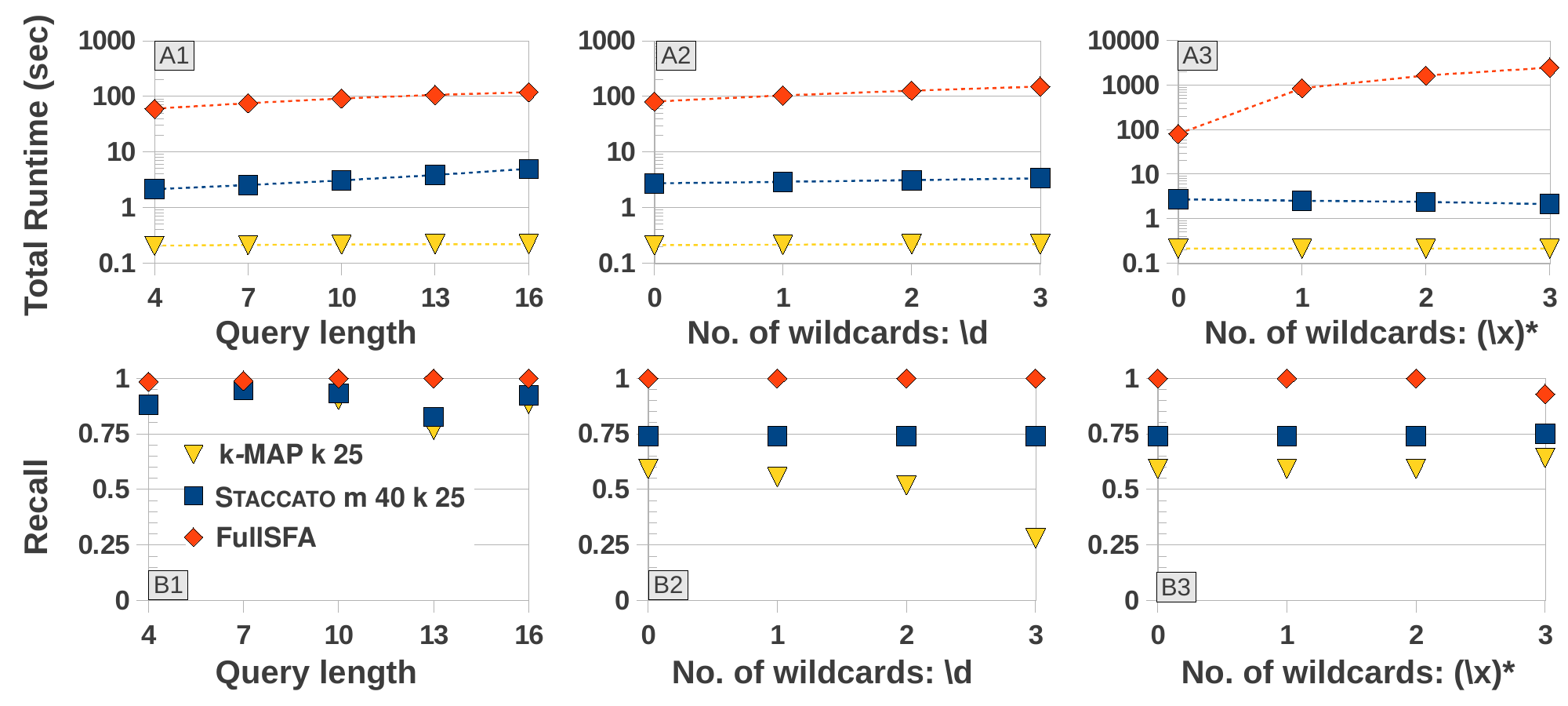}\\
\caption{Impact of Query Length and Complexity: (A1, B1) correspond to the 
keyword queries, (A2, B2) correspond to the simple wildcard queries, while 
the last pair correspond to the complex wildcard queries. $NumAns$ is set to 100. 
Runtimes are in logscale.}
\label{exp:qlenexpts}
\end{figure}

\begin{figure}[hbtp]
\centering
\includegraphics[width=3in]{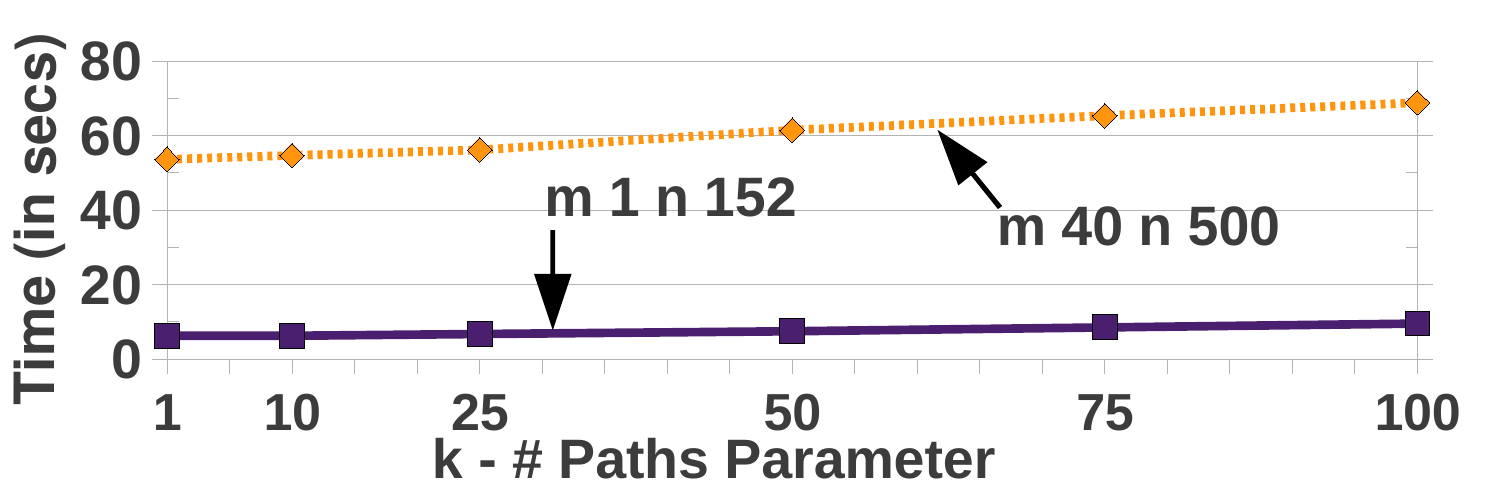}
\caption{Sensitivity of \name construction time to the parameter $k$
} 
\label{exp:stacconstr-k}
\end{figure}

As Figures \ref{exp:numans} (A1) and (A2) show, the precision initially remains high (here, at 1) when 
$NumAns$ is low. This is because the highest probability answers (that appear on top) are likely to be correct.
As we increase $NumAns$, we get more correct answers and thus recall increases constantly. Once we near
the number of ground truth answers, the recall starts to flatten, while the precision starts to drop. For $k$-MAP, beyond
a value of $NumAns$, no more answers are returned since no matches exist. On the other hand, in FullSFA, almost
all SFAs match almost all queries, and so we keep getting answers. Moreover, since FullSFA uses the full probabilities,
the increase is relatively smooth. \name for the given parameters gives more answers
than $k$-MAP, and achieves higher recall but falls short of FullSFA. The recall for the regex query in \name recall 
keeps increasing after being relatively flat many times. Thus, \name can still achieve high recall, though
at a lower precision than FullSFA.

\subsection{Effect of Query Length and Complexity}

We now study the impact of they query length and complexity on the runtime and recall obtained.
For this study, we use three sets of queries. The first set consists of keyword queries of 
increasing length.
The second set consists of regular expression queries with an increasing number of simple wildcards, e.g., 
`$U.S.C.~2\backslash d\backslash d$' (where $\backslash d$ is any digit). 
The third set too consists of regular expression queries, but with the more complex Kleene star as wildcards, e.g.,
`$U(\backslash x)*S(\backslash x)*C.~2$' (where $\backslash x$ is any character). 
Figure \ref{exp:qlenexpts} shows the runtime and recall results for these queries.
\eat{(`$case$', `$section$', `$employment$', `$United~States$' and `$Attorney~General$')
(`$U.S.C. 2$', `$U.S.C.~2\backslash d$', `$U.S.C.~2\backslash d\backslash d$' and `$U.S.C.~2\backslash d\backslash d\backslash d$)
(`$U.S.C.~2$', `$U(\backslash x)*S.C.~2$', `$U(\backslash x)*S(\backslash x)*C.~2$' and `$U(\backslash x)*S(\backslash x)*C(\backslash x)*2$')}

As expected, the plots show that the runtime increases polynomially, but slowly with query length in all the approaches. 
However, the increase is more pronounced for FullSFA with complex wildcards (Figure \ref{exp:qlenexpts}:A3)
since the composition based query processing produces much larger intermediate results. It can also be seen that there 
is no definite trend in the obtained recall for (Figure \ref{exp:qlenexpts}:A1), since a longer query can have better recall
than a shorter one.

\subsection{Staccato Construction}
We now present the sensitivity of the \name construction times to the 
parameter $k$. Figure \ref{exp:stacconstr-k} shows the results.

The plot shows that the runtimes increase linearly with $k$.
However, as mentioned in Section \ref{sec:experiments}, this linearity is
not guaranteed since the chunk structure obtained may not be the same across different 
values of $k$, for a fixed SFA and $m$.

\subsection{Index Construction Time}
Here, we discuss the runtimes for the \name index construction, 
which is a two-phase process. First, we obtain the postings independently 
for each SFA, and then unify all postings into the index. 
We pick a few SFAs and run the indexing in a controlled setting. 

\begin{figure}[hbtp]
\centering
\includegraphics[width=4in]{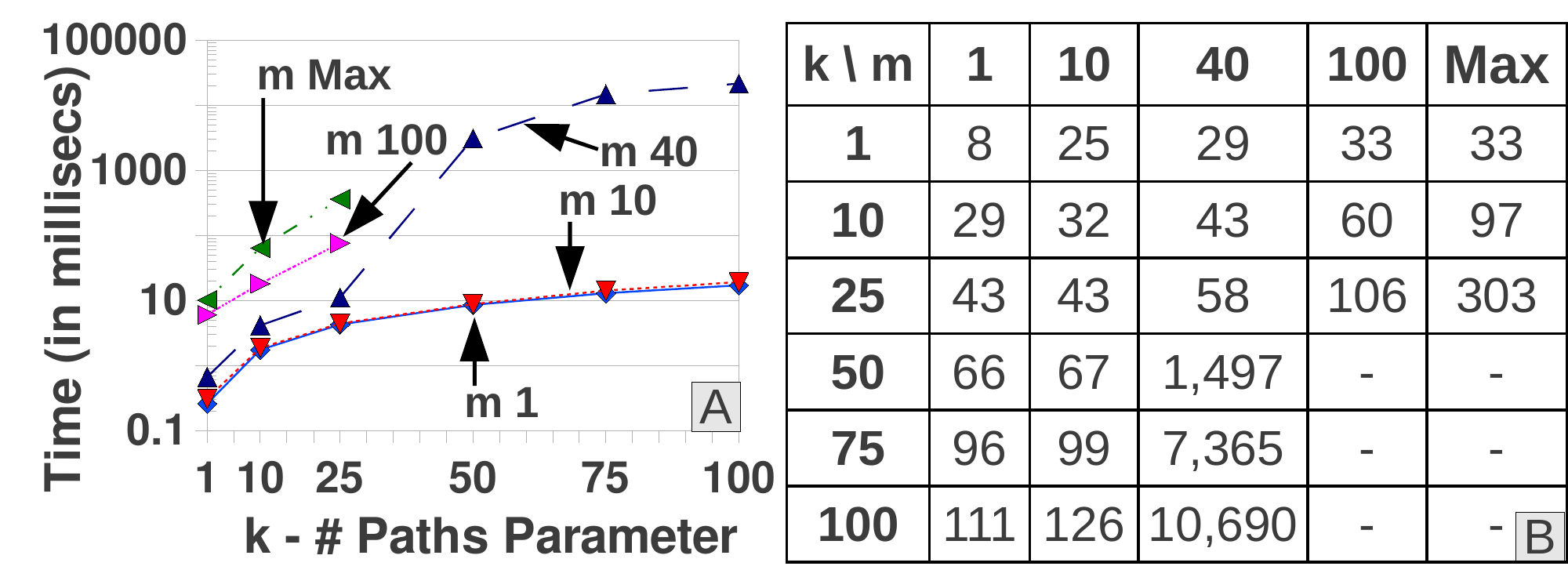}
\caption{(A) \name index construction times. Note the logscale on the y-axis. (B)  
Bulk index load times (in seconds) for the index tables of the LT dataset.
} 
\label{ftip-constrtimes}
\end{figure}

Figure \ref{ftip-constrtimes} shows the sensitivity of the construction times for $m$ and $k$ for a single SFA, and also tabulates
the bulk index load times for an entire dataset (LT). Firstly, we 
can see these runtimes are mostly practical. Also, we can see a linear trend in $k$, with a non-linearity showing up at $m =40, k=50$.
we found that this was because the data in this parameter space had  
many single-character wide blocks, leading to the presence of more terms, and blowing up the size of the index. This causes two 
effects - the number of postings per SFA goes up by upto three orders of magnitude, and the selectivity of most terms across the 
dataset shoots up too, as was seen in Figure \ref{indexed-runtimes} (A).

Since running the indexing on all SFAs is also easily parallelizable,
we again used Condor \cite{condor}. Overall, the index construction on
all the data, for the above parameters took about 3 hours. After
obtaining the postings lists for all SFAs, we loaded them all into the
index table. These bulk load times are tabulated in Figure
\ref{ftip-constrtimes} (B). We can see that the load times are concomitant 
with the construction times due to the size of the index obtained.

\subsection{Index Utility and Size}

\begin{figure}[hbtp]
\centering
\includegraphics[width=4in]{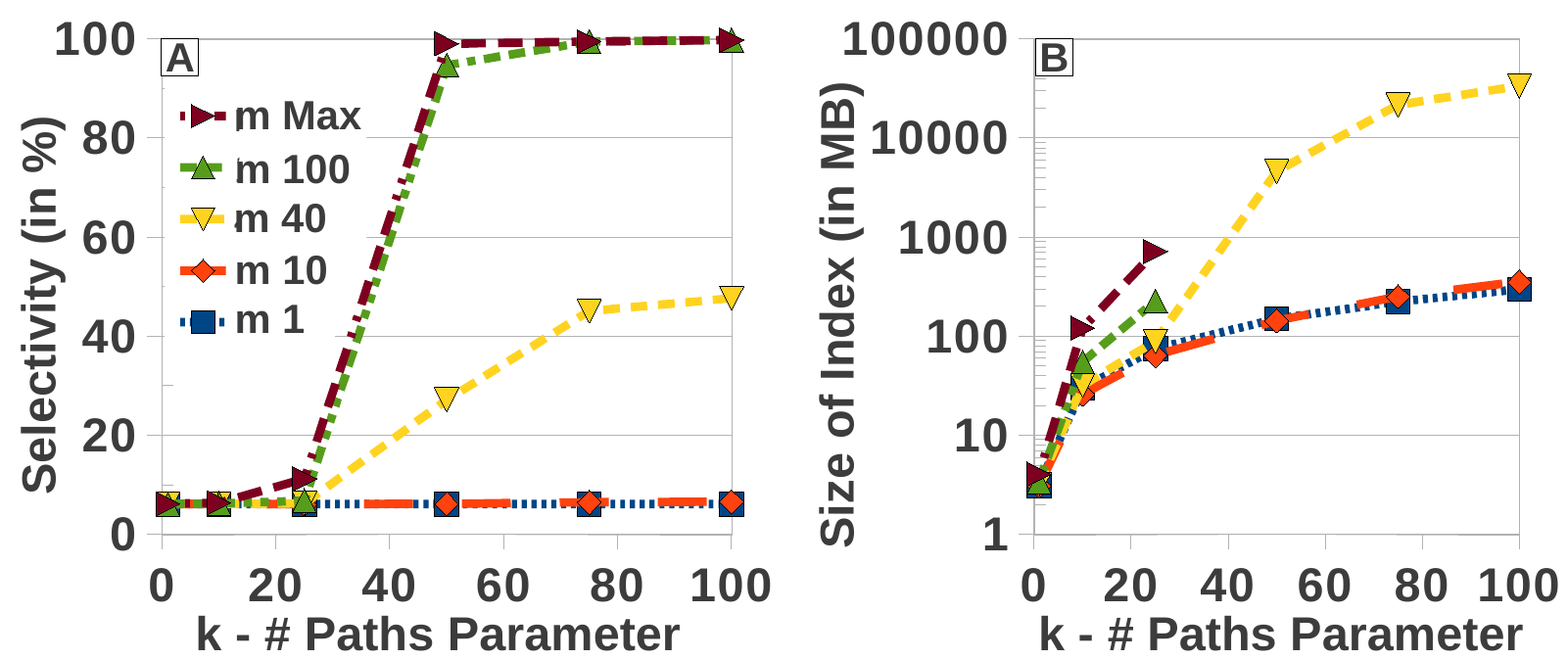}
\caption{(A) Selectivity (\%ge of SFAs) of the term `$public$' using the \name index, for various
values of $m$ and $k$.
(B) Size of the \name index. Note the logscale on the y-axis.
} 
\label{ftip-sizeselec}
\end{figure}

It was discussed in Section \ref{sec:experiments} that the inverted index becomes less `useful'
as $m$ and $k$ become higher. To justify that, we study the selectivity of a query that uses the 
index. Here, we define selectivity as the percentage of the SFAs in the dataset that match the 
query when using the index. Figure \ref{ftip-sizeselec} (A) shows the results for a query on the 
CA dataset. A complementary aspect of the utility of the index is its size. 
Figure \ref{ftip-sizeselec} (B) shows the size of the index over the data in \name.

Two interesting things can be seen from these plots. The query selectivity for lower values of 
$m$ and $k$ is relatively low, but for middle values of $m$ ($m = 40$), the term
starts to appear in many more SFAs as $k$ increases. For higher values of $m$ ($m=100,Max$), 
as $k$ increases, the the selectivity shoots up to nearly 100\%, which means that almost 
all SFAs in the dataset contain the term.
This implies that the index is no longer useful in the sense that almost the entire
dataset is returned as answers. We observed the same behavior across all queries and datasets.
The plot of the index sizes reflects this phenomenon. The size varies largely linearly as expected, but 
at $m=40,k=50$, it shoots up two orders of magnitude, similar to Figure \ref{ftip-constrtimes}. 
This size increase is largely because of the 
selectivity increase, i.e., many more entries appearing in the index.
The index construction for $m=100,Max$ for $k=50$, and above was skipped since the index sizes exceeded 
the available disk space (over 200 GB).
However, the selectivity can be easily computed after obtaining just the first posting, with no need to 
compute all the postings. The selectivity confirms that these parameter settings do not give useful indexes anyway.

\end{document}